\documentclass[11pt]{article}
\usepackage[margin=1in]{geometry}
\usepackage{appendix}
\usepackage{babel}
\usepackage{amsthm}
\usepackage{graphicx}
\usepackage{amsmath}
\usepackage{amssymb}
\usepackage{array}
\usepackage[colorlinks=true, citecolor=black, linkcolor=black]{hyperref}
\usepackage[linesnumbered,ruled,vlined]{algorithm2e}
\usepackage{algorithmicx}
\usepackage{nicefrac}
\usepackage{xspace}
\usepackage{thm-restate}
\usepackage{paralist}
\usepackage{microtype}
\usepackage{subcaption}
\usepackage[justification=centering]{caption}
\usepackage{bm}
\usepackage{bbm}
\usepackage{graphicx}
\usepackage{multirow}
\usepackage{tabularx}
\usepackage{tikz}
\usepackage{balance}
\usepackage{cleveref}
\usepackage{siunitx}
\usepackage{float}
\usepackage{framed}
\usepackage{alphalph}

\usepackage[export]{adjustbox}

\makeatletter
\newcommand{\multiline}[1]{%
  \begin{tabularx}{\dimexpr\linewidth-\ALG@thistlm}[t]{@{}X@{}}
    #1
  \end{tabularx}
}
\makeatother

\newtheorem{theorem}{Theorem}[section]

\newtheorem{lemma}[theorem]{Lemma}

\newtheorem{fact}{Fact}[section]

\theoremstyle{definition}
\newtheorem{definition}[theorem]{Definition}

\newcommand{\algoname}[1]{\ensuremath{\normalfont\textsc{#1}}\xspace}
\newcommand{\ApproxDD}{\algoname{ApproxDD}}
\newcommand{\ApproxDDb}{\algoname{ApproxDD-ES}}

\newcommand{\PeelN}{\algoname{Select-Large}}
\newcommand{\UniformAlgo}{\algoname{Ugs}}
\newcommand{\StreamAlgo}{\algoname{stream-Ugs}}

\newcommand{\InitDistrib}{\algoname{InitDistrib}}
\newcommand{\SampleGraphlet}{\algoname{SampleGraphlet}}
\newcommand{\gd}{\algoname{Counter}}
\newcommand{\gdr}{\algoname{Rejection}}

\newcommand{\poly}{\operatorname{poly}}
\newcommand{\E}{\mathbb{E}}

\newcommand{\eps}{\epsilon}
\newcommand{\et}{K}
\newcommand{\graphlet}{g}
\newcommand{\ddorder}{\prec}

\newcommand{\bp}{\boldsymbol{p}}
\newcommand{\bmu}{\boldsymbol{\mu}}

\newcommand{\Ind}[1]{\mathbbm{1}_{#1}}

\renewcommand{\Pr}[1]{\mathbf{Pr}\left[{#1}\right]}

\makeatletter
\newcommand{\Rmnum}[1]{\expandafter\@slowromancap\romannumeral #1@}
\makeatother

\newcommand{\ignore}[1]{}
\newcommand{\noindentparagraph}[1]{\noindent\textbf{{#1}.}} 
\newcommand*\samethanks[1][\value{footnote}]{\footnotemark[#1]}

\title{An Efficient Streaming Algorithm \\ for Approximating Graphlet Distributions\thanks{Authors appear in alphabetical order.}}
\author{Marco Bressan\thanks{University of Milan.} \and T-H. Hubert Chan\thanks{The University of Hong Kong.} \and Qipeng Kuang\samethanks \and Mauro Sozio\thanks{LUISS University.}}
\date{}

\begin{document}

\maketitle

\begin{abstract}
In recent years, the problem of computing the frequencies of the induced $k$-vertex subgraphs of a graph, or \emph{$k$-graphlets}, has become central.
One approach for this problem is to sample $k$-graphlets randomly. 
Classic algorithms for $k$-graphlet sampling require loading the entire graph into main memory, making them impractical for massive graphs.
To bypass this limitation, Bourreau et al.\ (NeurIPS 2024) introduced a \emph{streaming} algorithm that through nontrivial techniques makes only $O(\log n)$ passes using $O(n \log n)$ memory.
In this work we break their $O(\log n)$-pass bound by giving an algorithm that, for any fixed $c>0$, makes $O(1/c)$ passes using $\tilde O(n^{1+c})$ memory.
As a consequence of their lower bound, our algorithm is optimal up to a factor of $\tilde{O}(n^c)$ in the memory usage.
We use this sampling algorithm to obtain an efficient method of approximating $k$-graphlet distributions. 
Experiments on real-world and synthetic graphs show that our algorithm is always at least as good as the one of Bourreau et al., and outperforms it by orders of magnitude on mildly dense graphs.
\end{abstract}


\section{Introduction}
A \emph{$k$-graphlet} of a simple graph $G$ is a connected, induced subgraph formed by exactly $k$ vertices. Two $k$-vertex subsets are \emph{equivalent} if their induced subgraphs are isomorphic. The so-called \emph{$k$-graphlet distribution} of $G$
is the frequency vector that provides the relative size of each such equivalence class.
It is known that the $k$-graphlet distribution captures important structural information~\cite{Milo824} that is relevant in many real-world applications, such as graph classification~\cite{Tu2019-gl2vec}, graph kernels~\cite{Shervashidze2009}, graph neural networks~\cite{Peng2020-GNN}, and federated learning \cite{DBLP:journals/tit/GhoshCYR22}.
In this work, we study the problem of approximating the $k$-graphlet distribution when the input graph cannot be stored in memory.


A first major distinction must be made between \emph{exact} and \emph{approximate} algorithms.
A trivial exact algorithm is the one that enumerates and checks all $k$-vertex subsets of the input graph.
This algorithm is obviously inefficient, and there is strong evidence that one cannot perform substantially better (see classic lower bounds from parameterized complexity, such as~\cite{Chen&2006} or~\cite{Jerrum&2015}).
Our focus is instead on approximate algorithms, which (roughly speaking) compute the $k$-graphlet distribution within some small additive error $\alpha>0$.
It is clear that this problem can be reduced, in the obvious way, to sampling $k$-graphlets randomly from the underlying graph.
%
%
%
%
Many algorithms for sampling $k$-graphlets have been proposed, broadly divided in those based on random walks \cite{Agostini19mixing,Bhuiyan&2012,Chen&2016,Han&2016,Saha&2015MCMC,Wang&2014} and those based on color coding \cite{Bressan&2017,Bressan&2018b,Bressan2019-VLDB,Bressan21TKDD}.
Eventually, \cite{Bressan21STOC,B23} described \UniformAlgo, an algorithm that preprocesses the graph with $n$ vertices and $m$ edges in time $O(n k^2 \log k + m)$ and then produces independent uniform $k$-graphlets in $k^{O(k)} \log n$ time per sample, yielding an efficient algorithm to approximate the $k$-graphlet distribution.

Although \UniformAlgo\ allows one to sample $k$-graphlets efficiently, it does so only in the standard RAM model, where the input must fit entirely in memory.
%
%
Unfortunately, in practice the graph is often too large for the machine's memory; this is commonplace for real-world graphs, such as those representing social networks, the Web, the human brain and many others.
In these cases, \UniformAlgo\ simply does not work, and the same holds for the other algorithms listed above.

To bypass these limitations,~\cite{Bourreau2024} has recently initiated the study of the $k$-graphlet sampling problem in the \emph{streaming model}.
In this model, the input edges can only be accessed sequentially in an arbitrary order, while using an amount of memory sublinear in the size of the input graph. Streaming algorithms process the graph over multiple \emph{passes}, with each pass involving the processing of all edges in the input graph. Studying the $k$-graphlet distribution problem in the streaming model avoids the obstacle of storing the graph in memory, and turns the optimization target to the number of passes versus the memory usage.


The bottleneck of the algorithm provided by~\cite{Bourreau2024} is in the preprocessing phase, called \ApproxDD, which computes an approximate \emph{degree-dominating} order of vertices---a vertex ordering where each vertex has approximately the highest degree in the subgraph induced by its suffix.
This preprocessing phase uses $O(n \log n)$ memory and $O(\log n)$ passes or, alternatively, $O(n)$ memory and $O(\log n \cdot \log \log n)$ passes.
(This regime, where the memory allowed is linear or near-linear in $n$, is called the \emph{semi-streaming model}).
Unfortunately, \ApproxDD\ is still inefficient for at least two reasons.
The first reason is that it requires $O(\log n)$ passes. It is not clear that this is optimal; the very lower bounds from~\cite{Bourreau2024} only rule out $O(1)$ passes with $o(n)$ memory, but not, say, $O(1)$ or $O(\log \log n)$ passes with $O(n)$ or $O(n \log n)$ memory.
The second reason is that, as shown by our experiments, \ApproxDD\ basically breaks down on mildly dense graphs: it takes thousands of passes, or even fails to run, even with a memory of (say) $10\%$ the size of the input graph.
In this work we aim at bypassing these limitations, by seeking an algorithm that runs in $O(1)$ passes and guarantees good performance regardless of the density of the graph.

\subsection{Contributions}

We develop a streaming algorithm to approximate the $k$-graphlet distribution, with the following guarantees:

\begin{theorem}
\label{thm:1}
There is a two-phase semi-streaming algorithm with the following guarantees. Given $c>0$, a simple graph $G$, a positive integer $k \ge 3$, an error parameter $\alpha \in (0,1)$ and a failure probability $\delta \in (0,1)$, the algorithm returns a distribution $\hat \bp$ such that:
\begin{itemize}
  \item The first phase (``preprocessing'') uses $O\left(1/c\right)$ passes and $\tilde O\left(n^{1+c}\right)$ bits of memory\footnote{The standard $\tilde O(\cdot)$ notation hides $\poly\log(\cdot)$ factors.
  }, with probability at least $1-\frac{1}{n}$.
  \item The second phase (``sampling'') always uses $O\left(\frac{k^{O(k)}}{\alpha^4 M} \ln \frac{1}{\delta}\right)$ passes and $M$ bits of memory.
  \item The $L_{\infty}$ distance between $\hat \bp$ and the $k$-graphlet distribution of $G$ is at most $\alpha$, with probability at least $1-\delta$.
\end{itemize}
\end{theorem}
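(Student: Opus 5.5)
Our plan follows the two-phase framework of Bourreau et al.: first compute an approximate degree-dominating order, then run the streaming version of \UniformAlgo\ that uses this order to sample uniform $k$-graphlets via rejection. All the new work goes into the first phase. The key new ingredient is a routine \PeelN\ that, instead of peeling one degree class per pass, removes in a single round all vertices whose degree in the current residual graph is within a constant factor of the maximum.

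\textbf{Phase 1.} We maintain a residual vertex set $V'$, initially $V$. In each round we do the following.
\begin{enumerate}
\item In one pass, we estimate the degrees $d_{V'}(v)$ for all $v\in V'$, using $O(n\log n)$ bits.
\item We let $\Delta$ be the current maximum degree and mark the set $L$ of vertices with estimated degree at least $\Delta/\beta$, for a constant $\beta$.
\item We append $L$ to the order, with an internal order to be determined, and delete $L$.
\end{enumerate}
Within $L$, the order must itself be approximately degree dominating: each vertex must have approximately the largest degree among its suffix, which includes the rest of $L$. This is where the $n^{1+c}$ memory is spent.
\begin{itemize}
\item We choose thresholds so that the subgraph induced on $L$ can be stored whenever $|L|\,\Delta\le n^{1+c}$.
\item When $\Delta> n^{c}$, we instead subsample edges at rate about $n^{c}/\Delta$, keeping the stored sample $\tilde O(n^{1+c})$. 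By Chernoff bounds and a union bound over $n$ vertices, every degree above $\Delta/\beta$ is then estimated within a constant factor, with probability $1-1/n^2$ per round.
\end{itemize}
With the stored (sampled) induced subgraph on $L$ and the degrees into $V'\setminus L$, we peel $L$ offline in memory. This yields a valid order inside $L$ at no extra pass.

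\textbf{Bounding passes.} We argue that the maximum residual degree drops by a factor of about $n^{c}$ every $O(1)$ rounds, which gives $O(1/c)$ rounds in total since degrees are at most $n$. To get this, we would rather define the rounds by degree scale: in round $i$ we handle all vertices whose residual degree is at least $n^{1-ic}$. We also argue that the offline peeling can process an entire scale $[n^{1-(i+1)c},\,n^{1-ic}]$ at once by storing all edges incident to these vertices inside $V'$. Their number is at most $|V'|\cdot n^{1-ic}$ in the relevant residual graph, but this could be as large as $n^{2-ic}$, so this count is not enough.

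We expect this to be the main obstacle: bounding the edges stored at each scale by $\tilde O(n^{1+c})$. The first remedy we would try is sampling each edge with probability $p=n^{c}/n^{1-(i+1)c}$. The sampled degrees at that scale are then $\Theta(n^{c}\log n)$ after adjusting $p$ by a $\log n$ factor, so they concentrate, and the total sample size is $O(n\cdot n^{c}\log n)$ bits per pass. Degrees computed inside the stored sample then approximate the residual degrees at every peeling step of the scale, which requires a union bound over all $n$ steps. The residual degrees only decrease, and vertices whose degree falls below the scale are deferred to the next scale. Altogether this gives two passes per scale, one to fix $V'$ and one to sample, hence $O(1/c)$ passes, with overall success probability $1-1/n$.

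\textbf{Phase 2.} We invoke the sampling analysis of Bourreau et al. as a black box. Given an $O(1)$-approximate degree-dominating order, each trial uses $O(k)$ passes and outputs a uniform graphlet with probability $k^{-O(k)}$. Trials can be run in parallel, up to $M/\poly(k)$ at a time.
\begin{itemize}
\item To estimate all $k$-graphlet classes within additive error $\alpha$, $O(\alpha^{-2}\ln(1/\delta))$ samples suffice by Hoeffding bounds and a union bound over the $k^{O(k)}$ classes.
\item The stated $\alpha^{-4}$ presumably comes from the $\epsilon$-uniformity of the sampler: we set the sampler's bias to $\epsilon=\alpha^2$ or similar.
\item Dividing the total number of samples by the parallelism $M$ gives the claimed pass bound.
\end{itemize}
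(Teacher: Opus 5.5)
\textbf{Gap in Phase~1: the concentration step does not handle adaptivity.} This is the step that carries the technical weight of the theorem. You draw a single (undirected) edge sample per degree scale and then run many peeling steps offline on it. You then assert that sampled degrees approximate residual degrees at every step, via a union bound over the $n$ steps. Chernoff plus a union bound applies only to residual sets fixed in advance, or at least independent of the sample. Here each residual set $H_t$ is determined by earlier peeling choices, and those choices are functions of the same sampled edges. A sampled edge $\{u,v\}$ makes $u$ more likely to be peeled early, and once $u$ is peeled that edge is subtracted from $v$'s count. So $d_{\tilde G}(v\mid H_t)$ is no longer distributed as $B(d(v\mid H_t),p)$. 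Union-bounding instead over all sets that could arise means ranging over subsets of $N(v)$, which are exponentially many in $d_v$. The tail $e^{-\Theta(p\,d(v\mid H))}$ cannot absorb that when $p=o(1)$.

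\textbf{How the paper avoids this.} It samples each \emph{direction} of each edge independently. It estimates $v$'s degree only from $v$'s own out-edges, which are independent of all decisions about other vertices. Each sampled graph is used for exactly one sweep of \PeelN, in which every vertex is examined once against a fixed threshold. Conditioned on the history, $\hat d_v\sim\frac{1}{p}B(d_v,p)$ exactly, and this is what makes \Cref{theorem:approxdd-fewer-1} go through. To compress passes it never reuses a sample. In one pass it draws $q=\lfloor\log_{1+\epsilon/2}n^c\rfloor$ independent directed samples, with rates $p_j\propto(1+\epsilon/2)^j/\Delta$ matched to thresholds $\Delta_j=\Delta/(1+\epsilon/2)^j$. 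It couples this with $q$ iterations of \Cref{alg:approxdd-fewer-1}. The memory cost is $\sum_j n\Delta p_j$, a geometric sum dominated by its last term, i.e.\ $O(\frac{n^{1+c}}{\epsilon^3}\log\frac{n}{\epsilon\delta})$. Each pass lowers $\Delta$ by a factor of about $n^c$.

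\textbf{Memory arithmetic.} With $p=n^c/n^{1-(i+1)c}$ and residual maximum degree $n^{1-ic}$, sampled degrees at the top of a scale are about $n^{2c}$, not $\Theta(n^c)$. The sample can therefore hold about $n^{1+2c}$ edges. A rate of $\Theta(\log n/(\epsilon^2 n^{1-(i+1)c}))$ is what gives $\tilde O(n^{1+c})$.

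\textbf{Phase~2.} This is a legitimate alternative to the paper's route. Rejection sampling yields \emph{exactly} uniform graphlets. Hoeffding plus a union bound over the $m_k$ classes then needs only $O(\alpha^{-2}\ln(m_k/\delta))$ accepted samples, which already implies the stated pass bound. So your speculation about an ``$\epsilon$-uniform'' sampler with bias $\alpha^2$ is unfounded and unnecessary. The paper instead avoids rejection with Horvitz--Thompson counters. It bounds the range of each increment relative to its mean using $1/p(S)\le(k-1)!(1+\epsilon)^{k-1}Z$ and $L\ge Z/(k-1)^{k-1}$, and first conditions on the concentration of the normalizer $\hat C$.
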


Our algorithm follows the \StreamAlgo framework introduced by Bourreau et al.~\cite{Bourreau2024}:
\begin{framed}
\textbf{Preprocessing Phase}
\begin{enumerate}[(a)]
  \item Compute an \emph{approximate degree-dominating order} (\emph{DD order}, see \Cref{def:ddorder}), a special total order of the vertices that enables fast uniform graphlet sampling~\cite{Bressan21STOC,B23}; \label{step:compute-dd-order} 
  \item Compute the initial probabilities;
\end{enumerate}

\bigskip

\textbf{Sampling Phase}
\begin{enumerate}[(a)]
  \setcounter{enumi}{2}
  \item Repeatedly sample $k$-graphlets. 
  \item Either (i) execute probabilistic rejection (in \cite{Bourreau2024}), or (ii) obtain the $k$-graphlet distribution (in this work). \label{step:rejection-or-gd}
\end{enumerate}
\end{framed}

On the theoretical side, the main contribution behind \Cref{thm:1} is a new streaming algorithm for computing a DD order in Step (\ref{step:compute-dd-order}).
Unlike the algorithm of~\cite{Bourreau2024}, which intrinsically requires $O(\log n)$ passes because of its ``halving degree'' strategy, our algorithm requires only $O(1/c)$ passes\footnote{Note that the guarantees of \cite{Bourreau2024} are probabilistic for number of passes and deterministic for the memory, while our guarantees are probabilistic for both.}, at the cost of higher memory usage.
This is thanks to a strategy that, speaking broadly, consists in sampling enough edges from the input graph so as to make, in a single pass, much more progress than the algorithm of~\cite{Bourreau2024}.
We remark that, by a lower bound of~\cite{Bourreau2024}, no algorithm using  $O(1)$ passes and $o(n)$ memory can even decide if the input graph contains some $k$-graphlet, for a fixed $k \ge 2$.
Since computing the graphlet distribution implies solving such a decision problem, the memory used by our graphlet distribution algorithm is optimal up to a factor of $\tilde O(n^c)$.

Our algorithm also differs from the corresponding algorithm in \cite{Bourreau2024} in Step (\ref{step:rejection-or-gd}): in particular, our work employs the Horvitz-Thompson-based algorithm \cite{horvitzthompsonestimator} to approximate the $k$-graphlet distribution within $L_{\infty}$ distance $\alpha$, whereas \cite{Bourreau2024} focuses on uniform $k$-graphlet sampling using probabilistic rejection.

On the practical side, although our memory usage goes with $n^{1+c}$ rather than just $n$, it turns out that picking $c$ small enough makes our algorithm very competitive compared to~\cite{Bourreau2024}.
In our experiments, by choosing $c=0.1$, our algorithm makes around $20$ preprocessing passes on most datasets, and yet its memory usage is always within one and a half times that of~\cite{Bourreau2024}.
In fact, on datasets that are only \emph{mildly} dense, our algorithm drastically outperforms the one of~\cite{Bourreau2024}.
For example, on a dataset with $n \approx 2 \times 10^5$ vertices and average degree $\approx 500 = 0.0025 n$, both algorithms use around 20MB of memory, but ours makes less than $20$ passes and the one of~\cite{Bourreau2024} more than $3000$.
Our algorithm thus performs well both in theory and in practice, and is scalable and predictable regardless of the structure of the input graph.

\subsection{Related Work}

Counting and sampling triangles can be considered as a special case of computing the $k$-graphlet distribution as $k=3$. Most streaming algorithms for this task in the literature, such as \cite{baryossef&2002-streaming,becchetti_efficient_2008,pavan_counting_2013,ahmed_graph_2014,jha&2015-stream,DeStefani&2017b,lim_memory-efficient_2018}, do not generalize or give guarantees for $k>3$.
Streaming algorithms for counting or sampling $k$-graphlets for a generic $k$ exist, such as \cite{kane2012-stream} and \cite{de_stefani_tiered_2017}.
Notice that their streaming setting is different from ours, as they consider only \emph{single}-pass algorithms, they ask to maintain a good approximation throughout the entire stream, and they have stricter memory requirements.
The downside is that both works give guarantees only for specific graphlets.

Many algorithms exist for counting and sampling $k$-graphlets, or estimating the $k$-graphlet distribution, in the standard computation model \cite{Agostini19mixing,Bhuiyan&2012,Bressan&2017,Bressan&2018b,Bressan2019-VLDB,Bressan21TKDD,Chen&2016,Han&2016,Gionis&2020mixing,Paramonov19-lifting,Saha&2015MCMC,Wang&2014}.
All those algorithms have poor memory consumption, as they need to store the whole graph in memory.
Among those algorithms, one that is still relatively efficient is \textsc{Motivo}~\cite{Bressan21TKDD}, which is based on the color-coding technique.
\textsc{Motivo} has formal approximation guarantees, and in practice it can approximate well the $k$-graphlet distribution on large graphs in a matter of minutes or hours (depending on $k$) when running on a workstation.
In this work we use it to compute the ground-truth $k$-graphlet distributions for our experiments.

Topological orders play an important role in subgraph mining problems. For example, the degeneracy (or core) order, one of the most commonly used orders, enables efficient subgraph-counting algorithms~\cite{Matula83-Ordering,Chiba&1985,Bressan21Algo,BeraGLSS22}. There are also streaming algorithms involving computing topological orders \cite{bahmani_densest_2012,sariyuce_streaming_2013,Bourreau2024}. We remark that computing the core order is technically different from computing the \emph{degree-dominating order} that is used in this work.
Indeed, while a core order is obtained by repeatedly removing a vertex of minimum degree, a degree-dominating order is obtained by removing one of maximum degree, and it is easy to find examples where the two orders are not obviously related in any way.
Therefore, existing streaming algorithms for core orders \cite{bahmani_densest_2012,sariyuce_streaming_2013} cannot be employed here.

We face a technical challenge similar to the one of \cite{k-core-mpc}, where the authors approximate the $k$-core decomposition in parallel: keeping accurate estimates of vertex degrees while iteratively removing vertices in a large graph.
Maintaining the degrees by scanning the entire edge list after each removal would require too many passes.
Their solution is to work on a sub-sampled graph, produced by sampling each edge with an appropriate probability $p$.
On the one hand, by standard concentration results, the degrees in the sub-sampled graph provide a good estimate of the true degree.
On the other hand, only a small fraction of edges is sampled, thus one may fit the memory requirements of the streaming algorithm.

We use a Horvitz–Thompson estimator \cite{horvitzthompsonestimator} to estimate the distribution of a set of objects when only a non-uniform sampling of the objects can be accessed. This estimator helps us approximate the $k$-graphlet distribution without sampling $k$-graphlets uniformly at random; see also~\cite{Bressan21TKDD}.

\subsection{Organization}

We define notation and key notions in \Cref{sec:prelim}. The new algorithm for computing the DD order is presented in \Cref{sec:fast-ddorder}. The algorithm for estimating the $k$-graphlet distribution is then completed in \Cref{sec:graphlet-distribution}. Finally, the experiments to evaluate our algorithm and compare with existing work are shown in \Cref{sec:experiments}. 

\section{Preliminaries}\label{sec:prelim}
This section introduces basic notation and the main technical concepts.

\subsection{Notation and Computational Model}
Let $G=(V,E)$ be a simple graph and let $n=|V|$.
For a subset $U \subseteq V$ of vertices, $G[U]$ denotes the subgraph of $G$ induced by $U$, and $G-U$ denotes the subgraph $G[V \setminus U]$.
For a total order $\prec$ over $V$, we let $G(v)$ be the subgraph $G[\{u: v \preceq u\}]$.
For $u \in V$, we denote by $d_u$ the degree of $u$ in $G$; if $H$ is a subgraph of $G$ and $u \in V(H)$, then we denote by $d(u|H)$ the degree of $u$ in $H$.
This applies in particular to $H=G(v)$, so we may write $d(u|G(v))$.
We denote by $\Delta$ the maximum degree of $G$.

The smallest value for which the problem is non-trivial is
$k \ge 3$.
A \emph{$k$-graphlet} of $G$ is an induced, connected, $k$-vertex subgraph of $G$.
We denote by $\graphlet_1, \cdots, \graphlet_{m_k}$, the representatives of the isomorphism classes of connected graphs on $k$ vertices, in any order.
Thus, every $k$-graphlet of $G$ is isomorphic to precisely one $\graphlet_i$.
For each $i=1,\ldots,m_k$ let $l_i$ be the number of induced subgraphs of $G$ isomorphic to $\graphlet_i$, and let $L=\sum_{i=1}^{m_k} l_i$.
The \emph{$k$-graphlet distribution} of $G$ is the vector
\begin{equation*}
\bmu_k \triangleq (\mu_1, \cdots, \mu_{m_k})
\end{equation*}
where $\mu_i = \frac{l_i}{L}$ for each $i$.
Our goal is to estimate $\bmu_k$. An example of the $k$-graphlet distribution is presented in \Cref{fig:graphletdistrib}.

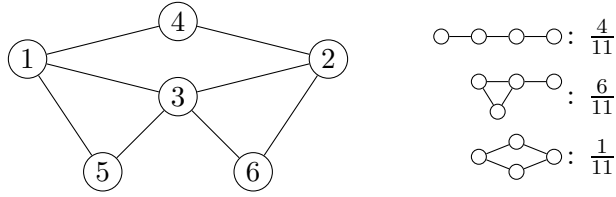
\begin{figure}
  \centering
  \begin{tikzpicture}
    \tikzstyle{vertex}=[circle,draw, inner sep=2pt]
    \tikzstyle{anonvertex}=[circle,draw, inner sep=2pt]
    \node[vertex] (v1) at (0,1.5) {$1$};
    \node[vertex] (v2) at (4,1.5) {$2$};
    \node[vertex] (v3) at (2,1) {$3$};
    \node[vertex] (v4) at (2,2) {$4$};
    \node[vertex] (v5) at (1,0) {$5$};
    \node[vertex] (v6) at (3,0) {$6$};

    \path[-] (v1) edge (v3);
    \path[-] (v1) edge (v4);
    \path[-] (v1) edge (v5);
    \path[-] (v2) edge (v3);
    \path[-] (v2) edge (v4);
    \path[-] (v2) edge (v6);
    \path[-] (v3) edge (v5);
    \path[-] (v3) edge (v6);

    \node[anonvertex] (a1) at (5.5,1.8) {};
    \node[anonvertex] (a2) at (6,1.8) {};
    \node[anonvertex] (a3) at (6.5,1.8) {};
    \node[anonvertex] (a4) at (7,1.8) {};
    \node (a) at (7.5,1.8) {: $\frac{4}{11}$};
    \path[-] (a1) edge (a2);
    \path[-] (a2) edge (a3);
    \path[-] (a3) edge (a4);
    \node[anonvertex] (b1) at (6,1.2) {};
    \node[anonvertex] (b2) at (6.5,1.2) {};
    \node[anonvertex] (b3) at (6.25,0.8) {};
    \node[anonvertex] (b4) at (7,1.2) {};
    \node (b) at (7.5,1) {: $\frac{6}{11}$};
    \path[-] (b1) edge (b2);
    \path[-] (b2) edge (b3);
    \path[-] (b1) edge (b3);
    \path[-] (b2) edge (b4);
    \node[anonvertex] (c1) at (6,0.2) {};
    \node[anonvertex] (c2) at (6.5,0.4) {};
    \node[anonvertex] (c3) at (6.5,0) {};
    \node[anonvertex] (c4) at (7,0.2) {};
    \node (b) at (7.5,0.2) {: $\frac{1}{11}$};
    \path[-] (c1) edge (c2);
    \path[-] (c1) edge (c3);
    \path[-] (c3) edge (c4);
    \path[-] (c2) edge (c4);
  \end{tikzpicture}
  \caption{A toy graph (left) and its $4$-graphlet distribution (right).
  $4$-graphlets with probability $0$ are not shown.}\label{fig:graphletdistrib}
\end{figure}

A notion central to this approach is the \emph{degree-dominating} ordering (DD-ordering) introduced in~\cite{bressan2021efficient},
as well as its approximate version from~\cite{Bourreau2024}.

\begin{definition}[(Approximate) DD-ordering]\label{def:ddorder}
Given a simple graph $G=(V,E)$,
a total order $\prec$ over $V$ is degree-dominating if, for each $v \in V$, the maximum degree in $G(v)$ is achieved by $v$ itself; in other words, $d(v|G(v)) \ge d(u|G(v))$ for all $u \succeq v$.
For $\vartheta > 0$,
a \emph{$\vartheta$-DD order} $\prec$ is a total order over $V$ such that, for all $v \in V$ with $d(v|G(v)) \ge 1$ and all $u \succeq v$, it holds that $d(v|G(v)) \ge \vartheta \cdot  d(u|G(v))$.

\end{definition}
We often use $\vartheta = \frac{1}{1+\epsilon}$ for some $\epsilon>0$.
An example of DD-ordering is shown in \Cref{fig:ddorder}. A possible $\frac{1}{1.5}$-DD order of the graph is $1 \prec 3 \prec 2 \prec 5 \prec 6 \prec 4$. Note that for the first position of the order, we have $d(1|G(1))=3$ and $\max_{u \succeq 1} d(u|G(1)) = d(3|G(1)) = 4$, which satisfies $d(1|G(1)) \ge \frac{1}{1.5} \cdot \max_{u \succeq 1} d(u|G(1))$.

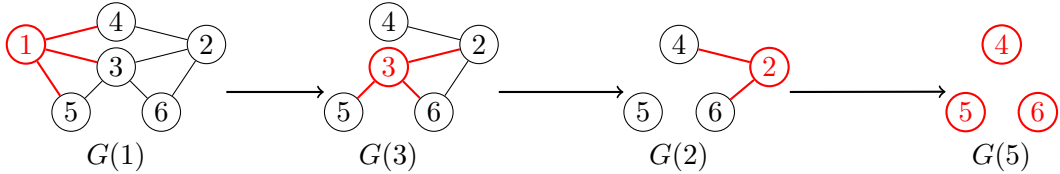
\begin{figure}
  \centering
  \begin{subfigure}{0.23\textwidth}
    \centering
    \begin{tikzpicture}[scale=0.6, remember picture]
      \tikzstyle{vertex}=[circle,draw, inner sep=2pt]
      \node[red,thick,vertex] (v1) at (0,1.5) {$1$};
      \node[vertex] (v2) at (4,1.5) {$2$};
      \node[vertex] (v3) at (2,1) {$3$};
      \node[vertex] (v4) at (2,2) {$4$};
      \node[vertex] (v5) at (1,0) {$5$};
      \node[vertex] (v6) at (3,0) {$6$};

      \path[red,thick,-] (v1) edge (v3);
      \path[red,thick,-] (v1) edge (v4);
      \path[red,thick,-] (v1) edge (v5);
      \path[-] (v2) edge (v3);
      \path[-] (v2) edge (v4);
      \path[-] (v2) edge (v6);
      \path[-] (v3) edge (v5);
      \path[-] (v3) edge (v6);

      \node at (2,-1) {$G(1)$};

      \coordinate (r1) at (current bounding box.east);
    \end{tikzpicture}
  \end{subfigure}
  \begin{subfigure}{0.23\textwidth}
    \centering
    \begin{tikzpicture}[scale=0.6, remember picture]
      \tikzstyle{vertex}=[circle,draw, inner sep=2pt]
      \node[vertex] (v2) at (4,1.5) {$2$};
      \node[red,thick,vertex] (v3) at (2,1) {$3$};
      \node[vertex] (v4) at (2,2) {$4$};
      \node[vertex] (v5) at (1,0) {$5$};
      \node[vertex] (v6) at (3,0) {$6$};

      \path[red,thick,-] (v2) edge (v3);
      \path[-] (v2) edge (v4);
      \path[-] (v2) edge (v6);
      \path[red,thick,-] (v3) edge (v5);
      \path[red,thick,-] (v3) edge (v6);

      \node at (2,-1) {$G(3)$};

      \coordinate (l2) at (current bounding box.west);
      \coordinate (r2) at (current bounding box.east);
    \end{tikzpicture}
  \end{subfigure}
  \begin{subfigure}{0.23\textwidth}
    \centering
    \begin{tikzpicture}[scale=0.6, remember picture]
      \tikzstyle{vertex}=[circle,draw, inner sep=2pt]
      \node[red,thick,vertex] (v2) at (4,1) {$2$};
      \node[vertex] (v4) at (2,1.5) {$4$};
      \node[vertex] (v5) at (1.2,0) {$5$};
      \node[vertex] (v6) at (2.8,0) {$6$};

      \path[red,thick,-] (v2) edge (v4);
      \path[red,thick,-] (v2) edge (v6);

      \node at (2,-1) {$G(2)$};

      \coordinate (l3) at (current bounding box.west);
      \coordinate (r3) at (current bounding box.east);
    \end{tikzpicture}
  \end{subfigure}
  \begin{subfigure}{0.23\textwidth}
    \centering
    \begin{tikzpicture}[scale=0.6, remember picture]
      \tikzstyle{vertex}=[circle,draw, inner sep=2pt]
      \node[red,thick,vertex] (v4) at (2,1.5) {$4$};
      \node[red,thick,vertex] (v5) at (1.2,0) {$5$};
      \node[red,thick,vertex] (v6) at (2.8,0) {$6$};

      \node at (2,-1) {$G(5)$};

      \coordinate (l4) at (current bounding box.west);
      \coordinate (r4) at (current bounding box.east);
    \end{tikzpicture}
  \end{subfigure}

  \begin{tikzpicture}[remember picture, overlay]
    \draw[->, thick] (r1) -- (l2);
    \draw[->, thick] (r2) -- ([yshift=1.5mm]l3);
    \draw[->, thick] ([yshift=1.5mm]r3) -- ([yshift=1.5mm]l4);
  \end{tikzpicture}
  \caption{Computation of a $\frac{1}{1.5}$-DD ordering of a toy graph, by iterative removal of a high-degree vertex. The resulting $\frac{1}{1.5}$-DD ordering is $1 \prec 3 \prec 2 \prec 5 \prec 6 \prec 4$.}
  \label{fig:ddorder}
\end{figure}

We consider the \emph{semi-streaming} model of \cite{feigenbaum_graph_2005}.
In this model, the memory consists of $M=\tilde O(n^{1+c})$ words of $O(\log n)$ bits for some constant $c>0$ (for example, $c=0.1$).
The algorithm can access the graph by scanning its edges, sequentially, in what is called a \emph{pass}.
The order in which edges appear in each pass is arbitrary (that is, decided by an adversary), and the algorithm can make any number of passes.
We assume $m=\Omega(M)$ to avoid trivialities. All computation will be in the standard RAM model.

\subsection{Uniform Graphlet Sampling}\label{sec:ugs}
Our work builds on \StreamAlgo, the streaming algorithm of Bourreau et al.~\cite{Bourreau2024}.
\StreamAlgo\ consists of a preprocessing phase and a sampling phase.
The preprocessing phase is run once, and computes a $\vartheta$-DD order $\ddorder$,
where $\vartheta=\frac{1}{1+\epsilon}$ for some desired $\epsilon>0$, together with an initial distribution $\bp$ over $V$.
With $\ddorder$ and $\bp$ one can then implement a $k$-pass sampling routine ensuring that any given $k$-graphlet of $G$ has the same sampling probability except for a multiplicative factor of $((1+\epsilon)k)^{O(k)}$.
Using rejection sampling one can remove that factor and get truly uniform sampling, with an overhead of $((1+\epsilon)k)^{O(k)}$ expected trials per sample.
The crucial part is therefore computing $\ddorder$ and $\bp$ with as few passes as possible.

\smallskip
\noindentparagraph{Computing the $\vartheta$-DD order}
As shown in \cite[Theorem 3.1]{Bourreau2024}, one can compute $\ddorder$ with high probability in $O\left(\frac{\log n}{\epsilon^2} \log \frac{\log n}{\epsilon}\right)$ passes by using memory $O(n)$, or in $O\left(\frac{\log n}{\epsilon^2}\right)$ passes by using memory $O\left(\frac{n \log n}{\epsilon}\right)$.
The idea is to select a subset $S$ of vertices that have degree at least $\frac{\Delta}{1+\epsilon}$, where $\Delta$ is the maximum degree of $G$.
This can be done with high probability with a constant number of passes.
The set $S$ is then appended to $\ddorder$ in arbitrary order, and $S$ is removed from $G$ (meaning that \emph{the algorithm} keeps track, in some way, of the subgraph obtained by removing $S$).
Therefore, every $O(1)$ passes the maximum degree of $G$ decreases by a factor $\frac{1}{1+\epsilon}$.
As a consequence, the algorithm terminates within $\log_{\frac{1}{1+\epsilon}} n$ passes.



\smallskip
\noindentparagraph{Initial Distribution}
Given a $\frac{1}{1+\epsilon}$-DD order, let $N_v$ be the number of $k$-graphlets in $G(v)$ containing $v$. Computing $d(v|G(v))$ and deciding whether $N_v>0$ for all vertices in $V$ can be done in 1 pass and $O(kn)$ memory by \cite[Lemma 3.3]{Bourreau2024}. After that, the \emph{normalization parameter}
\begin{equation*}
  Z = \sum_{v \in V, \ N_v>0} d(v|G(v))^{k-1}
\end{equation*}
as well as the \emph{initial distribution} $\bp = \{p(v)|v \in V\}$ such that
\begin{equation*}
  p(v) = \begin{cases}
    \frac{d(v|G(v))^{k-1}}{Z}, & N_v > 0, \\
    0, & N_v = 0
  \end{cases}
\end{equation*}
can be computed without any pass. This subroutine is formally described in \Cref{alg:InitDistrib}. The initial distribution will be used later to select the starting vertex and compute the relevant sampling probability in the sampling phase.

\begin{algorithm}
	\caption{Initial Distribution \cite[Algorithm 2]{Bourreau2024}}
	\label{alg:InitDistrib}
	\SetKwProg{Function}{Function}{}{}
	\Function{\InitDistrib($G, \prec$)}{
        Compute $d(v|G(v))$ and decide whether $N_v>0$ for each $v \in V$ \\
        $Z \gets \sum_{v \in V, \ N_v>0} d(v|G(v))^{k-1}$ \\
        $p(v) \gets \begin{cases}
            \frac{d(v|G(v))^{k-1}}{Z}, & N_v > 0, \\
            0, & N_v = 0
        \end{cases}$ \\
        \Return $\bp = \{p(v)|v \in V\}$
	}
\end{algorithm}

\noindentparagraph{Rejection Sampling}
At a high level, we first sample a $k$-graphlet of $G$ and then accept it with some probability so that the probability of each $k$-graphlet being sampled and accepted is the same. The first sampling step goes as follows. We sample a vertex $v$ according to the initial distribution $\bp = \{p(v) | v \in V\}$ and let $S = \{v\}$. Then we ``grow'' $S$ for $k-1$ times, each time by selecting uniformly at random an edge in $G(v)$ with exactly one endpoint in $S$, and adding the other endpoint to $S$. In this way, we obtain a $k$-graphlet $S$ with probability $p(S) = p(v) \cdot p(S | v)$, where $p(S|v)$ is the probability of obtaining $S$ from $v$. All those probabilities can be computed efficiently.
This sampling algorithm is formally described in \Cref{alg:SampleGraphlet}.

\begin{algorithm}
	\caption{Sampling A Graphlet \cite[Algorithm 3]{Bourreau2024}}
	\label{alg:SampleGraphlet}
	\SetKwProg{Function}{Function}{}{}
	\Function{\SampleGraphlet($G, \prec, k, \bp$)}{
        Sample $v$ from the distribution $\bp$ \\
        $S \gets \{v\}$ \\
        \For{$i \gets 1$ \KwTo $k-1$}{
            Uniformly sample an edge $(x,y)$ in $G(v)$ such that $x \in S$ and $y \not\in S$ \\
            $S \gets S \cup \{y\}$
        }
        \Return $S$
	}
\end{algorithm}

To achieve uniform sampling, let $\Gamma$ be some constant such that $0 < \Gamma \le \min_{S} p(S)$, where $S$ ranges over all $k$-tuples of vertices such that $p(S) > 0$.
We then accept $S$ with probability $\frac{\Gamma}{p(S)}$, otherwise reject $S$ and sample again.
By \cite[Lemma C.1]{Bourreau2024}, with high constant probability (say $0.99$), after $(k(1+\epsilon))^{O(k)}$ trials of sampling we accept a $k$-graphlet, and that $k$-graphlet is drawn from the uniform distribution.

\noindentparagraph{Parallel sampling}
As \Cref{alg:SampleGraphlet} uses memory $O(k^2)$, by using memory $M$ one can run $\Theta \left(\frac{M}{k^2}\right)$ instances  in parallel, and produce with high probability $\Theta\left(\frac{M}{(k(1+\epsilon))^{O(k)}}\right)$ independent uniformly random $k$-graphlets.
Moreover, \cite[Theorem 3.4]{Bourreau2024} gives a trade-off between running time and number of passes: the parallel sampling can be achieved either in $k$ passes and $O(M |E|)$ time, or in $2k-1$ passes and $O(M2^k + k|E| \log n)$ time.

\ignore{
\noindent\textbf{Computational model.} The algorithm has a memory of $M$ words of $\Theta(\log n)$ bits each. We require $M=\Omega(n)$ or $M=\Omega(n \log n)$, depending on the case. The graph is stored as a list of edges in arbitrary (i.e., adversarial) order. With one \emph{pass}, the algorithm can read the list sequentially. This is called \emph{semi-streaming} model~\cite{feigenbaum_graph_2005}.
To avoid trivialities we assume $m=\Omega(M)$; otherwise, in one pass one can store $G$ and run the algorithm of~\cite{B23}. For computation we assume the standard RAM model. We also assume that in time $O(1)$ one can draw a random uniform integer in $\{1,\ldots,c\}$ for any $c = \poly(n)$, or a Bernoulli random variable $B(p)$ for $p = \Omega(n^{-k})$. When we run multiple instances of a subroutine in parallel, the running time is understood to be the total number of operations executed by all those instances.

\noindent \textbf{Graphlet size.}
Our algorithms are designed primarily for $k = O(1)$, but they yield nontrivial guarantees also for $k=\omega(1)$; for instance, for $k=\sqrt{\log n}$. Our full statements make the dependence on $k$ clear.
}

\subsection{Concentration Inequalities}

Let $B(m,p)$ denote the binomial distribution with parameters $m,p$.
We use the following Chernoff-style bounds in our analysis.

\begin{fact}[Chernoff Bound Variant \Rmnum{1}]
\label{fact:Chernoff}
Let $X \sim B(m, p)$ and $0 < \epsilon < 1$. Then, we have the following:
\begin{compactitem}
\item If $m \geq N$, $\Pr{X \leq (1 - \epsilon) N p}
\leq \exp\left(- \frac{1}{3} \epsilon^2 N p\right)$.

\item If $m \leq N$, $\Pr{X \geq (1 + \epsilon) N p}
\leq \exp\left(- \frac{1}{3} \epsilon^2 N p\right)$.
\end{compactitem}
\end{fact}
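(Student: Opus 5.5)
The plan is to run the standard moment-generating-function (Chernoff) argument directly on $X=\sum_{i=1}^m X_i$, with the $X_i$ i.i.d.\ Bernoulli$(p)$. The only new ingredient is to replace $m$ by $N$ inside the exponent at the moment the sign of the exponent makes this a valid relaxation. This avoids any coupling argument, which would be awkward if $N$ is not an integer. Throughout, write $\mu' = Np$.

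\emph{Lower tail ($m \ge N$).} For $t>0$, Markov's inequality applied to $e^{-tX}$ gives
\[
\Pr{X \le (1-\epsilon)\mu'} \le e^{t(1-\epsilon)\mu'}\,\E[e^{-tX}] = e^{t(1-\epsilon)\mu'}\bigl(1+p(e^{-t}-1)\bigr)^m .
\]
Next I will use $1+x\le e^x$, which gives $\bigl(1+p(e^{-t}-1)\bigr)^m \le \exp\bigl(mp(e^{-t}-1)\bigr)$. Since $e^{-t}-1<0$ and $m\ge N$, this is at most $\exp\bigl(\mu'(e^{-t}-1)\bigr)$. Choosing $t=-\ln(1-\epsilon)>0$ yields the classical bound
\[
\left(\frac{e^{-\epsilon}}{(1-\epsilon)^{1-\epsilon}}\right)^{\mu'} .
\]
I then invoke the elementary inequality $(1-\epsilon)\ln(1-\epsilon) \ge -\epsilon+\epsilon^2/2$ on $(0,1)$. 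It follows by expanding $\ln(1-\epsilon)$ as a power series, or by checking that the difference vanishes at $0$ and has nonnegative derivative. This gives the bound $\exp(-\epsilon^2\mu'/2)\le \exp(-\epsilon^2\mu'/3)$.

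\emph{Upper tail ($m \le N$).} Symmetrically, for $t>0$,
\[
\Pr{X \ge (1+\epsilon)\mu'} \le e^{-t(1+\epsilon)\mu'}\exp\bigl(mp(e^{t}-1)\bigr) \le e^{-t(1+\epsilon)\mu'}\exp\bigl(\mu'(e^{t}-1)\bigr),
\]
where the second step now uses $e^t-1>0$ and $m\le N$. Taking $t=\ln(1+\epsilon)$ gives
\[
\left(\frac{e^{\epsilon}}{(1+\epsilon)^{1+\epsilon}}\right)^{\mu'} .
\]
It remains to show $f(\epsilon)=\epsilon-(1+\epsilon)\ln(1+\epsilon)+\epsilon^2/3\le 0$ on $(0,1)$. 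For this I will check $f(0)=0$, $f'(\epsilon)=-\ln(1+\epsilon)+2\epsilon/3$ with $f'(0)=0$, and $f''(\epsilon)=-1/(1+\epsilon)+2/3$. So $f''<0$ on $[0,1/2)$ and $f''>0$ afterwards. Hence $f'$ first decreases and then increases, and $f'(1)=2/3-\ln 2<0$, so $f'\le 0$ on $[0,1]$ and therefore $f\le 0$.

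The main obstacle is only this last calculus step, which is where the constant $1/3$ (rather than $1/2$) and the restriction $\epsilon<1$ are needed. The relaxation $m\to N$ is the one conceptual point, and it must be applied in the direction dictated by the sign of $e^{\mp t}-1$. That sign dependence is exactly why each tail carries its own hypothesis on $m$ versus $N$.
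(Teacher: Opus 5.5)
Your proof is correct, but it takes a different route from the paper. The paper obtains the Fact from a separate stochastic-dominance lemma and then applies an off-the-shelf Chernoff bound to $B(N,p)$, whose mean is exactly $Np$. The lemma is a coupling: it views $B(m,p)$ as $B(n,p)$ plus $m-n$ extra independent Bernoulli trials, so $B(n,p)$ is dominated by $B(m,p)$ whenever $n\le m$. Taking complements in the dominance inequality gives the lower tail, and the same coupling gives the upper tail.

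You instead run the moment-generating-function argument once and absorb the comparison between $m$ and $N$ into the exponent, using the sign of $e^{\mp t}-1$. This is essentially a self-contained proof of the textbook ``Chernoff with a bound on the mean'' variant: $\mu\le\mu_H$ for the upper tail, $\mu\ge\mu_L$ for the lower.

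The paper's route is more modular, since it reuses a standard bound as a black box. However, it tacitly needs $N$ to be an integer for $B(N,p)$ to make sense. This matters because the Fact is later applied with $N=\Delta/K$ and $N=n\Delta$, which are generally not integers, so that route would need an extra rounding step, and in the upper tail rounding $N$ down costs some bookkeeping.

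Your route works verbatim for real $N$ and even yields the constant $\tfrac12$ in the lower tail. The price is that you must verify the elementary inequalities behind the constants yourself, which you do correctly, including the check $f'(1)=\tfrac23-\ln 2<0$.
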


Fact~\ref{fact:Chernoff} follows from stochastic dominance together with another variant of the Chernoff Bound.

\begin{lemma}[Stochastic Dominance]
\label{lemma:coupling}
Let $n \le m$, let $X \sim B(n,p)$ and $Y \sim B(m,p)$. For any $c$, $\Pr{X > c} \le \Pr{Y > c}$.
\end{lemma}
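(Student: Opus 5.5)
We prove the statement by a direct coupling, which gives the inequality immediately and avoids any computation with binomial coefficients. The plan is to construct $X$ and $Y$ on a common probability space so that $X \le Y$ holds pointwise. Concretely, let $Z_1, \ldots, Z_m$ be independent Bernoulli random variables with parameter $p$. Set $X' = \sum_{i=1}^{n} Z_i$ and $Y' = \sum_{i=1}^{m} Z_i$. The assumption $n \le m$ is used exactly here, so that the first $n$ variables are among the $m$ available.

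First, we check the marginals. As a sum of $n$ (respectively $m$) independent $\mathrm{Bernoulli}(p)$ variables, $X'$ has distribution $B(n,p)$ and $Y'$ has distribution $B(m,p)$. Hence $X'$ has the same law as $X$ and $Y'$ has the same law as $Y$. This means $\Pr{X > c} = \Pr{X' > c}$ and $\Pr{Y > c} = \Pr{Y' > c}$ for every real $c$.

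Next, we compare pointwise. Since every $Z_i \ge 0$, we have
\[
Y' - X' = \sum_{i=n+1}^{m} Z_i \ge 0
\]
surely. Therefore the event $\{X' > c\}$ is contained in the event $\{Y' > c\}$, and monotonicity of probability gives $\Pr{X' > c} \le \Pr{Y' > c}$. Combined with the equalities of laws, this is the claim.

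There is no real obstacle in this argument. The only point that needs care is that the statement concerns the distributions of $X$ and $Y$ and not any given joint realization. This is why we pass to the coupled copies $X'$ and $Y'$ rather than trying to compare the original $X$ and $Y$ directly.
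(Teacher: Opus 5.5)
Your coupling argument is correct and is essentially the paper's own proof. The paper likewise writes $X$ and $Y$ as partial sums $A=\sum_{i=1}^n x_i$ and $C=A+\sum_{i=n+1}^m x_i$ of the same $m$ independent Bernoulli$(p)$ variables, and concludes from $A\le C$ pointwise that $\Pr{X>c}=\Pr{A>c}\le\Pr{C>c}=\Pr{Y>c}$.
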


\begin{proof}
Let $x_1, \cdots, x_m$ be Bernoulli variables with positive probability $p$. Let $A = \sum_{i=1}^n x_i$, $B = \sum_{i=n+1}^m x_i$, and $C = A + B$. Thus $A$ and $X$ have the same distribution, and $C$ and $Y$ have the same distribution.
Now observe that, for any $c$,
\begin{equation*}
\Pr{X > c} = \Pr{A > c} \le \Pr{C > c} = \Pr{Y > c}. \qedhere
\end{equation*}
\end{proof}

\begin{fact}[Chernoff Bound Variant \Rmnum{2}]
\label{fact:Chernoff2}
Let $D$ be the distribution given by $\Pr{X=r_i} = p_i$ for $i = 1, \cdots, m$ where $\sum_{i=1}^m p_i = 1$. A variable following $D$ has expectation $E = \sum_{i=1}^m p_ir_i$. Let $X_1, \cdots, X_T$ be i.i.d. variables following $D$. For any $0 < \epsilon < 1$, we have the following inequality:
\begin{equation*}
  \Pr{\left| \frac{\sum_{i=1}^T X_i}{T} - E \right| \ge \epsilon E } \le 2\exp\left( -\frac{\epsilon^2TE}{3\max_{i=1}^m r_i} \right).
\end{equation*}
\end{fact}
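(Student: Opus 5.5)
The claim is a multiplicative Chernoff bound for the empirical mean of $T$ i.i.d.\ bounded nonnegative variables; I am reading the values $r_i$ as nonnegative, as in all later uses. I would reduce it to the standard Chernoff bound for sums of independent $[0,1]$-valued variables by rescaling. Set $R = \max_{i} r_i$ and $Y_j = X_j / R$. Then each $Y_j$ lies in $[0,1]$, the $Y_j$ are i.i.d., and $\E[Y_j] = E/R$. Hence $S = \sum_{j=1}^T Y_j$ has mean $\mu = TE/R$.

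The event in the statement is equivalent to
\[
\left|\frac{1}{T}\sum_{j} X_j - E\right| \ge \epsilon E ,
\]
which, after multiplying by $T/R$, is the same as $|S - \mu| \ge \epsilon \mu$. I would then invoke the two one-sided Chernoff bounds for sums of independent $[0,1]$ variables, which hold for $0<\epsilon<1$:
\[
\Pr{S \ge (1+\epsilon)\mu} \le \exp(-\epsilon^2\mu/3), \qquad \Pr{S \le (1-\epsilon)\mu} \le \exp(-\epsilon^2\mu/2) \le \exp(-\epsilon^2\mu/3).
\]
A union bound gives $2\exp(-\epsilon^2 T E/(3R))$, which is exactly the claimed bound.

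The only nontrivial ingredient is that the Chernoff bound applies to variables in $[0,1]$ rather than only to Bernoulli variables. If one wants a self-contained argument, this comes from the convexity bound
\[
\E[e^{tY}] \le 1 + (e^t - 1)\E[Y] \le \exp\bigl((e^t - 1)\E[Y]\bigr) \quad \text{for } Y \in [0,1].
\]
This is the same moment generating function bound as in the Bernoulli case, so the usual Markov/MGF optimization (with $t = \ln(1+\epsilon)$ for the upper tail and $t = \ln(1-\epsilon)$ for the lower tail) goes through verbatim. I would cite this as standard rather than reprove it.

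I expect no real obstacle. The only points to check are that the $r_i$ are nonnegative, so that $Y_j \in [0,1]$, and the degenerate case $E = 0$ (or $R = 0$). In that case every $X_j = 0$ almost surely, the event $|\cdot| \ge 0$ holds trivially, and the right-hand side should be read with the convention that the statement assumes $E > 0$.
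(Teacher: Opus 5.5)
Your proof is correct. The paper states this as a standard Fact with no proof, and your argument is the standard derivation it implicitly relies on: rescaling by $R=\max_i r_i$ turns the event into $|S-\mu|\ge\epsilon\mu$ for a sum of independent $[0,1]$-valued variables with $\mu=TE/R$, and the two multiplicative Chernoff tails plus a union bound give the claim.

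Your nonnegativity reading of the $r_i$ is in fact necessary, not just convenient. With negative values the inequality can fail: a single value $r_1=-1$ makes the left side $1$, while the right side tends to $0$ as $T$ grows. The assumption does hold in the paper's only use of this fact, where the values are $1/p(S)>0$.
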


We also use Hoeffding's inequality.

\begin{fact}[Hoeffding's inequality]
\label{fact:Hoeffding}
Let $D$ be the distribution given by $\Pr{X=r_i} = p_i$ for $i = 1, \cdots, m$ where $\sum_{i=1}^m p_i = 1$. A variable following $D$ has expectation $E = \sum_{i=1}^m p_ir_i$. Let $X_1, \cdots, X_T$ be i.i.d. variables following $D$. For any $0 < \alpha < 1$, we have the following inequality:
\begin{equation*}
  \Pr{\left| \frac{\sum_{i=1}^T X_i}{T} - E \right| \ge \alpha } \le 2\exp\left( -\frac{2T\alpha^2}{\left(\max_{i=1}^m r_i\right)^2} \right).
\end{equation*}
\end{fact}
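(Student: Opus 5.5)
We prove this by the standard exponential-moment (Chernoff) method, combined with Hoeffding's lemma for a single bounded variable. Throughout we assume, as in all uses in this paper, that the values $r_i$ are nonnegative. Set $R=\max_{i=1}^m r_i$, so that each $X_j$ takes values in the interval $[0,R]$. If $R=0$ the claim is trivial, so assume $R>0$. Let $S=\sum_{j=1}^T (X_j - E)$. By symmetry (apply the same argument to $-X_j$, which lies in $[-R,0]$, an interval of the same length $R$), it suffices to show
\begin{equation*}
\Pr{S \ge T\alpha} \le \exp\left(-\frac{2T\alpha^2}{R^2}\right).
\end{equation*}
The factor $2$ in the statement then comes from a union bound over the two tails.

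\textbf{Step 1 (Markov on the exponential).} For any $\lambda>0$, Markov's inequality applied to $e^{\lambda S}$ gives
\begin{equation*}
\Pr{S\ge T\alpha}\le e^{-\lambda T\alpha}\,\E\left[e^{\lambda S}\right].
\end{equation*}
By independence this equals $e^{-\lambda T\alpha}\prod_{j=1}^T \E\left[e^{\lambda(X_j-E)}\right]$.

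\textbf{Step 2 (Hoeffding's lemma).} Show that a random variable $Y$ with $\E[Y]=0$ and $a\le Y\le b$ satisfies $\E[e^{\lambda Y}]\le e^{\lambda^2(b-a)^2/8}$. This is the main technical step.
\begin{itemize}
\item By convexity of $y\mapsto e^{\lambda y}$ on $[a,b]$, bound $e^{\lambda y}$ by the chord $\frac{b-y}{b-a}e^{\lambda a}+\frac{y-a}{b-a}e^{\lambda b}$, and take expectations using $\E[Y]=0$.
\item Writing $h=\lambda(b-a)$ and $p=-a/(b-a)$, the resulting bound is $e^{L(h)}$ with $L(h)=-hp+\ln(1-p+pe^h)$.
\item Check that $L(0)=L'(0)=0$ and $L''(h)=q(1-q)\le 1/4$ for $q=\frac{pe^h}{1-p+pe^h}$.
\item Taylor's theorem then gives $L(h)\le h^2/8$.
\end{itemize}
We apply this with $Y=X_j-E\in[-E,R-E]$, an interval of length $R$.

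\textbf{Step 3 (optimize $\lambda$).} Combining the two steps,
\begin{equation*}
\Pr{S\ge T\alpha}\le\exp\left(-\lambda T\alpha+\frac{T\lambda^2R^2}{8}\right).
\end{equation*}
Choosing $\lambda=4\alpha/R^2$ gives exactly $\exp(-2T\alpha^2/R^2)$. Dividing $S$ by $T$ translates this into the stated bound on $\left|\frac{1}{T}\sum_j X_j - E\right|$.

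The only nontrivial part is the calculus in Step 2, specifically the second-derivative bound $q(1-q)\le 1/4$. Everything else is routine.
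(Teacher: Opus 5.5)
Your proof is correct. It is the standard exponential-moment derivation, and the details check out: $L''(h)=q(1-q)\le 1/4$ gives Hoeffding's lemma for $Y=X_j-E\in[-E,R-E]$, and the choice $\lambda=4\alpha/R^2$ yields exactly $\exp(-2T\alpha^2/R^2)$ for each tail. The paper does not prove this statement; it quotes Hoeffding's inequality as a known fact. Your write-up therefore supplies the textbook argument the paper takes for granted, so there is no competing proof to compare against.

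The one substantive point is your explicit hypothesis $r_i\ge 0$, which is necessary rather than merely convenient. The statement puts $(\max_i r_i)^2$ where the squared range $(\max_i r_i-\min_i r_i)^2$ belongs, and without nonnegativity the inequality is false. For example, take $r_1=-10$ and $r_2=1$, each with probability $1/2$, with $T=100$ and $\alpha=1/2$. The deviation probability equals $\mathbb{P}(|K-50|\ge 5)$ for $K\sim B(100,1/2)$, which is about $0.37$, while the claimed bound is $2e^{-50}$. Nonnegativity does hold where the paper applies the fact, since the increments there are $1/p(S)$ or $0$. The fact is used correctly, but your assumption is exactly what makes it true.
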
 

\section{Improved Algorithms to Compute the Degree-Dominating Order}
\label{sec:fast-ddorder}

As recalled above, the sampling phase of the algorithm of \cite{Bourreau2024} requires just $k$ or $2k-1$ passes, depending on the implementation.
The preprocessing phase is the one that dominates the streaming complexity, with $O(\log n)$ passes or $O(\log n \log \log n)$ passes depending on the desired memory usage.
The idea of \cite{Bourreau2024}, described in \Cref{sec:ugs}, is to use every pass to detect and remove the subset of vertices with degree at least $\frac{\Delta}{1+\epsilon}$, where $\Delta$ is the current maximum degree of the graph.
This leads the maximum degree of $G$ to decrease by a multiplicative factor of $(1+\epsilon)$ at each pass, making the algorithm terminate within $O(\log_{1+\epsilon} n)$ passes, and at the same time the removal sequence yields precisely a $\frac{1}{1+\eps}$-DD ordering.

Our key challenge is to break through the $O(\log n)$ passes bound.
In order to do that we must, at each pass, make more progress than just removing the vertices with degree at least $\frac{\Delta}{1+\epsilon}$.
Note that, whenever \emph{any} vertex is removed, the degrees of the remaining vertices can change.
Therefore there is a delicate tradeoff between (a) how many vertices one removes, and (b) how good the current degree estimates still are.
In the approach of \cite{Bourreau2024} described above, this tradeoff still works, because after each vertex-removal pass the degrees are recomputed exactly using one additional pass.
In our case, however, we can make such a pass only $O(1/c)$ times.


Using similar ideas to~\cite{k-core-mpc}, we modify the algorithm of \cite{Bourreau2024} by computing, in a single pass, several approximations of the input graph, each one obtained by sub-sampling edges with a different probability.
We then employ these sub-sampled graphs to provide the degree estimates required by several subsequent vertex-removal passes of \cite{Bourreau2024}.
In this way we are essentially ``accelerating'' the algorithm of \cite{Bourreau2024} by performing several passes at once.
%

For the sake of readability, we start by introducing in \Cref{sec:fast-ddorder-1}  a ``warm-up'' algorithm which still uses $O\left(\frac{\log n}{\epsilon}\right)$ passes and $O\left(\frac{n}{\epsilon^2} \log \frac{n}{\epsilon \delta}\right)$ memory.
Then, in \Cref{sec:fast-ddorder-2}, we show how for any desired $c>0$ we can bring the passes down to $O\left(\frac{1}{c}\right)$ passes at the price of using $O\left(\frac{n^{1+c}}{\epsilon^3} \log \frac{n}{\epsilon \delta}\right)$ memory.

\subsection{A Simple Approach Using $O(\log n)$ Passes}
\label{sec:fast-ddorder-1}

The idea of the algorithm is as follows. We sample a directed graph $\tilde G$ by adding each direction of each edge of $G$ into $\tilde G$ with some probability $p$. By picking an appropriate $p$, we have that: (1) the number of edges in $\tilde G$ is small with high probability; and (2) the out-going degree of each vertex is a good estimate of its degree. As we sample the two directions of the original edge independently, the estimations of vertex degrees are independent, even if vertices are removed sequentially.



We state a warm-up algorithm in Algorithm \ref{alg:approxdd-fewer-1}. Let $U$ be the set of vertices of $G$, $\epsilon$ be the parameter of the DD order, and $\ddorder$ be an empty list. Define a probability $p$ and we sample $\tilde G$ from $G[U]$ as described above using one pass such that with high probability, $\tilde G$ contains $\tilde O\left(\frac{n}{\epsilon^2}\right)$ edges. The degree of a vertex $v$ in the subgraph $G[H]$ for any $H \subseteq U$ is then estimated by $\frac{1}{p}$ times the out-degree of $v$ in $\tilde G[H]$.
Then we execute a subroutine \PeelN to append vertices with large estimated degree to $\ddorder$ and remove them from $U$, where a vertex is said to have large estimated degree if its estimated degree is at least $\frac{\Delta}{1+\Theta(\epsilon)}$ for the maximum degree $\Delta$ of the current graph~$G$ (or its estimation).
Vertices are scanned in an arbitrary order. When a vertex~$v$ is considered, if its estimated degree in the current remaining graph is at least~$\frac{\Delta}{1+\Theta(\epsilon)}$, then we are supposed to add $v$ to $\ddorder$ and remove~$v$ together with its incident edges.
In this way, we claim that after an execution of \PeelN, only vertices with actual degree larger than $\frac{\Delta}{1+\epsilon}$ will be appended to $\ddorder$ and removed from $U$ and thus $\ddorder$ is always a $\frac{1}{1+\epsilon}$-DD order. Moreover, the remaining vertices have actual degree at most $\frac{\Delta}{1+\Theta(\epsilon)}$ so that the maximum degree of $G$ decreases by a factor $\frac{1}{1+\Theta(\epsilon)}$.

\begin{algorithm}
\caption{Approximate DD Order via Edge Sampling \Rmnum{1}}
\label{alg:approxdd-fewer-1}
\SetKwProg{Function}{Function}{}{}
\Function{\algoname{ApproxDD-Warmup}($G=(V,E),\epsilon, \delta$)}{
    $\ddorder \gets $ empty list \\
    $U \gets V$ \\
    $\hat \epsilon \gets \frac{\epsilon}{4+3\epsilon}$ \\
    $\Delta \gets n-1$  \Comment{Max degree estimate; alternatively, computed with one pass.} \\
	$\et \gets 1 + \epsilon$ \Comment{A removed vertex should have current degree at least $\frac{\Delta}{K}$} \\
	$T \gets \log_{1+\frac{\epsilon}{2}} n$ \Comment{Estimated number of iterations in while loop} \\
    \While{$U \neq \emptyset$}{
        Using one pass on the edge list, construct a directed graph $\tilde G$ on $U$ by sampling each direction of each edge in $G[U]$ independently with probability $p$: $\quad \quad p = \min \left\{ \frac{3 \et}{\hat \epsilon^2 \Delta} \ln  \frac{2nT}{\delta}, 1 \right\}$ \\
        $(U, \ddorder) \gets \PeelN(U,\tilde G, p, \Delta,\ddorder)$ \\
        $\Delta \gets \frac{\Delta}{1+ \frac{\epsilon}{2}}$
    }
    \Return $\ddorder$
}

\Function{\PeelN($U, \tilde G, p, \Delta, \ddorder$)}{
    $\alpha \gets \frac{3\epsilon}{4}$ \\
	$H \gets U$ \\
    \ForEach{$v \in U$ in an arbitrary order}{
        $\hat d(v) \gets \frac{1}{p} \times ($Out-degree of $v$ in $\tilde G[H])$ \Comment{Unbiased degree estimator in $G[H]$} \\
        \If{$\hat d(v) \ge \frac{\Delta}{1+\alpha}$}{
            Append $v$ to $\ddorder$ \\
            $H \gets H \setminus \{v\}$ \Comment{Remove vertex $v$} \\
        }
    }
    \Return $(H,\ddorder)$
}
\end{algorithm}

\begin{theorem}\label{theorem:approxdd-fewer-1}
With probability at least $1 - \delta$,
 Algorithm \ref{alg:approxdd-fewer-1} returns a $\frac{1}{1+\epsilon}$-DD order
using $O\left(\frac{\log n}{\epsilon}\right)$ passes and $O\left(\frac{n}{\epsilon^2} \log \frac{n}{\epsilon \delta}\right)$
words of memory.
\end{theorem}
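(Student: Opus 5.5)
The plan is to prove, by induction over the iterations of the while loop, an invariant saying two things. First, at the start of each iteration every vertex of $U$ has degree in $G[U]$ at most the current $\Delta$. Second, every vertex appended to $\ddorder$ so far had, when it was removed, degree in the remaining graph at least $\frac{\Delta}{1+\epsilon}$ times the maximum degree of that graph. The invariant holds initially because $\Delta=n-1$.

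\textbf{Concentration of a single estimate.} Inside \PeelN, when $v$ is scanned with current set $H$, the quantity $p\,\hat d(v)$ is the out-degree of $v$ in $\tilde G[H]$. This is distributed as $B(d(v|G[H]),p)$. The set $H$ at that moment is determined only by the sampled out-edges of previously scanned vertices, and these are independent of the out-edges of $v$. This is exactly why each direction of an edge is sampled independently.

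Conditioning on $H$, we apply \Cref{fact:Chernoff} with $N=\Delta/\et$ and error $\hat\epsilon$ to the two relevant cases:
\begin{itemize}
\item if $d(v|G[H]) \ge \Delta/\et$, then $\Pr{\hat d(v) \le (1-\hat\epsilon)\Delta/\et} \le \exp(-\hat\epsilon^2 p\Delta/(3\et))$;
\item if $d(v|G[H]) \le \Delta/\et$, then $\Pr{\hat d(v) \ge (1+\hat\epsilon)\Delta/\et} \le \exp(-\hat\epsilon^2 p\Delta/(3\et))$.
\end{itemize}
By the choice of $p$, each failure probability is at most $\frac{\delta}{2nT}$. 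When $p=1$ the estimates are exact. A union bound over the $n$ vertices, the $T$ iterations and the two cases gives overall failure probability at most $\delta$. To use Chernoff in a clean way, I would phrase the bounds against thresholds rather than against the actual degree.

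\textbf{Deriving the invariant.} Assume all estimates are good. The value $\hat\epsilon=\frac{\epsilon}{4+3\epsilon}$ and $\alpha=\frac{3\epsilon}{4}$ should be chosen so that three inequalities hold:
\begin{itemize}
\item $\frac{\Delta}{1+\alpha}\ge(1+\hat\epsilon)\frac{\Delta}{1+\epsilon}$, so any removed vertex has true current degree at least $\frac{\Delta}{1+\epsilon}$;
\item $\frac{\Delta}{1+\alpha}\le(1-\hat\epsilon)\frac{\Delta}{1+\epsilon/2}$, so any vertex with current degree at least $\frac{\Delta}{1+\epsilon/2}$ is removed;
\item the thresholds used for $N$ in the Chernoff step are adjusted accordingly, possibly splitting them into the two values $\Delta/(1+\epsilon)$ and $\Delta/(1+\epsilon/2)$.
\end{itemize}
Degrees only decrease as $H$ shrinks. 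So a vertex that is kept has final degree in $G[H]$ at most its degree at scan time, which is below $\frac{\Delta}{1+\epsilon/2}$. This gives the first part of the invariant for the next $\Delta$.

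For the second part, every $u\in H$ has degree at most $\Delta$ by the invariant. Hence the removed vertex satisfies $d(v|G(v))\ge\frac{\Delta}{1+\epsilon}\ge\frac{1}{1+\epsilon}d(u|G(v))$. I expect checking these constants, and carefully justifying the conditional independence under adaptive removal, to be the main obstacle.

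\textbf{Termination, passes and memory.} Once $\Delta<1$, all remaining degrees are $0$. Any vertex of degree $0$ satisfies the DD condition vacuously. So after $T=\log_{1+\epsilon/2}n=O(\frac{\log n}{\epsilon})$ iterations, a final pass (or simply appending the rest of $U$) finishes the order. This gives $O(\frac{\log n}{\epsilon})$ passes.

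For memory, each vertex has at most $\Delta$ incident edges in the current graph. Its sampled out-degree is therefore dominated by $B(\Delta,p)$ via \Cref{lemma:coupling}. By \Cref{fact:Chernoff}, with high probability (folded into the same union bound) each out-degree is $O(p\Delta)=O(\frac{1}{\epsilon^2}\log\frac{nT}{\delta})$. Summing over $n$ vertices gives $O(\frac{n}{\epsilon^2}\log\frac{n}{\epsilon\delta})$ words, plus $O(n)$ words for $U$ and $\ddorder$.
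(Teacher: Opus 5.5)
Your proposal is correct and follows essentially the paper's argument: Chernoff-bounded bad events for each vertex in each iteration, a union bound, and then the deterministic facts that every kept vertex has current degree below $\frac{\Delta}{1+\epsilon/2}$ while every removed vertex has current degree at least $\frac{\Delta}{\et}$. The paper bounds memory with a single Chernoff bound on the total number of sampled directed edges rather than per-vertex out-degrees, which makes no real difference. The threshold split you hedge on is in fact required, and it works because $\frac{1+\hat\epsilon}{1+\epsilon}=\frac{1}{1+\alpha}=\frac{1-\hat\epsilon}{1+\epsilon/2}$ exactly. Your explicit handling of the adaptive conditioning on $H$, and of isolated vertices (which \PeelN never removes, so the rest of $U$ must be appended once $\Delta<1$), is more careful than the paper's own proof.
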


\begin{proof}

\noindent\textbf{Proof structure.}
The proof separates the \emph{probabilistic} part from the \emph{deterministic} part.
In the probabilistic part, we first define a small set of \emph{bad events} that, if they occur, could violate either
(i) the memory bound, (ii) the pass bound, or (iii) correctness of the returned order.
We then bound the probability of each bad event using Chernoff bounds (\Cref{fact:Chernoff}) and
take a union bound over all vertices and all iterations.
In the deterministic part, we condition on the event that none of these bad events occur in the first $T$ iterations,
and prove deterministically that the algorithm (a) makes geometric progress in $\Delta$ and thus terminates in $T$ iterations,
(b) outputs a $\frac{1}{1+\epsilon}$-DD order, and (c) never exceeds the claimed space.

\noindent \textbf{Bad Events in One Iteration.}
Suppose the maximum degree of $G[U]$ at the beginning of \PeelN is at most~$\Delta$.
To simplify the notation,
for a vertex $v \in U$ at the moment it is
considered in \PeelN,
we use $d_v$ to denote its true degree in the remaining graph $G[H]$
and $\hat{d}_v$ to denote the unbiased estimation of $d_v$,
which has the distribution $\frac{1}{p} \times B(d_v, p)$.

The algorithm makes decisions using only the sampled graph $\tilde G$ and the estimates $\hat d_v$.
Thus, there are two distinct failure modes to control:
(1) $\tilde G$ can be too large (breaking the memory budget), and
(2) $\hat d_v$ can deviate enough to change a peel/keep decision (breaking progress or correctness).
Accordingly, we consider the following bad events.

\begin{description}

\item[Type \Rmnum{1}:] The sampled directed graph $\tilde G$ has
more than $(1 + \hat{\epsilon})\cdot n \Delta p = \Theta\left(\frac{n}{\epsilon^2} \log \frac{n}{\epsilon \delta}\right)$ edges.
By Fact~\ref{fact:Chernoff},
this happens with probability at most $\exp\left(-\frac{1}{3} \hat{\epsilon}^2 n \Delta p\right) \leq
\frac{\delta}{2 T}$.

\noindent\emph{Intuition.}
The number of sampled directed edges is a sum of independent Bernoulli variables.
At the start of the iteration, the total number of directed edges in $G[U]$ is at most $n\Delta$,
and each is kept with probability $p$, so the expectation is at most $n\Delta p$.
A multiplicative Chernoff bound then shows that exceeding $(1+\hat\epsilon)n\Delta p$ has exponentially small probability,
which is why choosing $p$ so that $n\Delta p=\Theta\!\left(\frac{n}{\epsilon^2}\log\frac{n}{\epsilon\delta}\right)$
suffices for a union bound over $T$ iterations.

\item[Type \Rmnum{2}:] For a vertex~$v$,
if $d_v \geq \frac{\Delta}{K}$, the bad event is
$\hat{d}_v < (1 - \hat{\epsilon}) \cdot d_v$;
if $d_v < \frac{\Delta}{K}$,
the bad event is
$\hat{d}_v \geq (1 + \hat{\epsilon}) \cdot \frac{\Delta}{K}$.
By Fact~\ref{fact:Chernoff},
the bad event for each vertex is at most
$\exp\left(-\frac{1}{3} \hat{\epsilon}^2 \frac{\Delta p}{K}\right)  \leq \frac{\delta}{2nT}$.

\noindent\emph{Intuition.}
We split Type~\Rmnum{2} into two cases because the harmful direction of estimation error depends on the true degree.
If $d_v$ is large ($d_v\ge \Delta/K$), then underestimating it could prevent peeling a vertex that should be removed,
slowing the decrease of $\Delta$.
If $d_v$ is small ($d_v< \Delta/K$), then overestimating it could wrongly peel $v$,
which would violate the DD property.
The Chernoff bound is applied to the binomial $B(d_v,p)$, and we upper bound the failure probability uniformly by
lower bounding the relevant mean by $(\Delta/K)p$, enabling a union bound over all vertices and all iterations.
\Cref{fig:bad-events} helps to illustrate the area of type \Rmnum{2} bad events.

\end{description}

\begin{figure}
  \centering
  \begin{tikzpicture}[scale=2.5]
    \pgfmathsetmacro\border{1.7}
    \pgfmathsetmacro\xA{0.6}
    \pgfmathsetmacro\xC{1.4}
    \pgfmathsetmacro\yB{1}
    \pgfmathsetmacro\k{\yB/\xC}
    \pgfmathsetmacro\eps{0.02}

    \fill[gray!30] (0,\yB) rectangle (\xA,\border-\eps);
    \fill[gray!30] (\xA,0) -- (\xA,\k*\xA) -- plot[domain=\xA:\border-\eps] (\x,{\k*\x}) -- (\border-\eps,0) -- cycle;
    \fill[orange!20] (0,0) rectangle (\xA,\yB);
    \fill[blue!20] (\xA,\yB) -- (\xA,{\k*\xA}) -- plot[domain=\xA:\xC] (\x,{\k*\x}) -- cycle;

    \draw[->,thick] (0,0) -- (\border,0) node[below] {$d_v$};
    \draw[->,thick] (0,0) -- (0,\border) node[left] {$\hat d_v$};
    \draw (\xA,0.05) -- (\xA,-0.05) node[below] {$\dfrac{\Delta}{K}$};
    \draw (\xC,0.05) -- (\xC,-0.05) node[below] {$\dfrac{\Delta}{1+\frac{\epsilon}{2}}$};
    \draw (0.05,\yB) -- (-0.05,\yB) node[left] {$\dfrac{\Delta}{1+\alpha}$};

    \draw[dashed,domain=0:\border-\eps] plot (\x,{\k*\x});

    \filldraw (\xA,\yB) circle (0.5pt);
    \filldraw (\xC,\yB) circle (0.5pt);

    \draw[dashed] (0,\yB) -- (\xC,\yB);
    \draw[dashed] (\xC,\yB) -- (\xC,0);
    \draw[dashed] (\xA,\border-\eps) -- (\xA,0);

    \node at ({\border+0.2},{\k*\border+0.1}) {$\hat d_v = (1-\hat\epsilon)d_v$};
    \node at ({\xA/2}, {(\yB+\border)/2}) {bad};
    \node at (\xC, {\yB/2-0.1}) {bad};
    \node at ({\xA/2}, {\yB/2+0.1}) {must};
    \node at ({\xA/2}, {\yB/2}) {remain};
    \node at (\xA+0.2, \yB-0.2) {small};
  \end{tikzpicture}
  \caption{The relation between the true degree $d_v$ and the estimated degree $\hat{d}_v$. The area of bad events is filled in gray. The blue triangle area indicates that if $v$ is not removed in \PeelN, we have $d_v \le \frac{\Delta}{1+\frac{\epsilon}{2}}$. The orange rectangle area indicates that if $d_v < \frac{\Delta}{K}$, $v$ must remain after \PeelN.} \label{fig:bad-events}
\end{figure}
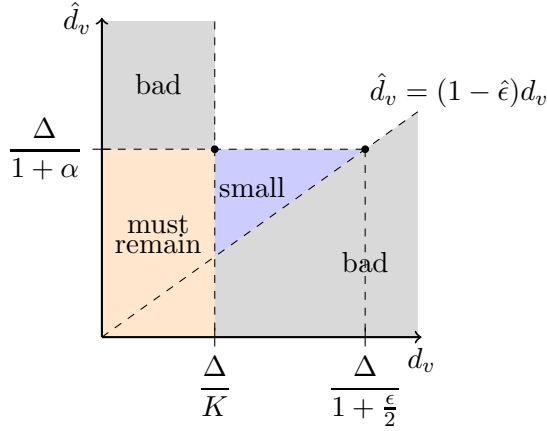

The union of all bad events within one iteration has probability at most $\frac{\delta}{T}$.
Therefore, it suffices to show that if no bad event happens within the first $T$ iterations, then the algorithm must terminate and return a correct result within the required memory.

\noindent \textbf{The Number of Passes.} We consider one execution of \PeelN.
If no type \Rmnum{2} bad event happens, we argue that if a vertex~$v$ is not removed,
it must be the case that at the moment it is considered,
we have $d_v \leq \frac{\Delta}{1 + \frac{\epsilon}{2}}$,
whose degree may drop even further as other vertices are removed later.

We do not need to worry about the case $d_v < \frac{\Delta}{K} < \frac{\Delta}{1 + \frac{\epsilon}{2}}$.
For the case $d_v \geq \frac{\Delta}{K}$, since $v$ is not removed, we have $\hat{d}_v < \frac{\Delta}{1 + \alpha}$.
Moreover, the type \Rmnum{2} bad event at $v$ does not happen, and so we have $d_v \leq \frac{\hat{d}_v}{1 - \hat{\epsilon}} <\frac{\Delta}{(1+\alpha)(1 - \hat{\epsilon})} = \frac{\Delta}{1 + \frac{\epsilon}{2}}$. This corresponds to the blue triangle area in \Cref{fig:bad-events}.

Therefore, If no type \Rmnum{2} bad event happens within the first $T$ iterations, any vertex that survives the scan already has degree at most $\frac{\Delta}{1 + \frac{\epsilon}{2}}$ at the moment it is considered.
This means that $\Delta$ will drop below 1 after $T = \log_{1 + \frac{\epsilon}{2}} n = O\left(\frac{\log n}{\epsilon}\right)$ iterations at which point the algorithm must terminate.
Since each iteration uses 1 pass to build $\tilde G$, the total passes are $T$.

\noindent\textbf{Correctness.}
Within one iteration of \PeelN, it suffices to show that only vertices~$v$ satisfying $d_v \geq \frac{\Delta}{K}$ could possibly be removed, since for any other vertex $u$ remaining before this iteration, we have $d(u|G(v)) \le \Delta$ and hence $d(v|G(v)) = d_v \ge \frac{\Delta}{K} \ge \frac{1}{1+\epsilon} d(u|G(v))$.

Since the type \Rmnum{2} bad event for a vertex $v$ satisfying $d_v < \frac{\Delta}{K}$ does not happen,
we must have $\hat{d}_v < (1 + \hat{\epsilon}) \cdot \frac{\Delta}{K} = \frac{\Delta}{1 + \alpha}$,
which means such a vertex~$v$ will not be appended to $\ddorder$ when being considered.
This corresponds to the orange rectangle area in \Cref{fig:bad-events}.

Therefore, if no type \Rmnum{2} bad event happens within the first $T$ iterations, then $\ddorder$ is a $\frac{1}{1+\epsilon}$-DD order.

\noindent\textbf{Memory.}
The memory usage is dominated by the number of edges in the sampled graph $\tilde G$ in each iteration.
Since the type \Rmnum{1} bad event does not happen within the first $T$ iterations, we conclude that the number of edges sampled in each iteration is at most $O\left(\frac{n}{\epsilon^2} \log \frac{n}{\epsilon \delta}\right)$.
\end{proof}

\subsection{From $O(\log n)$ to $O(1/c)$ Passes}
\label{sec:fast-ddorder-2}

One idea to decrease the number of passes in Algorithm~\ref{alg:approxdd-fewer-1}
is to generate and remember several sampled directed graphs during a single pass,
at the cost of using more memory.
In \Cref{alg:approxdd-fewer-2}, instead of having one pass per iteration, a pass is made every $q = \lfloor \log_{1+\frac{\epsilon}{2}} n^c \rfloor$ iterations for some constant $c>0$. The sampling probabilities are computed by the estimated maximum degree for each graph assuming its dropping by a factor of $1+\frac{\epsilon}{2}$. The same subroutine \PeelN is executed on the sampled graphs sequentially.
By choosing the appropriate $c$, we ensure that
it decreases the number of passes to $O\left(\frac{1}{c}\right)$,
and each pass can lead to a factor~$n^c$ decrease in the maximum degree~$\Delta$.

\begin{algorithm}[H]
\caption{Approximate DD Order via Edge Sampling \Rmnum{2}}
\label{alg:approxdd-fewer-2}
\SetKwProg{Function}{Function}{}{}
\Function{\ApproxDDb($G=(V,E),\epsilon, \delta, c$)}{
    $\ddorder \gets $ empty list \\
    $U \gets V$ \\
    $\hat \epsilon \gets \frac{\epsilon}{4+3\epsilon}$ \\
    $\Delta \gets n-1$  \Comment{Max degree estimate; alternatively, computed with one pass.} \\
	$\et \gets 1 + \epsilon$ \\
	$T \gets \log_{1+\frac{\epsilon}{2}} n$ \Comment{Estimated number of sampled directed graphs} \\
    $q, i \gets \lfloor \log_{1+\frac{\epsilon}{2}} n^c \rfloor$ \\
		
    \While{$U \neq \emptyset$}{
        \If{$i=q$}{
            \For(\Comment{Using one pass on edge list, sample $q$ directed graphs.}){$j$ from $0$ to $q-1$}{
				$\Delta_j \gets \frac{\Delta}{(1 + \frac{\epsilon}{2})^j}$ \\
				$p_j = \min \left\{ \frac{3 \et}{\hat \epsilon^2 \Delta_j} \ln  \frac{2nT}{\delta}, 1\right\}$ \\
				Construct a directed graph $\tilde G_j$ on $U$ by sampling each direction of each edge in $G[U]$ independently with probability $p_j$
			}
			$i \gets 0$
        }
        $(U, \ddorder) \gets \PeelN(U,\tilde G_i, p_i, \Delta_i, \ddorder)$ \Comment{Same subroutine as in \Cref{alg:approxdd-fewer-1}} \\
        $i \gets i+1$ \\
		\If{$i = q$} {
			$\Delta \gets \frac{\Delta_{q-1}}{1 + \frac{\epsilon}{2}}$
		}
    }
    \Return $\ddorder$
}
\end{algorithm}

\begin{theorem}\label{theorem:approxdd-fewer-2}
With probability at least $1 - \delta$,
 Algorithm~\ref{alg:approxdd-fewer-2} returns a $\frac{1}{1+\epsilon}$-DD order
using $O\left(\frac{1}{c}\right)$ passes and $O\left(\frac{n^{1+c}}{\epsilon^3} \log \frac{n}{\epsilon \delta}\right)$
words of memory.
\end{theorem}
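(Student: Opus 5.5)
The plan is to reduce Theorem~\ref{theorem:approxdd-fewer-2} to the proof of Theorem~\ref{theorem:approxdd-fewer-1}. The key observation is that Algorithm~\ref{alg:approxdd-fewer-2} executes exactly the same sequence of \PeelN calls as Algorithm~\ref{alg:approxdd-fewer-1}, with the same thresholds $\Delta_j = \Delta/(1+\frac{\epsilon}{2})^j$ and the same probabilities $p_j$. The only difference is that the $q$ sampled graphs for a block are drawn in advance, on the vertex set $U$ present at the start of the block, rather than one at a time.

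I will index the sampled graphs globally as $\tilde G_1,\dots,\tilde G_T$. For each of them I reuse the Type~\Rmnum{1} and Type~\Rmnum{2} bad events, with one change in the Type~\Rmnum{1} event: the bound on its edge count is now $(1+\hat\epsilon)\, n\Delta_j p_j$, where $\Delta_j$ is an upper bound on the maximum degree of $G[U]$ at the time the graph is sampled.

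The crucial independence point is this. Each $\tilde G_j$ is sampled on a superset $U'$ of the set $H$ on which it is later used. For a vertex $v$ in $H$, the out-degree of $v$ in $\tilde G_j[H]$ is therefore still distributed as $B(d(v|G[H]),p_j)$, because every edge direction is sampled independently and $H$ is determined by the randomness of earlier graphs and of earlier out-edges of other vertices. I will verify this carefully. Within a single \PeelN call the argument is the same as in the warm-up. Across the graphs of one block, $\tilde G_j$ is independent of $\tilde G_0,\dots,\tilde G_{j-1}$, and these determine $H$, so the conditioning is harmless.

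The Chernoff computations from the proof of Theorem~\ref{theorem:approxdd-fewer-1} then carry over verbatim, with $\Delta_j$ in place of $\Delta$. The choice of $p_j$ makes each Type~\Rmnum{2} event have probability at most $\frac{\delta}{2nT}$. After a union bound over the at most $T$ graphs and $n$ vertices, all bad events are avoided with probability at least $1-\delta$.

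Conditioned on no bad event, the deterministic part follows by induction on the global iteration index. If the maximum degree of $G[U]$ is at most $\Delta_j$ when \PeelN runs on $\tilde G_j$, then, exactly as in the ``Number of Passes'' argument, every surviving vertex has degree at most $\Delta_j/(1+\frac{\epsilon}{2}) = \Delta_{j+1}$. This invariant continues across block boundaries because the reset sets $\Delta \gets \Delta_{q-1}/(1+\frac{\epsilon}{2})$. Correctness of the $\frac{1}{1+\epsilon}$-DD order is then the same orange-region argument as before.

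Passes: after $T=\log_{1+\epsilon/2} n$ peels, $\Delta<1$ and hence $U=\emptyset$. The number of passes is $\lceil T/q\rceil = O(\log n/\log n^c)=O(1/c)$; I will note that $q\ge 1$ requires $n^c \ge 1+\frac{\epsilon}{2}$.

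Memory is the step I expect to be the main obstacle. In one pass all $q$ graphs must be stored at once, and the sizes are not uniform: $\tilde G_j$ has about $n\Delta_j p_j$ edges. Since $\Delta_j p_j=\Theta(\frac{K}{\hat\epsilon^2}\ln\frac{nT}{\delta})$, this is the same bound of $O(\frac{n}{\epsilon^2}\log\frac{n}{\epsilon\delta})$ for every $j$. The subtlety is that this bound needs the maximum degree of $G[U]$ at sampling time to be at most $\Delta_0$, not $\Delta_j$. Using $\Delta_0$ would give a sampled edge count of $n\Delta_0 p_j = (1+\frac{\epsilon}{2})^j n\Delta_j p_j$, which for $j$ near $q$ is up to $n^c$ times larger.

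To handle this, I will bound the expected edge count of each $\tilde G_j$ by $n\Delta_0 p_j \le n^{c}\cdot n\Delta_jp_j$. I then apply Fact~\ref{fact:Chernoff} with $N=n\Delta_0$, redefining the Type~\Rmnum{1} event with this larger mean; its probability only decreases. Summing over the $q=O(\frac{c\log n}{\epsilon})$ graphs in a block gives a geometric series in $(1+\frac{\epsilon}{2})^j$. That series is $O(n^c/\epsilon)$ times a single graph's size, for a total of $O(\frac{n^{1+c}}{\epsilon^3}\log\frac{n}{\epsilon\delta})$ words, which matches the claimed bound.
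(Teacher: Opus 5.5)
Your proposal is correct and follows essentially the paper's route. The paper gets correctness and the $O(T/q)=O(1/c)$ pass bound by coupling the per-iteration edge coins of Algorithms~\ref{alg:approxdd-fewer-1} and~\ref{alg:approxdd-fewer-2} and invoking Theorem~\ref{theorem:approxdd-fewer-1}, which your independence argument for out-degrees in $\tilde G_j[H]$ makes explicit, and it bounds memory by the same geometric sum $\sum_{j<q} n\Delta_0 p_j = O\bigl(\frac{n^{1+c}}{\epsilon^3}\log\frac{n}{\epsilon\delta}\bigr)$, charging each $\tilde G_j$ against the start-of-pass bound $\Delta_0$ as you do.

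One small caveat, which the paper shares: the threshold $\Delta/(1+\alpha)$ is always positive, so \PeelN never removes a vertex of current degree $0$. The step ``$\Delta<1$ hence $U=\emptyset$'' should therefore read ``once $\Delta<1$, all remaining vertices are isolated and can be appended in any order without affecting the DD property''.
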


\begin{proof}
Observe that the difference between Algorithms~\ref{alg:approxdd-fewer-1} and~\ref{alg:approxdd-fewer-2} is that in Algorithm~\ref{alg:approxdd-fewer-2}, during one pass of the edge lists, random directed graphs from several iterations are sampled and saved.

\noindent\textbf{Correctness and the Number of Passes.}
The key idea is a coupling argument via the directed edge sampling in each iteration: even though \Cref{alg:approxdd-fewer-2} samples multiple graphs in one pass, it can be seen as using exactly the same randomness that \Cref{alg:approxdd-fewer-1} would have used over the corresponding iterations.

Therefore, the two algorithms produce exactly the same order~$\ddorder$, and the correctness follows from \Cref{theorem:approxdd-fewer-1}.
As before, given that there is no bad event in the first $T$ iterations, the algorithm must terminate and uses $O(\frac{T}{q}) = O(\frac{1}{c})$ passes.

\noindent\textbf{Memory.}
Recall that in each pass, we sample and save $q$ directed subgraphs.
Assuming that no bad event has occurred in previous passes, the maximum degree of $G[U]$ is at most $\Delta$ at the beginning of the current pass.
For the $i$-th graph, the expected number of sampled directed edges is on the order of $n\Delta \cdot p_i$.
Hence, in this pass, the expected number of sampled edges in all the $q$ subgraphs is at most:

\begin{equation*}
\sum_{i=0}^{q-1} n \Delta p_i \le
O\left(\frac{n^{1+c}}{\epsilon^3} \log \frac{n}{\epsilon \delta}\right).
\end{equation*}

To get a high probability statement for memory usage,
observe that we are no worse than the situation in \Cref{theorem:approxdd-fewer-1}.
First, the expected number
of saved edges is larger in each pass, which means Chernoff Bound will give a smaller
failure probability for each pass.
Second, the number of passes are fewer, which means there are fewer terms
in the union bound.

In conclusion, with probability at least~$1-\delta$,
all the required properties as stated are satisfied.
\end{proof}

\ignore{
\section{Estimating the bucket sizes}

Let $S$ be a set. Suppose one can (i) sample from $S$ according to some distribution $p$, and (ii) compute $p(g)$ for any given $g \in S$. Let $X$ be a random variable distributed according to $p$, and define $N=\frac{1}{p(X)}$; clearly one can obtain a realization of $N$ through (i) and (ii) above. Note that:
\begin{align}
    \E[N] = \sum_{g \in S} \Pr{X=g} \cdot \E[N|X=g] = \sum_{g \in S} p(g) \cdot \frac{1}{p(g)} = |S|
\end{align}
Thus $N$ is an unbiased estimator of $|S|$. By this argument, after our preprocessing phase one can easily estimate the size of each bucket by using our sampling routines.

In the same way one can estimate the number of graphlets isomorphic to a connected $k$-vertex graph $H_i$. To this end, let set $S$ be the set of all graphlets, and define:
\begin{align}
    N_i =\frac{\Ind{X \simeq H_i}}{p(X)}
\end{align}
It is immediate to see that:
\begin{align}
    \E[N_i] = \sum_{g \in S_i} p(g) \cdot \frac{1}{p(g)} = |S_i|
\end{align}
where $S_i$ is the subset of graphlets isomorphic to $H_i$.
}

\section{Approximating the $k$-Graphlet Distribution}
\label{sec:graphlet-distribution}

As stated in \Cref{sec:ugs}, provided a $\frac{1}{1+\epsilon}$-DD order $\ddorder$ of a graph $G$, one can compute an initial distribution $\bp$ and use it to sample a $k$-graphlet $S$ of $G$ with probability $p(S)$. In this section, we complete the task of approximating the $k$-graphlet distribution $\bmu_k = (\mu_1, \cdots, \mu_{m_k})$ of $G$ based on this $k$-graphlet sampling.

\subsection{Sampling with Counters}
\label{sec:gldist-counter}

The approximation to the $k$-graphlet distribution is described in \Cref{alg:gd}. Using \Cref{alg:SampleGraphlet}, a graphlet $S$ can be sampled with some probability $p(S)$ computed by the initial distribution $\bp$. The original algorithm in \cite{Bourreau2024} further achieves uniform sampling by accepting the sampling of $S$ with probability $\frac{\Gamma}{p(S)}$ for a constant $\Gamma$ smaller than every $p(S)$. In our algorithm, we do not reject any sampling,
but employ the technique of Horvitz–Thompson estimators.
We maintain a collection of counters $c_1, \cdots, c_{m_k}$ for isomorphism classes $1, \cdots, m_k$ respectively. When we sample a graphlet $S$, recall that there is precisely one integer $r \in [1, m_k]$ such that $g_r$ is isomorphic to $S$. We then add $\frac{1}{p(S)}$ to $c_r$. The graphlet distribution is then obtained by normalizing these counters.


\begin{algorithm}
	\caption{Approximate $k$-Graphlet Distribution via Counters (Abstract Description)}
	\label{alg:gd}
	\SetKwProg{Function}{Function}{}{}
	\Function{\gd($G, \ddorder, k, \bp, T$)}{ 
        $c_1, \cdots, c_{m_k} \gets 0$ \\
        \For(\textbf{(sequentially/in parallel)}){$i \gets 1$ \KwTo $T$}{
            $S \gets$ \SampleGraphlet($G, \ddorder, k, \bp$) \Comment{Sample a $k$-graphlet $S$ with probability $p(S)$} \\
            $r \gets$ the index such that $S$ is isomorphic to $\graphlet_r$ \\
            $c_r \gets c_r + \frac{1}{p(S)}$  
        }
        $\hat C \gets {\sum_{i=1}^{m_k} c_i}$ \\
        $\hat \mu_i \gets \frac{c_i}{\hat C}$ \textbf{for each} $1 \le i \le m_k$ \\
        \Return $\hat \mu_1, \cdots, \hat \mu_{m_k}$
	}
\end{algorithm}

Recall that given a $\frac{1}{1+\epsilon}$-DD order $\prec$, $Z$ defined in \Cref{sec:ugs} is the sum of $d(v|G(v))^{k-1}$ over those vertices $v$ with $N_v>0$. We will use the property that both $\frac{1}{p(S)}$ and the number $L$ of graphlets in $G$ are bounded by $Z$.

\begin{fact}{(\cite[Lemma C.1]{Bourreau2024})}\label{lemma:bound-on-pS}
For each $k$-graphlet $S$, it holds that
\begin{equation*}
    \frac{1}{p(S)} \le (k-1)! (1+\epsilon)^{k-1} Z.
\end{equation*}
\end{fact}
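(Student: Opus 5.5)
We want to show that $\frac{1}{p(S)} \le (k-1)!\,(1+\epsilon)^{k-1} Z$ for every $k$-graphlet $S$. The plan is to lower bound $p(S)$ by following one specific way \SampleGraphlet can produce $S$.

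\textbf{Step 1: the starting vertex.} Let $v$ be the $\prec$-minimal vertex of $S$. Then $S \subseteq V(G(v))$, and $G[S]$ is a connected $k$-vertex subgraph of $G(v)$ containing $v$. Hence $N_v>0$ and
\[
p(v) = \frac{d(v|G(v))^{k-1}}{Z}.
\]
Since every starting vertex and every growth step are part of one random process, we have $p(S) \ge p(v)\cdot p(S|v)$.

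\textbf{Step 2: fixing an insertion order.} Because $G[S]$ is connected, we can fix an ordering $v=s_1,s_2,\dots,s_k$ of $S$ in which each $s_{i+1}$ has a neighbour in $S_i=\{s_1,\dots,s_i\}$. (A BFS from $v$ in $G[S]$ gives such an ordering.) The probability of growing exactly along this sequence is
\[
\prod_{i=1}^{k-1} \frac{e(S_i,s_{i+1})}{e_i} \;\ge\; \prod_{i=1}^{k-1}\frac{1}{e_i},
\]
where $e_i$ is the number of edges of $G(v)$ with exactly one endpoint in $S_i$.

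\textbf{Step 3: bounding the cut sizes.} We have $e_i \le \sum_{u\in S_i} d(u|G(v))$. Every $u\in S_i$ satisfies $u\succeq v$, so the $\frac{1}{1+\epsilon}$-DD property of Definition~\ref{def:ddorder} gives
\[
d(u|G(v)) \le (1+\epsilon)\, d(v|G(v)).
\]
This applies because $d(v|G(v))\ge1$, which holds since $v$ has a neighbour in $S$. Therefore $e_i \le i(1+\epsilon)\,d(v|G(v))$, and
\[
p(S|v)\ge \frac{1}{(k-1)!\,(1+\epsilon)^{k-1}\, d(v|G(v))^{k-1}}.
\]

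\textbf{Step 4: combining.} Multiplying the bounds from Steps 1 and 3, the factor $d(v|G(v))^{k-1}$ cancels:
\[
p(S) \ge \frac{1}{(k-1)!\,(1+\epsilon)^{k-1} Z},
\]
which is the claimed bound.

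The only delicate point is Step 1. One has to check that the probability of reaching $S$ can be lower bounded by fixing a single start and a single insertion sequence; this holds because all other ways of producing $S$ only add nonnegative probability. One also has to check that the DD guarantee is invoked with respect to the correct suffix graph $G(v)$. Everything else is direct counting.
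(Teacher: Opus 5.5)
Your proof is correct and is essentially the standard argument behind this fact; the paper does not reprove it but imports it from Lemma C.1 of Bourreau et al. That argument starts at $v=\min_\prec S$, follows one connected insertion order, and bounds each cut $e_i$ by $i(1+\epsilon)\,d(v|G(v))$ via the $\frac{1}{1+\epsilon}$-DD property, so that $d(v|G(v))^{k-1}$ cancels against $p(v)$. As a minor sharpening: every added vertex lies in $G(v)$, so the starting vertex is forced to be $\min_\prec S$, and in fact $p(S)=p(v)\,p(S|v)$ holds with equality rather than just $\ge$.
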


\begin{lemma}\label{lemma:bound-on-L}
It holds that
\begin{equation*}
    L \ge \frac{Z}{(k-1)^{k-1}}.
\end{equation*}
\end{lemma}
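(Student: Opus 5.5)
We need to show $L \ge Z/(k-1)^{k-1}$, where $Z=\sum_{v:N_v>0} d(v|G(v))^{k-1}$. The plan is a charging argument: assign each $k$-graphlet of $G$ to the $\prec$-smallest of its vertices. Then every graphlet is counted exactly once, and the graphlets charged to $v$ are exactly the $k$-graphlets of $G(v)$ that contain $v$. Hence $L=\sum_{v} N_v$. It therefore suffices to prove, for each $v$ with $N_v>0$,
\[
N_v \ge \left(\frac{d(v|G(v))}{k-1}\right)^{k-1},
\]
and then sum over $v$.

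\textbf{Case 1: $d=d(v|G(v)) \ge k-1$.} Every $(k-1)$-subset of the neighbours of $v$ in $G(v)$, together with $v$, induces a connected subgraph, since it contains a star centred at $v$. These $k$-sets are distinct, so $N_v \ge \binom{d}{k-1}$. The standard bound
\[
\binom{d}{k-1}=\prod_{i=0}^{k-2}\frac{d-i}{k-1-i}\ge \left(\frac{d}{k-1}\right)^{k-1}
\]
holds because $\frac{d-i}{k-1-i}\ge \frac{d}{k-1}$ whenever $d\ge k-1$. This gives the desired inequality.

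\textbf{Case 2: $d<k-1$ but $N_v>0$.} Here $(d/(k-1))^{k-1} < 1 \le N_v$, so the inequality holds trivially. This is exactly why $Z$ restricts to vertices with $N_v>0$. Vertices with $N_v=0$ contribute nothing to $Z$ and nothing to $L$, so they can be ignored.

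Summing over $v$ gives $L \ge \sum_{v:N_v>0} d(v|G(v))^{k-1}/(k-1)^{k-1} = Z/(k-1)^{k-1}$. The only delicate step is making the decomposition $L=\sum_v N_v$ precise, i.e.\ confirming that $N_v$ counts exactly the graphlets whose $\prec$-minimum vertex is $v$. Note that the DD property is not needed for this direction of the bound.
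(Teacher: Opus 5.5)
Your proof is correct and follows the paper's route. You decompose $L=\sum_{v:N_v>0}N_v$ and apply the per-vertex bound $N_v \ge d(v|G(v))^{k-1}/(k-1)^{k-1}$; the paper simply cites that bound from \cite{B23} (Lemma 24), whereas you prove it directly via the star count $N_v \ge \binom{d}{k-1}$ and the trivial case $d<k-1$, correctly noting that the DD property is not needed.
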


To prove \Cref{lemma:bound-on-L}, we need the following lemma from \cite{B23}.

\begin{fact}{(\cite[Lemma 24]{B23})}\label{lemma:b23}
For every $v \in V$ such that $N_v > 0$, we have
\begin{equation*}
  N_v \ge \frac{d(v|G(v))^{k-1}}{(k-1)^{k-1}}.
\end{equation*}
\end{fact}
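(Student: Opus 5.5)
The plan is a direct counting argument inside $H = G(v)$. Write $d = d(v|G(v))$. By definition, $N_v$ counts the vertex subsets $S \subseteq V(H)$ with $|S| = k$ and $v \in S$ such that $H[S]$ (equivalently $G[S]$, since $G(v)$ is an induced subgraph) is connected. The DD property of $\ddorder$ will not be needed; the bound holds for any total order. I will split into two cases according to whether $v$ has at least $k-1$ neighbours in $G(v)$.

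\emph{Case $d \ge k-1$.} Take any $(k-1)$-subset $W$ of the neighbours of $v$ in $G(v)$ and set $S = \{v\} \cup W$. Then $G[S]$ contains a spanning star centred at $v$, so it is connected and is a $k$-graphlet of $G(v)$ containing $v$. Distinct choices of $W$ give distinct sets $S$, hence $N_v \ge \binom{d}{k-1}$. It remains to check $\binom{d}{k-1} \ge \left(\frac{d}{k-1}\right)^{k-1}$. Write
\begin{equation*}
\binom{d}{k-1} = \prod_{i=0}^{k-2} \frac{d-i}{k-1-i}.
\end{equation*}
For each factor, $\frac{d-i}{k-1-i} \ge \frac{d}{k-1}$ is equivalent to $(k-1)(d-i) \ge d(k-1-i)$, that is, to $i\,d \ge i(k-1)$. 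This holds because $d \ge k-1$. Multiplying the $k-1$ factors gives the claim.

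\emph{Case $d < k-1$.} Here $\frac{d}{k-1} < 1$, so the right-hand side satisfies $\frac{d^{k-1}}{(k-1)^{k-1}} < 1$. This includes $d = 0$, where the right-hand side is $0$. Since $N_v$ is a positive integer by hypothesis, $N_v \ge 1$ exceeds the bound.

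The only step that needs any care is the factor-by-factor comparison in the first case, and even that is elementary. The one subtle point is that the hypothesis $N_v > 0$ is used only in the low-degree case; without it, a vertex with $1 \le d < k-1$ lying in a small component of $G(v)$ would violate the inequality.
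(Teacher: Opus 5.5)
Your proof is correct. It is the standard argument behind \cite[Lemma 24]{B23}, which the paper imports without reproving: count the $\binom{d}{k-1} \ge \left(\frac{d}{k-1}\right)^{k-1}$ stars centred at $v$ in $G(v)$ when $d = d(v|G(v)) \ge k-1$, and fall back on $N_v \ge 1 > \left(\frac{d}{k-1}\right)^{k-1}$ when $d < k-1$. You are also right that the DD property plays no role in this bound, and that the hypothesis $N_v > 0$ is needed only in the low-degree case.
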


\begin{proof}[Proof of \Cref{lemma:bound-on-L}]
By \Cref{lemma:b23}, we have
\begin{equation*}
   L= \sum_{v \in V, \ N_v>0} N_v \ge \sum_{v \in V, \ N_v>0} \frac{d(v|G(v))^{k-1}}{(k-1)^{k-1}} = \frac{Z}{(k-1)^{k-1}}. \qedhere
\end{equation*}
\end{proof}

We give the analysis of \Cref{alg:gd}.

\begin{lemma}\label{lemma:bound-on-hatC}
For any $\frac{1}{1+\epsilon}$-DD order, $k$ and $0 < \epsilon_0, \delta < 1$, let $T$ be a positive integer satisfying
\begin{equation*}
  T \ge \frac{3 (k-1)! (k-1)^{k-1} (1+\epsilon)^{k-1}}{\epsilon_0^2} \ln{\frac{2}{\delta}},
\end{equation*}
then \Cref{alg:gd} computes $\hat C$ such that $\left|\frac{\hat C}{T} - L\right| < \epsilon_0 L$ with probability at least $1-\delta$.
\end{lemma}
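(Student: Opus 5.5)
We view $\hat C$ as a sum of $T$ i.i.d.\ Horvitz--Thompson terms and apply the multiplicative Chernoff bound of \Cref{fact:Chernoff2}. For the $i$-th iteration of \Cref{alg:gd}, let $X_i = \frac{1}{p(S_i)}$, where $S_i$ is the graphlet returned by \SampleGraphlet. Each sample adds exactly $\frac{1}{p(S_i)}$ to exactly one counter, so $\hat C = \sum_{i=1}^T X_i$. The $X_i$ are independent, since each call to \SampleGraphlet uses fresh randomness. They are identically distributed: $X_i$ takes value $\frac{1}{p(S)}$ with probability $p(S)$, where $S$ ranges over the $k$-graphlets with $p(S)>0$.

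\textbf{Step 1 (unbiasedness).} We need $E = \E[X_i] = \sum_{S:\,p(S)>0} p(S)\cdot\frac{1}{p(S)} = L$, i.e.\ every $k$-graphlet of $G$ has positive sampling probability. Take any $k$-graphlet $S$ and let $v$ be its $\prec$-minimal vertex. Then $S$ lies in $G(v)$ and contains $v$, so $N_v>0$ and hence $p(v)>0$. Since $G[S]$ is connected, $S$ can be grown from $\{v\}$ by repeatedly adding an adjacent vertex of $S$. Each such step picks an edge of $G(v)$ with positive probability, so $p(S\mid v)>0$. Conversely, \SampleGraphlet only ever outputs connected $k$-vertex sets, so the support of the sampler is exactly the set of $k$-graphlets. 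This is the one step where some care is needed; the rest is arithmetic.

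\textbf{Step 2 (bounding the range).} By \Cref{lemma:bound-on-pS}, every value satisfies $r \le (k-1)!(1+\epsilon)^{k-1}Z$. By \Cref{lemma:bound-on-L}, $Z \le (k-1)^{k-1}L$. Combining the two,
\[
\frac{\max r}{E} \le (k-1)!\,(k-1)^{k-1}(1+\epsilon)^{k-1}.
\]

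\textbf{Step 3 (concentration).} Apply \Cref{fact:Chernoff2} with $\epsilon_0$ in place of $\epsilon$:
\[
\Pr{\left|\frac{\hat C}{T} - L\right| \ge \epsilon_0 L} \le 2\exp\!\left(-\frac{\epsilon_0^2 T L}{3\max r}\right) \le 2\exp\!\left(-\frac{\epsilon_0^2 T}{3(k-1)!(k-1)^{k-1}(1+\epsilon)^{k-1}}\right).
\]
Substituting the assumed lower bound on $T$, the right-hand side is at most $2\exp(-\ln\frac{2}{\delta}) = \delta$, which gives the claim.
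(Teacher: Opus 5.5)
Your proposal is correct and follows the paper's proof essentially step for step: write $\hat C$ as a sum of $T$ i.i.d.\ copies of $1/p(S)$ with mean $L$, bound the range using \Cref{lemma:bound-on-pS} and \Cref{lemma:bound-on-L}, and apply \Cref{fact:Chernoff2} with the given $T$. Your Step~1 adds something the paper only asserts, namely that every $k$-graphlet has positive sampling probability, so the mean is exactly $L$, and your argument for it (growth from the $\prec$-minimal vertex $v$ inside $G(v)$, with $N_v>0$ forcing $d(v|G(v))\ge 1$) is sound.
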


\begin{proof}
Let $X_1, \cdots, X_T$ be the value of $\frac{1}{p(S)}$ of each iteration. Then we have $\hat C = \sum_{i=1}^T X_i$. In fact, $X_1, \cdots, X_T$ are independent random variables such that $X_i$ takes value $\frac{1}{p(S)}$ with probability $p(S)$ for each $k$-graphlet $S$ in $G$. Each $X_i$ is at most $(k-1)!(1+\epsilon)^{k-1}Z$ by \Cref{lemma:bound-on-pS} and has expectation $L$. By \Cref{fact:Chernoff2,lemma:bound-on-L} we have
\begin{equation*}
  \begin{aligned}
    \Pr{\left|\frac{\hat C}{T} - L\right| \ge \epsilon_0 L} &\le 2\exp\left( -\frac{\epsilon_0^2 L T}{3 (k-1)! (1+\epsilon)^{k-1} Z} \right) \\
    &\le 2\exp\left( -\frac{\epsilon_0^2 T}{3 (k-1)! (k-1)^{k-1} (1+\epsilon)^{k-1}} \right) \\
    &= \delta.
  \end{aligned}
\end{equation*}
\end{proof}

\begin{theorem}\label{thm:bound-on-T}
For any $\frac{1}{1+\epsilon}$-DD order, $k$ and $0 < \alpha, \delta < 1$, let $T$ be a positive integer satisfying
\begin{equation*}
  T \ge \frac{12 ((k-1)!)^2 (k-1)^{2k-2} (1+\epsilon)^{2k-2}}{\alpha^2 (1 - \alpha)^2} \ln{\frac{4m_k}{\delta}},
\end{equation*}
then \Cref{alg:gd} computes $\hat \mu_1, \cdots, \hat \mu_{m_k}$ such that with probability at least $1-\delta$, $|\hat \mu_i - \mu_i| \le \alpha$ for each $1 \le i \le m_k$.
\end{theorem}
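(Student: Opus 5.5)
The plan is to control the $m_k$ numerators $c_i$ and the denominator $\hat C$ separately, using Hoeffding's inequality (\Cref{fact:Hoeffding}) on suitably normalized variables, and then combine them through a union bound.

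\textbf{Normalization.} Let $B = (k-1)!(1+\epsilon)^{k-1}Z$. By \Cref{lemma:bound-on-pS}, each $1/p(S)\le B$. For iteration $t$ and class $i$, define $Y^{(i)}_t = \frac{1}{B}\cdot\frac{\Ind{S_t\simeq g_i}}{p(S_t)}$. These lie in $[0,1]$ and are i.i.d.\ in $t$. Their expectation is
\[
\frac{1}{B}\sum_{S\simeq g_i} p(S)\cdot\frac{1}{p(S)} = \frac{l_i}{B}.
\]
Similarly, $Y_t=\sum_i Y^{(i)}_t$ lies in $[0,1]$ and has mean $L/B$.

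Set $\beta = \frac{\alpha(1-\alpha)}{2(k-1)^{k-1}(1+\epsilon)^{k-1}(k-1)!}$ as the additive tolerance. Hoeffding then gives, for each $i$,
\[
\left|\frac{c_i}{TB}-\frac{l_i}{B}\right|<\beta
\qquad\text{and}\qquad
\left|\frac{\hat C}{TB}-\frac{L}{B}\right|<\beta,
\]
each failing with probability at most $2e^{-2T\beta^2}$. Requiring $2e^{-2T\beta^2}\le \delta/(2m_k)$ for the $m_k+1\le 2m_k$ events gives total failure probability at most $\delta$. Plugging in $\beta$, this condition becomes exactly the stated lower bound on $T$: $2\beta^2$ has the factor $4$ in the denominator, which yields the $12$ up to the Hoeffding constant, and the $\ln\frac{4m_k}{\delta}$ term appears. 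I would tune the constant in $\beta$ so that it matches the stated $T$ exactly.

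\textbf{Converting to a bound relative to $L$.} By \Cref{lemma:bound-on-L}, $L/B\ge \frac{1}{(k-1)!(k-1)^{k-1}(1+\epsilon)^{k-1}}$. Hence $\beta \le \frac{\alpha(1-\alpha)}{2}\cdot\frac{L}{B}$. The additive bounds therefore become multiplicative-in-$L$ errors:
\[
|c_i/T - l_i| < \eta L
\qquad\text{and}\qquad
|\hat C/T - L| < \eta L,
\qquad\text{where } \eta=\alpha(1-\alpha)/2.
\]

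\textbf{Bounding the ratio.} Write $\hat\mu_i = \frac{c_i/T}{\hat C/T}$ and $\mu_i=l_i/L$. Then
\[
|\hat\mu_i-\mu_i| \le \frac{|c_i/T-l_i|}{\hat C/T} + \frac{l_i}{L}\cdot\frac{|L-\hat C/T|}{\hat C/T} \le \frac{2\eta L}{(1-\eta)L} = \frac{\alpha(1-\alpha)}{1-\eta},
\]
using $l_i\le L$ and $\hat C/T\ge(1-\eta)L$. Since $\eta\le\alpha$, this is at most $\alpha$.

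The main obstacle is this ratio step. An additive error on the numerator alone does not suffice; it has to be paired with a lower bound on the denominator. That is exactly where the $(1-\alpha)$ factor and the squared constants in $T$ come from. The rest is bookkeeping of constants.
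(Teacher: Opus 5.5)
Your proof is correct and follows essentially the paper's strategy. Both apply Hoeffding to each $c_i$ at additive tolerance $\frac{\alpha}{2}(1-\alpha)L$ (via \Cref{lemma:bound-on-pS} and \Cref{lemma:bound-on-L}), control $\hat C$ separately and union bound. The only differences are that you bound $\hat C$ with the same Hoeffding estimate rather than the multiplicative bound of \Cref{lemma:bound-on-hatC}, and you take a cleaner unconditional union bound over $m_k+1$ events instead of conditioning on $\mathcal C$. One small correction: your computation yields the constant $2$ rather than $12$, so the stated lower bound on $T$ implies your requirement with room to spare; it is not matched ``exactly,'' and no tuning of $\beta$ is needed.
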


\begin{proof}
We want to bound the deviation of each normalized counter $\hat \mu_i = c_i/\hat C$ from the true proportion $\mu_i = l_i/L$, uniformly over all isomorphism classes $i\in[m_k]$.
The main technical issue is that $\hat \mu_i$ is a ratio with random denominator $\hat C$.
To handle this cleanly, we first condition on the event that $\hat C$ concentrates around its mean (so the denominator is well-controlled), and then apply a concentration bound to each numerator $c_i$.
Finally we union bound over the $m_k$ classes and combine with the probability that the conditioning event holds.

Assume that the condition $|\frac{\hat C}{T} - L| < \frac{\alpha}{2} L$ holds. Then $\hat C$ lies in the interval $[(1-\alpha/2)LT,\ (1+\alpha/2)LT]$.
For each $1 \le i \le m_k$, recall that $\mu_i = \frac{l_i}{L}$ where $l_i$ is the number of $k$-graphlets in $G$ which are isomorphic to $\graphlet_i$.
We have
\begin{equation}\label{eq:mui}
  \begin{aligned}
    \Pr{|\hat \mu_i - \mu_i| \ge \alpha}
    = \Pr{\left|\frac{c_i}{\hat C} - \frac{l_i}{L}\right| \ge \alpha}
    \le \Pr{\frac{c_i}{\left(1-\frac{\alpha}{2}\right)LT} - \frac{l_i}{L} \ge \alpha} + \Pr{\frac{l_i}{L} - \frac{c_i}{\left(1+\frac{\alpha}{2}\right)LT} \ge \alpha}.
  \end{aligned}
\end{equation}

We analyze $\Pr{\frac{c_i}{\left(1-\frac{\alpha}{2}\right)LT} - \frac{l_i}{L} \ge \alpha}$, and the other side is similar.
We rewrite the deviation in terms of $\frac{c_i}{T}$:

\begin{equation*}
  \begin{aligned}
    \Pr{\frac{c_i}{\left(1-\frac{\alpha}{2}\right)LT} - \frac{l_i}{L} \ge \alpha} &= \Pr{\frac{c_i}{T} - \left(1-\frac{\alpha}{2}\right)l_i \ge \alpha \left(1-\frac{\alpha}{2}\right) L} \\
    &= \Pr{\frac{c_i}{T} - l_i \ge \alpha \left(1-\frac{\alpha}{2}\right) L - \frac{\alpha}{2} l_i} \\
    &\le \Pr{\frac{c_i}{T} - l_i \ge \frac{\alpha}{2}(1 - \alpha) L},
  \end{aligned}
\end{equation*}
where the last inequality is due to $l_i \le L$.
Similarly to the proof of \Cref{lemma:bound-on-hatC}, let $X_1, \cdots, X_T$ be the increment of $c_i$ after each iteration, i.e., $c_i = \sum_{j=1}^T X_j$.
In fact, $X_1, \cdots, X_T$ are independent random variables such that $X_j$ takes value $\frac{1}{p(S)}$ with probability $p(S)$ for each $k$-graphlet $S$ in $G$ that is isomorphic to $\graphlet_i$, and $0$ with the rest probability.
Each $X_i$ is at most $(k-1)!(1+\epsilon)^{k-1}Z$ by \Cref{lemma:bound-on-pS} and has expectation $l_i$.
By \Cref{fact:Hoeffding,lemma:bound-on-L},
\begin{equation*}
  \begin{aligned}
    \Pr{\frac{c_i}{T} - l_i \ge \frac{\alpha}{2}(1 - \alpha) L}
    \le& 2\exp\left( -\frac{T\alpha^2 (1 - \alpha)^2 L^2}{2 ((k-1)! (1+\epsilon)^{k-1} Z)^2} \right) \\
    \le& 2\exp\left( -\frac{T\alpha^2 (1 - \alpha)^2}{2 ((k-1)!)^2 (1+\epsilon)^{2k-2} (k-1)^{2k-2} } \right)
    \le \frac{\delta}{4m_k}.
  \end{aligned}
\end{equation*}

One can argue similarly that
\begin{equation*}
  \Pr{\frac{l_i}{L} - \frac{c_i}{\left(1+\frac{\alpha}{2}\right)LT} \ge \alpha} \le \frac{\delta}{4m_k}.
\end{equation*}

Recall that the computation above is conditioned on $|\frac{\hat C}{T} - L| < \frac{\alpha}{2} L$.
We denote this condition by $\mathcal C$.
Applying a union bound to both sides of \Cref{eq:mui} for each $1 \le i \le m_k$, we obtain that
\begin{equation*}
  \Pr{\bigcup_{i \in [m_k]} |\hat \mu_i - \mu_i| > \alpha \ \bigg| \ \mathcal C} \le \frac{\delta}{2}.
\end{equation*}

Also note that
\begin{equation*}
  T \ge \frac{12 (k-1)! (k-1)^{k-1} (1+\epsilon)^{k-1}}{\alpha^2} \ln{\frac{4}{\delta}},
\end{equation*}
which gives $\Pr{\bar{\mathcal C}} \le \frac{\delta}{2}$ by \Cref{lemma:bound-on-hatC}.
Therefore, we have the upper bound for overall failure probability:
\begin{equation*}
  \begin{aligned}
    \Pr{\bigcup_{i \in [m_k]} \!\!\!|\hat \mu_i - \mu_i| > \alpha}
    \ \le\  \Pr{\bar{\mathcal C}} + \Pr{\bigcup_{i \in [m_k]} \!\!\!|\hat \mu_i - \mu_i| > \alpha \ \bigg| \ \mathcal C}
    \ \le\  \delta.
  \end{aligned}
\end{equation*}
\end{proof}

Now consider the number of passes of \Cref{alg:gd}. Recall that, with a memory of $M$ words, we can run \Cref{alg:gd} in parallel so that it returns $\Theta\left(\frac{M}{k^2}\right)$ samples using $O(k)$ passes as described in \Cref{sec:ugs}. Therefore, by \Cref{thm:bound-on-T}, $O\left(\frac{(k(1+\epsilon))^{O(k)}}{\alpha^4 M} \ln \frac{1}{\delta}\right)$ passes are sufficient to obtain an estimated $k$-graphlet distribution with probability at least $1-\delta$ in which each probability in the distribution differs at most $\alpha$ from the standard one.

\subsection{Counters vs. Rejection: a Variance View}

As mentioned before, the algorithm originally described in~\cite{Bourreau2024} uses \emph{rejection} in the sense that it accepts the sampling of $S$ with probability $\frac{\Gamma}{p(S)}$ where $\Gamma \le \min_S p(S)$. One can transform this algorithm into an approximation of the $k$-graphlet distribution: set a collection of counters $c_1, \cdots, c_{m_k}$, and whenever a graphlet $S$ is sampled and accepted with probability $\frac{\Gamma}{p(S)}$, add $1$ to $c_r$ where $g_r$ is isomorphic to $S$. The pseudocode of this algorithm is shown in \Cref{alg:gdr}.

\begin{algorithm}
	\caption{Approximate $k$-Graphlet Distribution via Rejection (Abstract Description)}
	\label{alg:gdr}
	\SetKwProg{Function}{Function}{}{}
	\Function{\gdr($k, \bp, T$)}{
        $c_1, \cdots, c_{m_k} \gets 0$ \\
        \For(\textbf{(sequentially/in parallel)}){$i \gets 1$ \KwTo $T$}{
    		Sample a $k$-graphlet $S$ uniformly \\
            $r \gets$ the index such that $S$ is isomorphic to $\graphlet_r$ \\
            $c_r \gets c_r + 1$
        }
        $\hat C \gets {\sum_{i=1}^{m_k} c_i}$ \\
        $\hat \mu_i \gets \frac{c_i}{\hat C}$ \textbf{for each} $1 \le i \le m_k$ \\
        \Return $\hat \mu_1, \cdots, \hat \mu_{m_k}$
	}
\end{algorithm}

We give an intuitive explanation of why using \gd in place of \gdr can only increase the concentration of the frequency estimates.
To this end, consider a modified version of \gd that, instead of adding $\frac{1}{p(S)}$ to the counter of the isomorphism class of $S$, adds $\frac{\Gamma}{p(S)}$, where $\Gamma$ is the same used by \gdr.
Note that this modification is immaterial as far as the output is concerned: since the output estimates are always rescaled in order to sum to $1$, multiplying all counters by a constant $\Gamma$ does not make any difference.
However, now one can see how \gdr is the randomized version of \gd.
Indeed, while \gdr increases the counter of $S$ by one with probability $\frac{\Gamma}{p(S)}$, \gd increases it deterministically by $\frac{\Gamma}{p(S)}$.
Thus, both algorithms increase the counter by the same amount \emph{in expectation}, but \gdr adds the variance of a coin toss.
This implies \gd can only yield higher concentration.

\subsection{The Complete Algorithm}
\Cref{alg:full} gives the complete pseudocode for approximating the $k$-graphlet distribution.
It first calls \Cref{alg:approxdd-fewer-2} to compute a vertex ordering that is a $\frac{1}{1+\epsilon}$-DD ordering with probability at least $1-\delta$.
Then it calls \Cref{alg:InitDistrib} to obtain the initial distribution (see \Cref{sec:ugs} for the definition).
Finally, it calls \Cref{alg:gd} to estimate the $k$-graphlet distribution.
By combining \Cref{theorem:approxdd-fewer-2,thm:bound-on-T}, one can easily verify that \Cref{alg:full} satisfies \Cref{thm:1}.

\begin{algorithm}
	\caption{Approximate $k$-Graphlet Distribution}
	\label{alg:full}
	\KwIn{an undirected graph $G$, the size of graphlets $k$, parameters $\epsilon, \delta, c, T$}
    \KwOut{the estimated $k$-graphlet distribution $\hat \mu_1, \cdots, \hat \mu_{m_k}$}

    $\ddorder \gets$ \ApproxDDb($G, \epsilon, \delta, c$) \\
    $\bp \gets$ \InitDistrib($G, \ddorder$) \\
    $\hat \mu_1, \cdots, \hat \mu_{m_k} \gets$ \gd($G, \ddorder, k, \bp, T$) \\
    \Return $\hat \mu_1, \cdots, \hat \mu_{m_k}$
\end{algorithm}

\section{Experiments}
\label{sec:experiments}
We conducted experiments to evaluate the practical performance of our algorithm and compare it to the algorithm of~\cite{Bourreau2024}.
Our experiments are designed to answer the following questions:
\begin{itemize}
  \item[(Q1)] How do the preprocessing phase (computing an approximate DD order) of \ApproxDDb and \ApproxDD compare in terms of number of passes, peak memory and the quality of DD order?
  \item[(Q2)] How is the performance of the overall algorithm (for estimating the $k$-graphlet distribution to achieve a target accuracy) in terms of number of passes and overall peak memory, and how do the ``counter'' variants compare to the ``rejection'' variant?
\end{itemize}

We organize the remainder of this section around these objectives: \Cref{exp:DD} addresses (Q1), and \Cref{exp:distr} addresses (Q2).

All algorithms were implemented in C++
(including our re-implementation of the baselines from~\cite{Bourreau2024} for a fair comparison)
and experiments were run on an Ubuntu server equipped with an Intel Xeon Silver 4108 CPU (1.80GHz) and 28GB of main memory.
\footnote{Our implementation is available at: \url{https://github.com/l2l7l9p/GD-Streaming}.}

\subsection{Datasets}

The computation of the graphlet distribution is relevant not only for sparse graphs, which are common in real-world data, but also for dense graphs such as the similarity graph among facial images. Our theoretical guarantee in \Cref{thm:1} holds regardless of graph density. Therefore, we ran experiments on both sparse and dense graphs from various sources to show that our algorithm scales well in both cases. Table~\ref{tab:datasize} reports all dataset statistics.

The datasets were prepared as follows:
\begin{itemize}
  \item NY Times, Twitter(WWW), Twitter(MPI) and Friendster are from the KONECT website \cite{dataset-konect}\footnote{\url{http://konect.cc/networks/}}. We removed edge directions, weights, self-loops, duplicate edges and any other irrelevant data, so as to retain only a list of undirected edges.
  \item  Sim-0 through Sim-7 are constructed from the database CelebA\footnote{\url{https://mmlab.ie.cuhk.edu.hk/projects/CelebA.html}}, which consists of 202599 human face images, each one tagged with 40 binary attributes. We represent each image by a vertex, and add an edge between two vertices if the corresponding images differ in at most $0,\ldots,7$ attributes.
  \item ER-0 through ER-6 are synthetic graphs, on the same number of vertices as Sim-*. Let $n_i$ and $m_i$ be the number of vertices and the number of edges of Sim-$i$ respectively. Each ER-$i$ is generated according to the Erd\H{o}s-R\'enyi model by drawing each edge independently with probability $\frac{m_i}{n_i(n_i-1)/2}$ so that the expected density of ER-$i$ is consistent with Sim-$i$.
\end{itemize}
This section gives plots for a representative subset of datasets;
the remaining ones are similar and are available in \Cref{sec:morefigures}.

\begin{table}
  \centering
  \caption{Dataset statistics. The first 4 graphs are real-world large graphs. Sim-* are the graphs indicating the similarity among images. ER-* are synthetic random graphs.}\label{tab:datasize}
  \begin{tabular}{r|rrr}
     \hline
     Dataset & $|V|$ & $|E|$ & $|E|/|V|$ \\
     \hline\hline
     NY Times \cite{dataset-konect}  & 401,388 & 69,654,798 & 173.53 \\
     Twitter(WWW) \cite{dataset-twitter} & 41,652,230 & 1,202,513,046 & 28.87 \\
     Twitter(MPI) \cite{dataset-twittermpi} & 52,579,682 & 1,614,106,187 & 30.70 \\
     Friendster \cite{dataset-konect} & 68,349,466 & 1,811,849,342 & 26.51 \\
     Sim-0 & 202,599 & 948,690 & 4.68 \\
     Sim-1 & " & 9,825,190 & 48.50 \\
     Sim-2 & " & 51,840,951 & 255.88 \\
     Sim-3 & " & 185,601,680 & 916.10 \\
     Sim-4 & " & 507,709,457 & 2505.98 \\
     Sim-5 & " & 1,134,892,538 & 5601.67 \\
     Sim-6 & " & 2,165,414,225 & 10688.18 \\
     Sim-7 & " & 3,637,218,904 & 17592.80 \\
     ER-0 & 202,599 & 949,312 & 4.69 \\
     ER-1 & " & 9,821,921 & 48.48 \\
     ER-2 & " & 51,843,003 & 255.89 \\
     ER-3 & " & 185,602,510 & 916.11 \\
     ER-4 & " & 507,715,524 & 2506.01 \\
     ER-5 & " & 1,134,903,860 & 5601.72 \\
     ER-6 & " & 2,165,448,192 & 10688.35 \\
     \hline
  \end{tabular}
\end{table}

\subsection{Computing the DD Order}
\label{exp:DD}

\noindentparagraph{Setup}
We evaluate \ApproxDDb (our algorithm) and \ApproxDD (the competitor from~\cite{Bourreau2024}) for computing a $\frac{1}{1+\eps}$-DD order, fixing $\eps=0.1$.
To test the passes-vs-memory tradeoff predicted by \Cref{thm:1}, we varied $c \in [0,0.5]$ (the performance of \ApproxDD does not depend on $c$ and is therefore constant across the single plot).
Each point shown in \Cref{fig:c} is the average of 5 executions; executions were terminated after 36 hours if still running.

\begin{figure}[H]
  \centering
  \begin{subfigure}{0.32\textwidth}
    \includegraphics[width=\textwidth,trim=0 20 0 0]{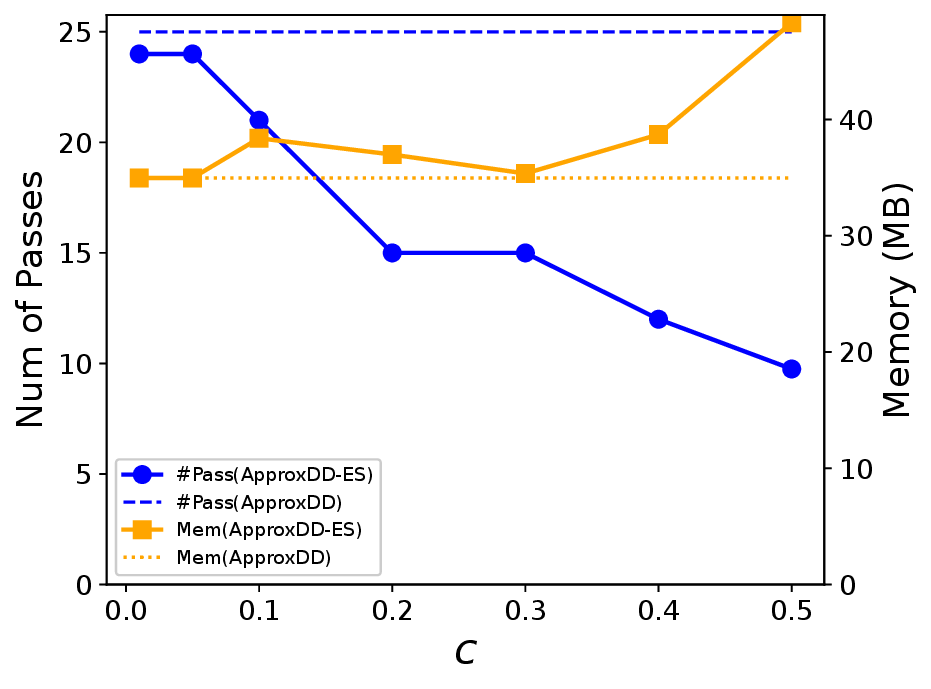}
    \caption{NY Times}
  \end{subfigure}
  \begin{subfigure}{0.32\textwidth}
    \includegraphics[width=\textwidth,trim=0 20 0 0]{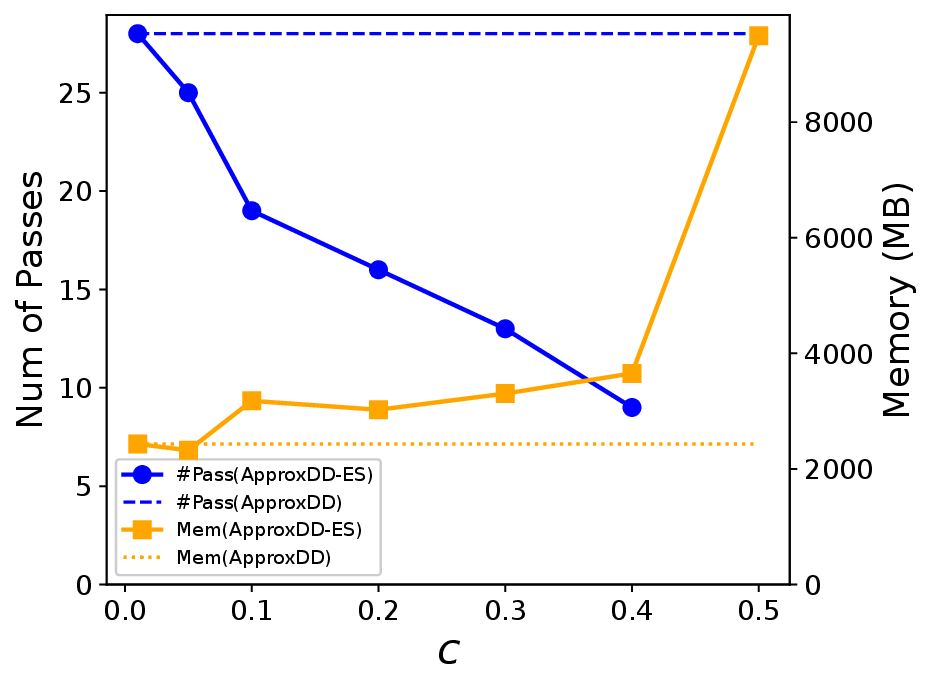}
    \caption{Twitter(WWW)}
  \end{subfigure}
  \begin{subfigure}{0.32\textwidth}
    \includegraphics[width=\textwidth,trim=0 20 0 0]{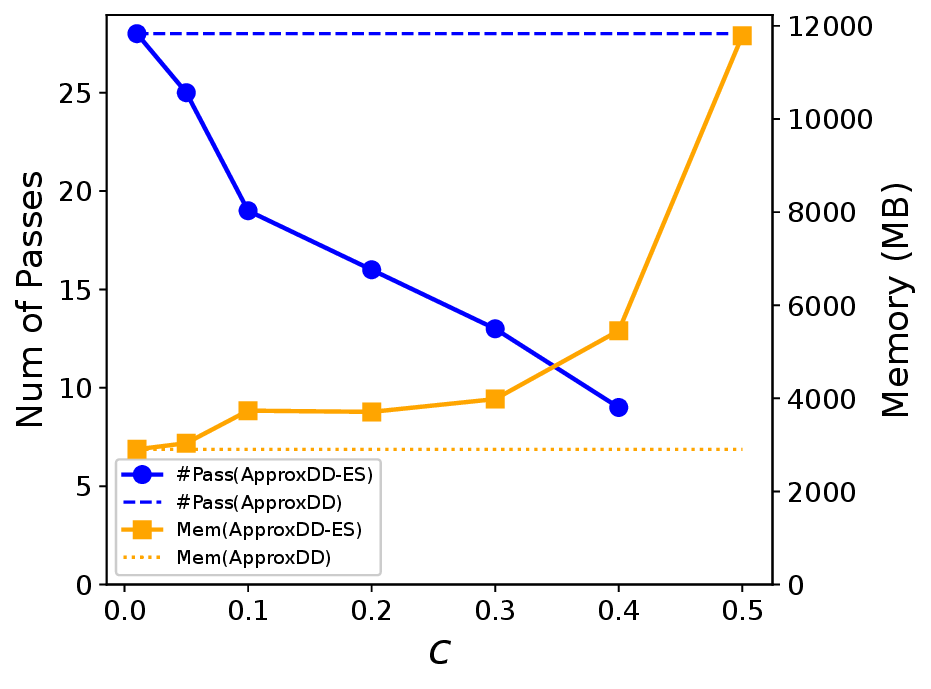}
    \caption{Twitter(MPI)}
  \end{subfigure}
  \begin{subfigure}{0.32\textwidth}
    \includegraphics[width=\textwidth,trim=0 20 0 0]{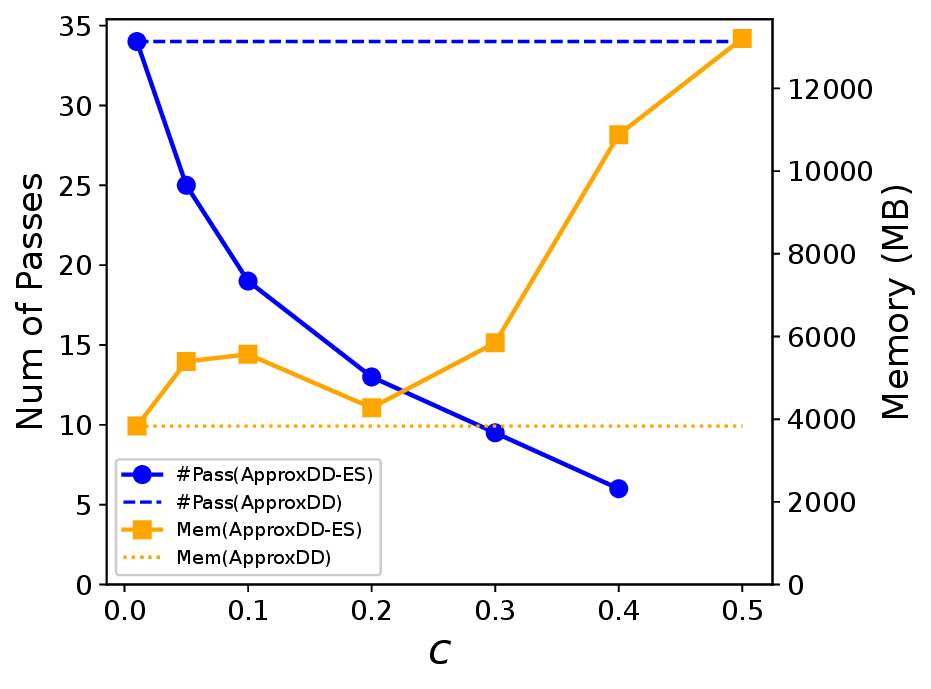}
    \caption{Friendster}
  \end{subfigure}
  \begin{subfigure}{0.32\textwidth}
    \includegraphics[width=\textwidth,trim=0 20 0 0]{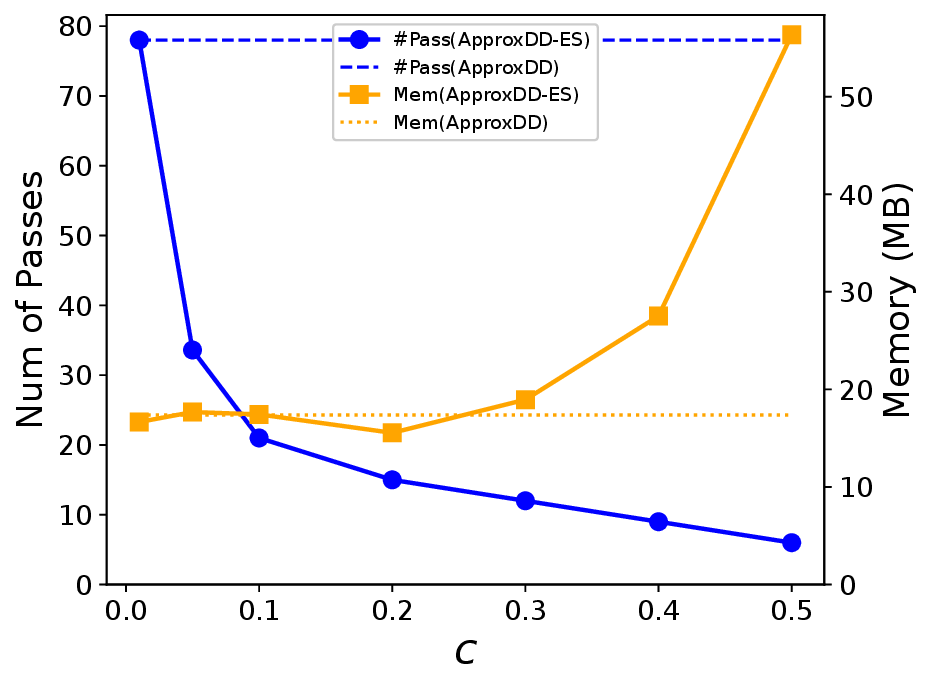}
    \caption{Sim-1}\label{fig:c-sim-1}
  \end{subfigure}
  \begin{subfigure}{0.32\textwidth}
    \includegraphics[width=\textwidth,trim=0 20 0 0]{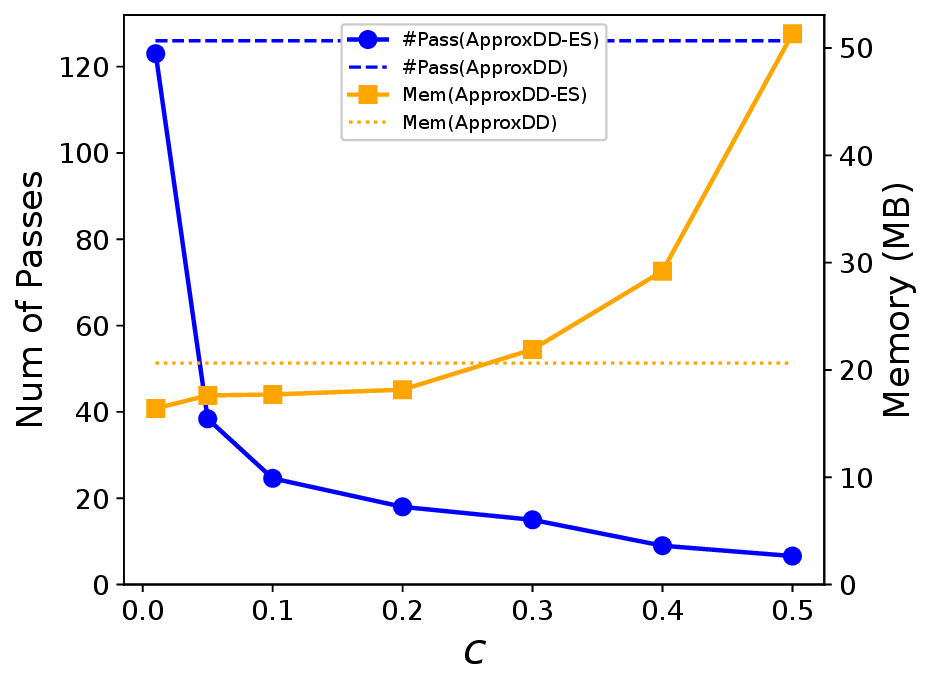}
    \caption{Sim-2}
  \end{subfigure}
  \begin{subfigure}{0.32\textwidth}
    \includegraphics[width=\textwidth,trim=0 20 0 0]{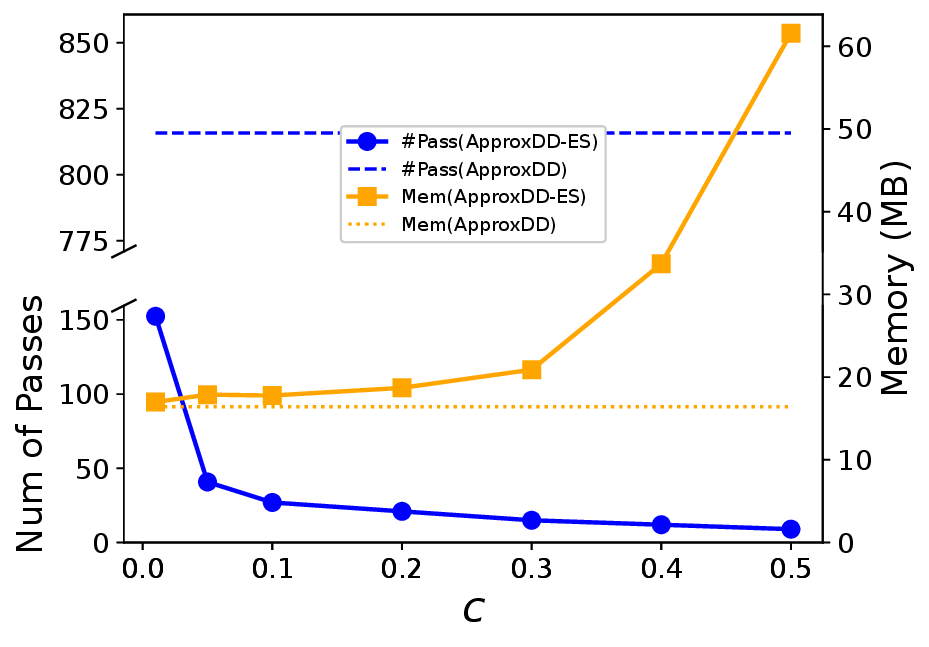}
    \caption{Sim-3}\label{fig:c-sim-3}
  \end{subfigure}
  \begin{subfigure}{0.32\textwidth}
    \includegraphics[width=\textwidth,trim=0 20 0 0]{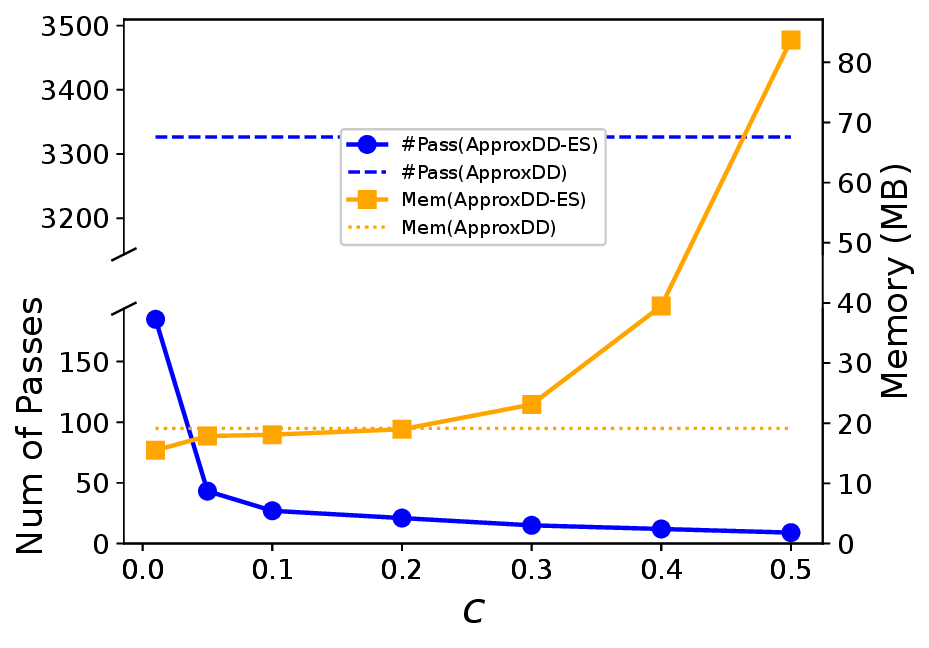}
    \caption{Sim-4}
  \end{subfigure}
  \begin{subfigure}{0.32\textwidth}
    \includegraphics[width=\textwidth,trim=0 20 0 0]{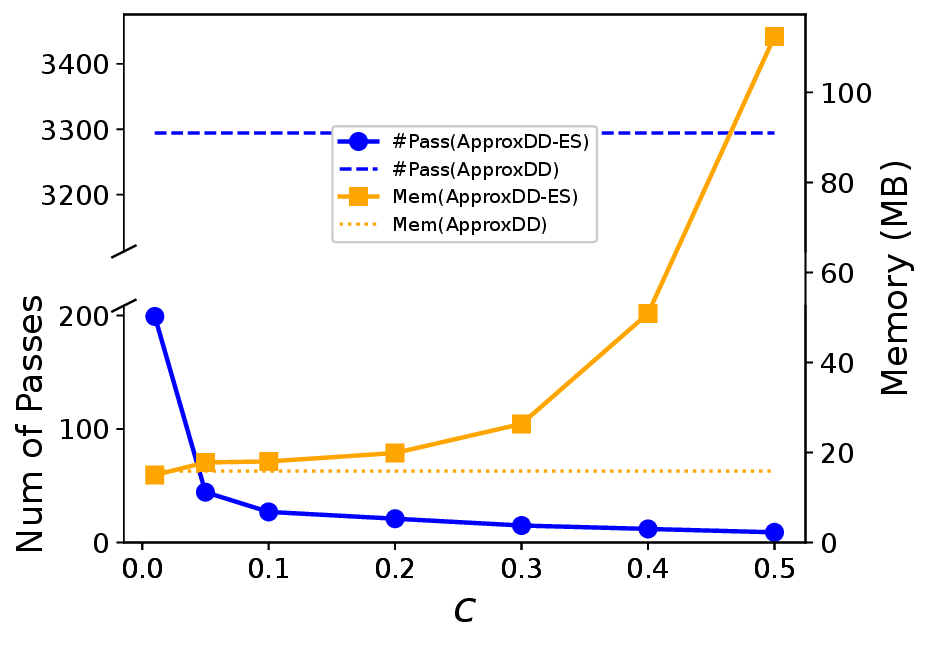}
    \caption{Sim-5}\label{fig:c-sim-5}
  \end{subfigure}
  \begin{subfigure}{0.32\textwidth}
    \includegraphics[width=\textwidth,trim=0 20 0 0]{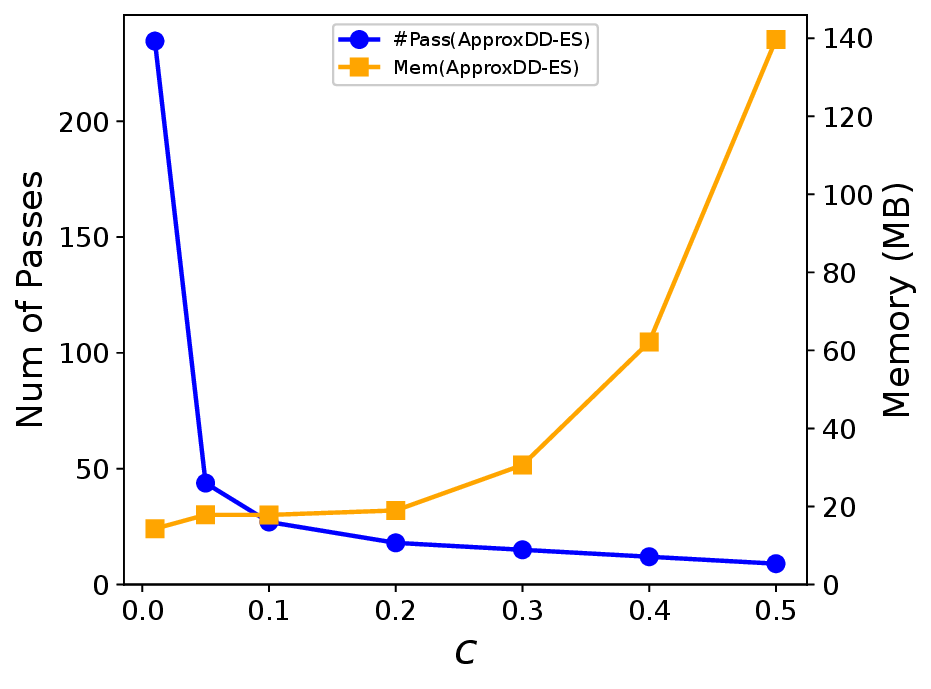}
    \caption{Sim-6}
  \end{subfigure}
  \begin{subfigure}{0.32\textwidth}
    \includegraphics[width=\textwidth,trim=0 20 0 0]{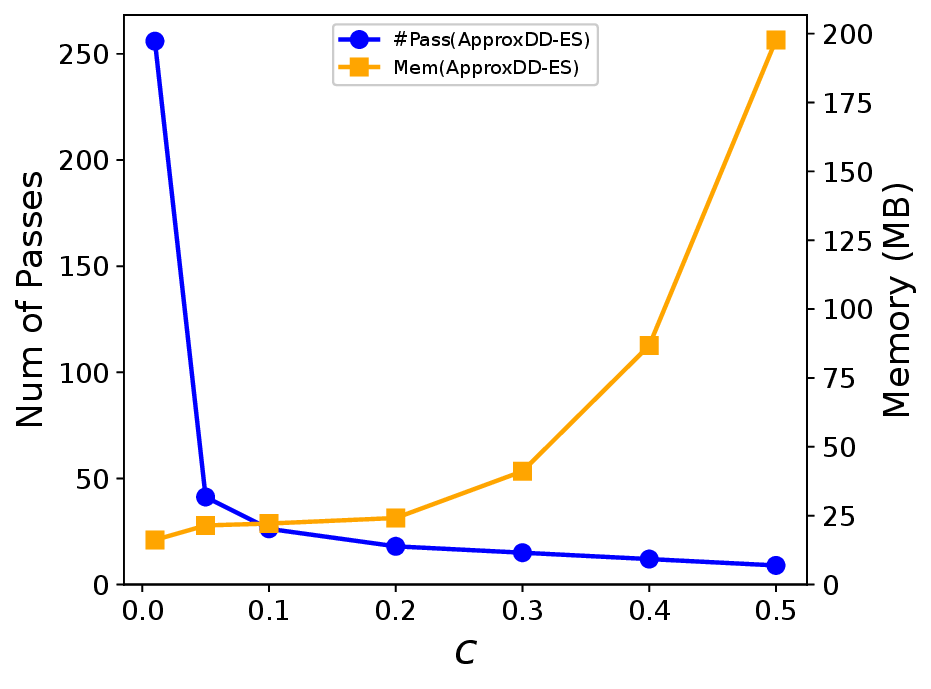}
    \caption{Sim-7}\label{fig:c-sim-7}
  \end{subfigure}
  \begin{subfigure}{0.32\textwidth}
    \includegraphics[width=\textwidth,trim=0 20 0 0]{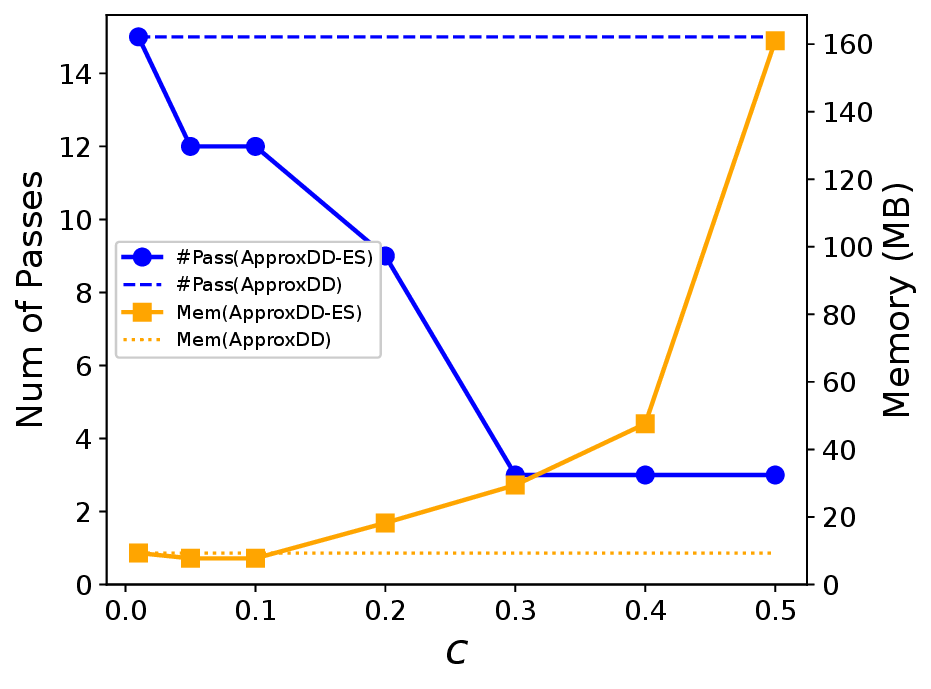}
    \caption{ER-0}
  \end{subfigure}
  \begin{subfigure}{0.32\textwidth}
    \includegraphics[width=\textwidth,trim=0 20 0 0]{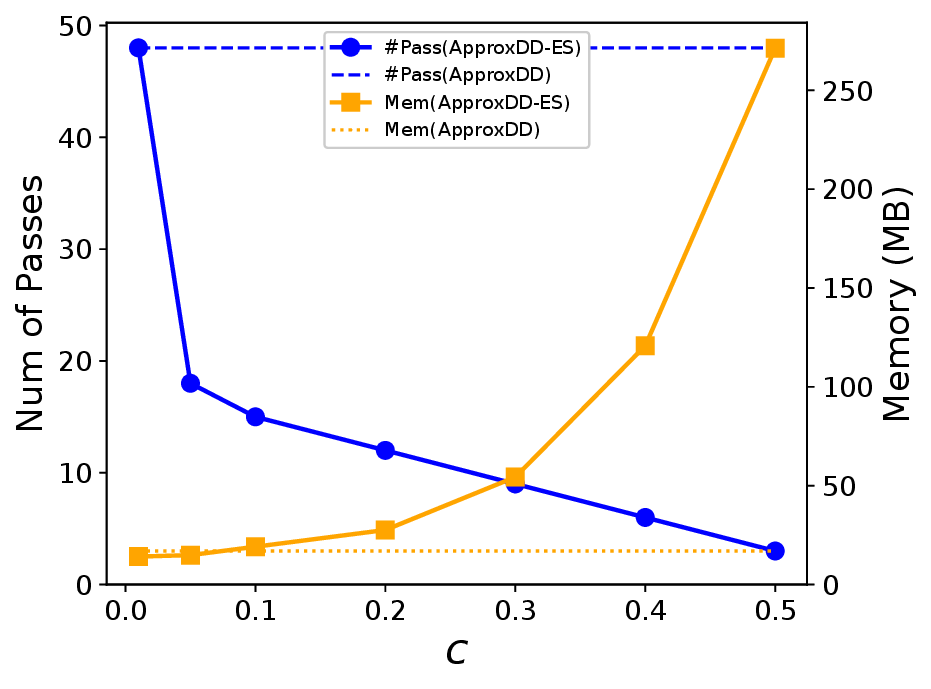}
    \caption{ER-1}
  \end{subfigure}
  \begin{subfigure}{0.32\textwidth}
    \includegraphics[width=\textwidth,trim=0 20 0 0]{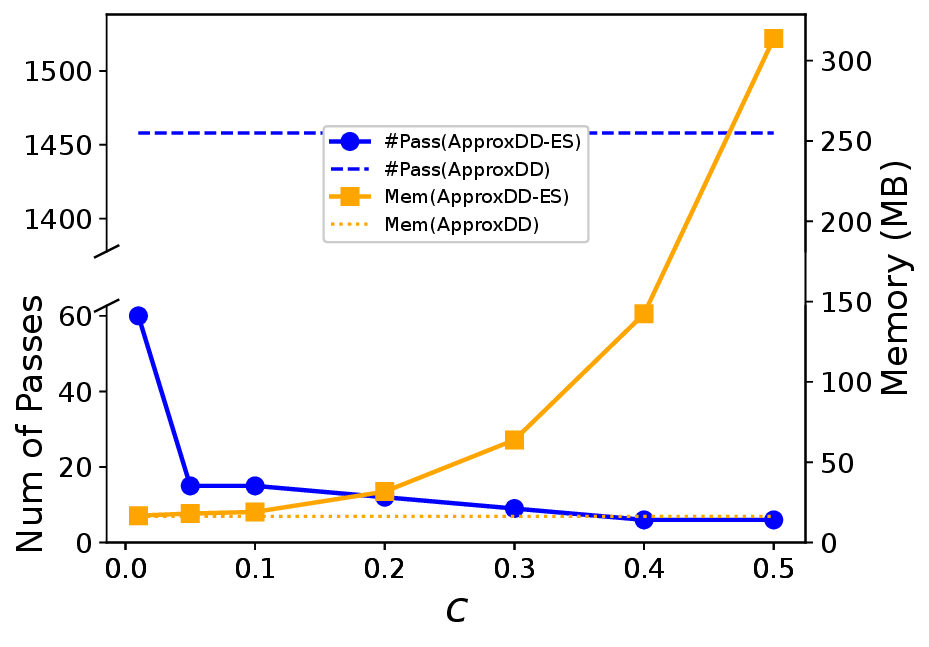}
    \caption{ER-2}
  \end{subfigure}
  \begin{subfigure}{0.32\textwidth}
    \includegraphics[width=\textwidth,trim=0 20 0 0]{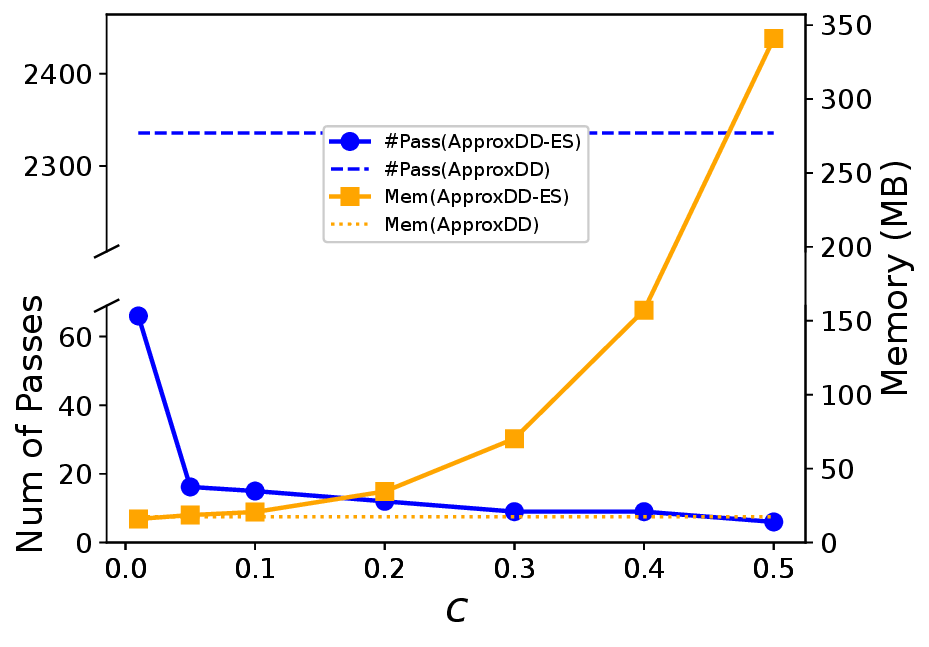}
    \caption{ER-3}
  \end{subfigure}
  \caption{Number of passes and peak memory usage, as a function of $c$, for \ApproxDD and \ApproxDDb. Each point shown is the average of 5 executions. Missing points for \ApproxDD mean it did not terminate within 36 hours.}\label{fig:c}
\end{figure}

\begin{figure}[h]\ContinuedFloat
  \centering
  \begin{subfigure}{0.32\textwidth}
    \includegraphics[width=\textwidth,trim=0 20 0 0]{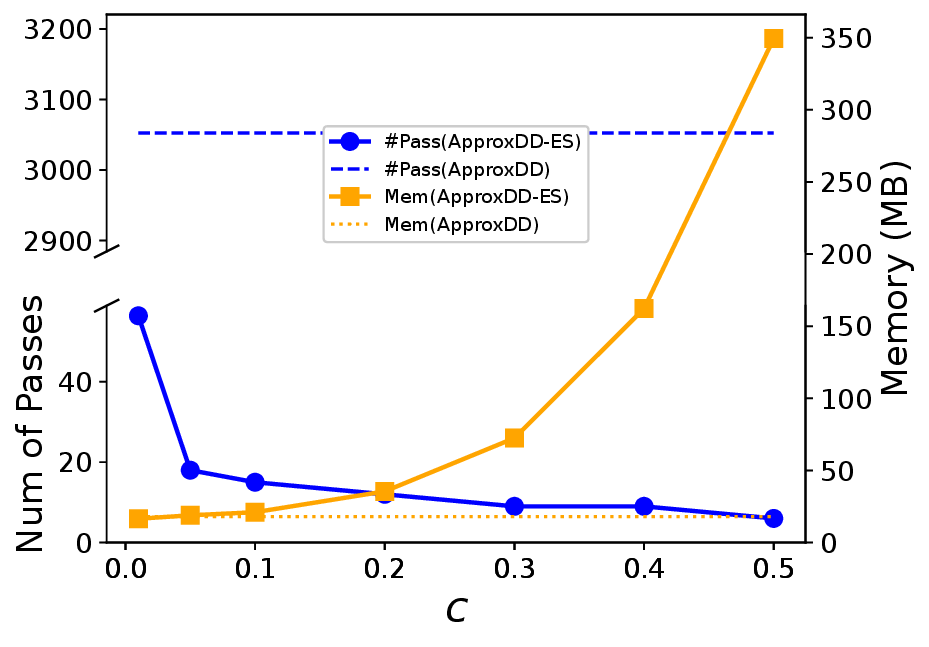}
    \caption{ER-4}
  \end{subfigure}
  \begin{subfigure}{0.32\textwidth}
    \includegraphics[width=\textwidth,trim=0 20 0 0]{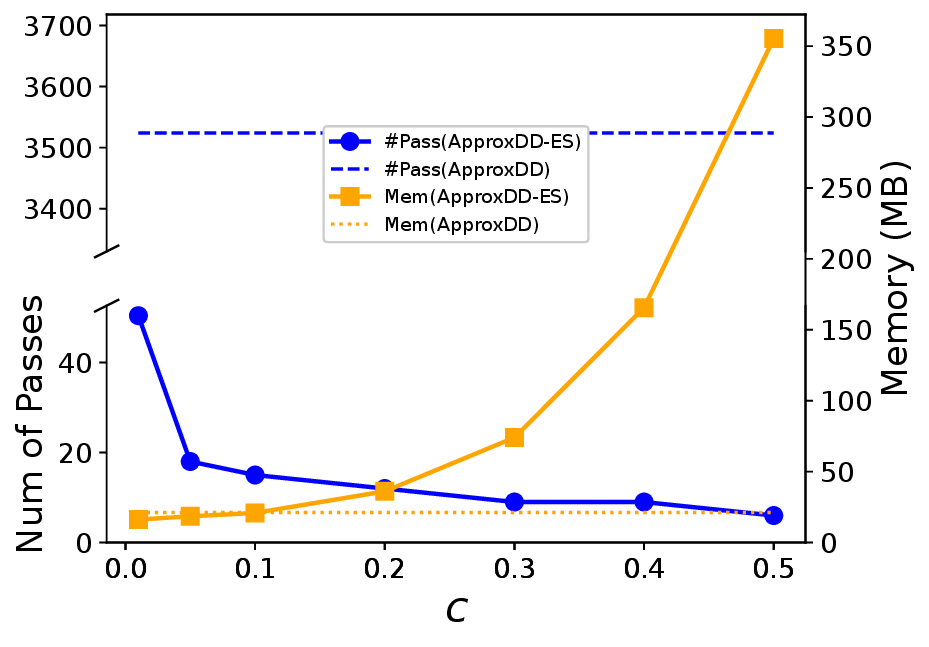}
    \caption{ER-5}
  \end{subfigure}
  \begin{subfigure}{0.32\textwidth}
    \includegraphics[width=\textwidth,trim=0 20 0 0]{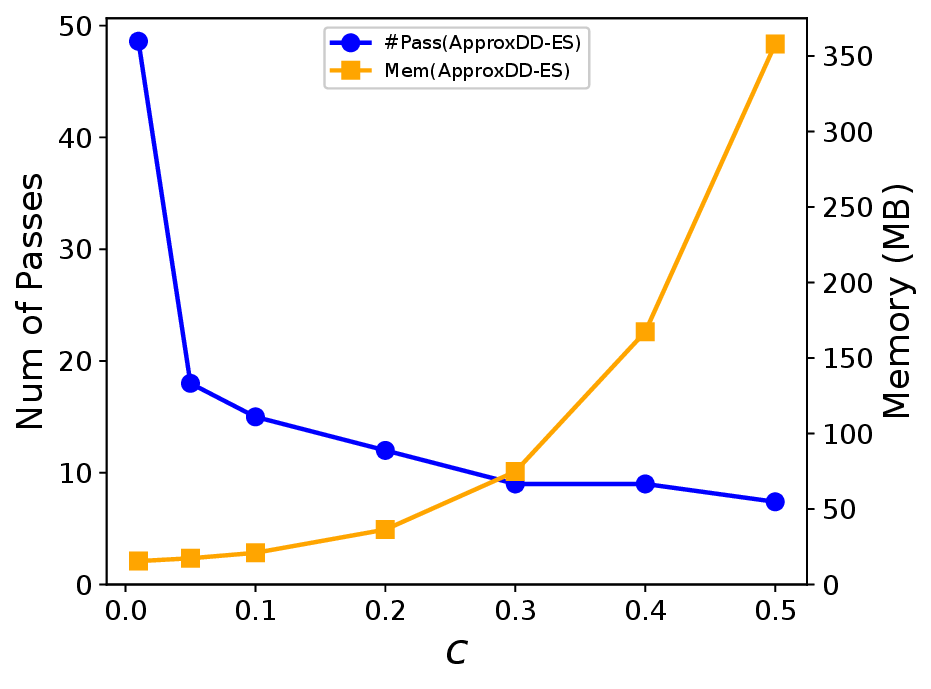}
    \caption{ER-6}\label{fig:c-er-6}
  \end{subfigure}
  \caption{(Continued) Number of passes and peak memory usage, as a function of $c$, for \ApproxDD and \ApproxDDb. Each point shown is the average of 5 executions. Missing points for \ApproxDD mean it did not terminate within 36 hours.}
\end{figure}

For both algorithms we implemented the heuristic in \cite[Appendix E]{Bourreau2024}, so as to improve the running time on large sparse graphs.
Roughly speaking, at the beginning of each iteration, the heuristic sorts the vertices in nonincreasing order of degree and selects the largest prefix whose induced graph fits in memory.
It then compares (i) running the next iteration of the peeling procedure (i.e., the while-loop in \Cref{alg:approxdd-fewer-2}) on this induced subgraph versus (ii) removing vertices in the prefix optimally, and keeps the option that leads to the largest drop in the current maximum degree.

We additionally evaluate the quality of the DD order produced by \ApproxDDb under different choices of $c$.
Given a graph $G=(V,E)$ and a vertex order $\prec$, for each vertex $v$ such that $d(v|G(v))>0$, define
\begin{equation*}
\vartheta_v := \frac{d(v|G(v))}{\Delta(G(v))}.
\end{equation*}

By definition, $\min_{v \in V \,:\, d(v|G(v))>0} \vartheta_v$ is exactly the largest value $\vartheta$ for which $\prec$ is a $\vartheta$-DD order.
To compute $\vartheta_v$ efficiently, write $\prec$ as $u_1,\ldots,u_n$ and index each edge $\{u_i,u_j\}$ ($i<j$) by $(i,j)$. Using an external sorting, we process edges in reverse-lexicographic order of $(i,j)$ while maintaining the current local degree of each vertex and the maximum local degree seen so far.
Whenever all edges of $G(v)$ have been processed for some $v$, we obtain $\vartheta_v = d(v|G(v))/\Delta(G(v))$.
For readability we report $\epsilon_v := \vartheta_v^{-1}-1$, i.e., $\prec$ is a $\frac{1}{1+\max_{v \in V \,:\, d(v|G(v))>0} \epsilon_v}$-DD order.

For each graph, we compute the empirical distribution of $\epsilon_v$ and take the average of 5 executions.

\noindentparagraph{Passes--Memory Tradeoff}
\Cref{fig:c} shows the passes--memory tradeoff of \ApproxDDb and compares it to \ApproxDD (which is insensitive to $c$).
Two remarks are in order.
The first one is that the results agree with \Cref{theorem:approxdd-fewer-2}: \ApproxDDb exhibits a clear memory-passes tradeoff, and the number of passes is very stable across different datasets (e.g., at most $25$ for $c=0.1$) in agreement with the $O(1/c)$ bound.
In contrast, the number of passes used by \ApproxDD goes with $\log n$, and therefore explodes on large datasets.

The second remark is that \ApproxDDb outperforms \ApproxDD. On most datasets, for $c=0.1$, \ApproxDDb uses at least 32\% fewer passes than \ApproxDD by using at most 45\% more memory. 
The advantage of \ApproxDDb is amplified on relatively dense graphs.
On the Similarity and ER families (see Figures \ref{fig:c-sim-1} to \ref{fig:c-er-6} and \Cref{fig:density}), increasing density causes \ApproxDD to require dramatically more passes (and to fail to terminate within 36 hours on Sim-6 and Sim-7), whereas \ApproxDDb remains within a small constant number of passes for $c=0.1$.
For example, on Sim-3 and Sim-5, \ApproxDDb for $c=0.1$ uses marginally more memory than \ApproxDD, and yet it makes $20\times$ and $110\times$ fewer passes.

\begin{figure}
\vspace{1em}
  \centering
  \begin{subfigure}{0.32\textwidth}
    \includegraphics[width=\textwidth]{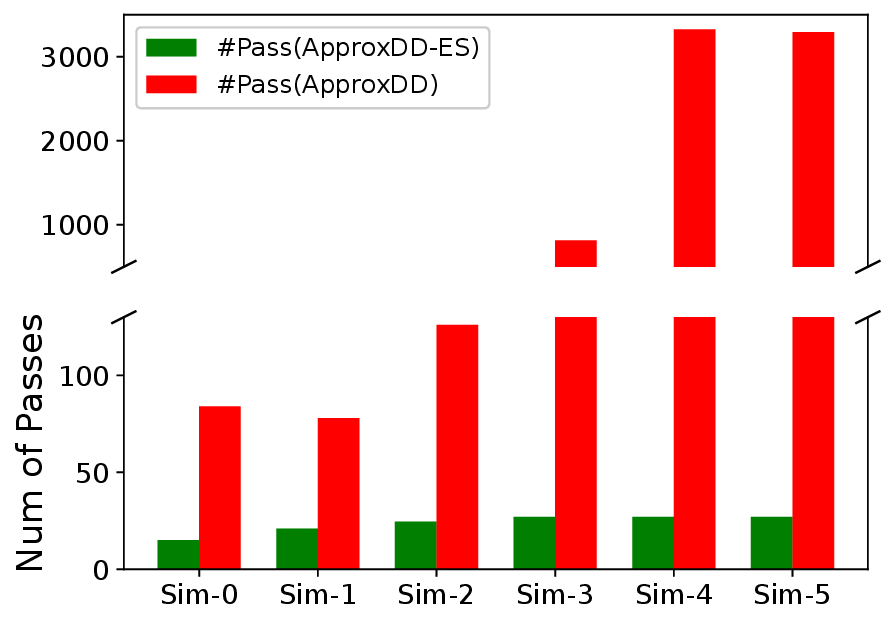}
  \end{subfigure}
  \hspace{5em}
  \begin{subfigure}{0.32\textwidth}
    \includegraphics[width=\textwidth]{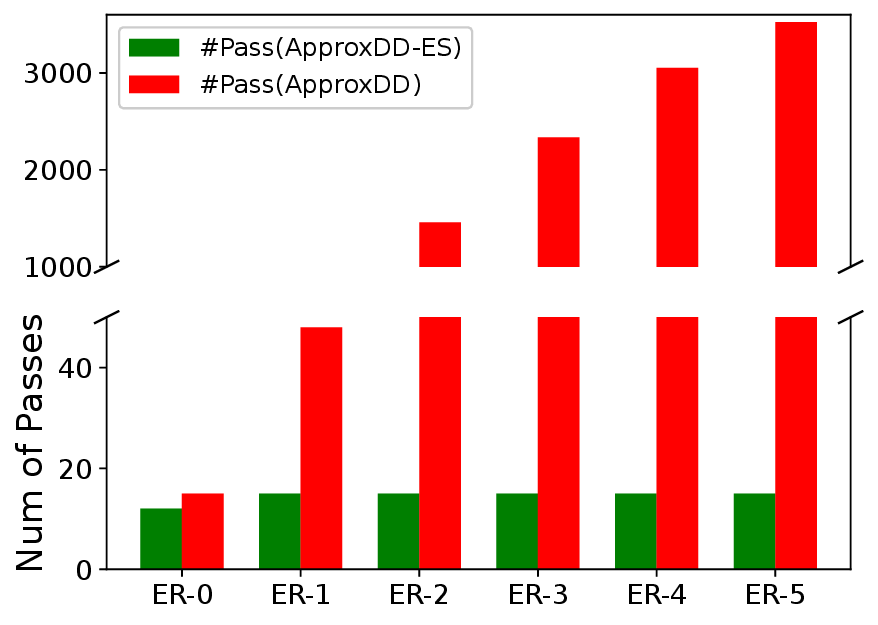}
  \end{subfigure}
  \caption{Number of passes on graphs of increasing densities using \ApproxDD and \ApproxDDb, when $c=0.1$.}\label{fig:density}
\end{figure}

Finally, \Cref{tab:mem-ddorder} reports the memory used for computing a $\frac{1}{1+\eps}$-DD order on all datasets when $c=0.1$.
Across datasets, \ApproxDDb uses moderately more memory than \ApproxDD, matching the tradeoff suggested by \Cref{thm:1}.

\begin{table}[ht]
  \caption{Memory used (in MB) by \ApproxDD and \ApproxDDb to compute a $\frac{1}{1+\eps}$-DD order, with $\eps=c=0.1$. The memory is expressed both in megabytes and (bracketed) as a fraction of the dataset size on disk (which uses 8 bytes per edge).}\label{tab:mem-ddorder}
  \centering
  \begin{tabular}{r|cc}
     \hline
     Dataset & \ApproxDD & \ApproxDDb \\
     \hline\hline
     NY Times & 34.96 (6.58\%) & 38.38 (7.22\%) \\
     Twitter(WWW) & 2432.10 (26.51\%) & 3177.75 (34.64\%) \\
     Twitter(MPI) & 2904.21 (23.58\%) & 3735.16 (30.33\%) \\
     Friendster & 3832.49 (27.72\%) & 5568.79 (40.29\%) \\
     Sim-0 & 17.10 (236.26\%) & 9.87 (136.36\%) \\
     Sim-1 & 17.37 (23.17\%) & 17.44 (23.27\%) \\
     Sim-2 & 20.64 (5.22\%) & 17.71 (4.48\%) \\
     Sim-3 & 16.42 (1.16\%) & 17.77 (1.26\%) \\
     Sim-4 & 19.14 (0.49\%) & 18.10 (0.47\%) \\
     Sim-5 & 15.86 (0.18\%) & 18.01 (0.21\%) \\
     Sim-6 & - & 17.84 (0.11\%) \\
     Sim-7 & - & 19.15 (0.07\%) \\
     ER-0 & 9.31 (128.54\%) & 7.76 (107.14\%) \\
     ER-1 & 16.98 (22.66\%) & 19.14 (25.54\%) \\
     ER-2 & 16.29 (4.12\%) & 19.07 (4.82\%) \\
     ER-3 & 17.43 (1.23\%) & 20.66 (1.46\%) \\
     ER-4 & 17.96 (0.46\%) & 21.03 (0.54\%) \\
     ER-5 & 21.16 (0.24\%) & 20.86 (0.24\%) \\
     ER-6 & - & 21.00 (0.13\%) \\
     \hline
   \end{tabular}
\end{table}

\smallskip
\noindentparagraph{DD Order Quality}
\Cref{fig:ddeval} shows the empirical distribution of $\epsilon_v$.
On large sparse graphs (NY Times, Twitter(WWW), Friendster), \ApproxDDb typically outputs orders that are extremely close to a perfect DD order: for small and moderate $c$, the mass is concentrated in the smallest $\epsilon$ bin (near $0$).
On dense graphs (ER and Sim), the distribution remains tightly concentrated on small $\epsilon$ (e.g., mostly $\epsilon \le 0.4$ for ER and mostly $\epsilon \le 4$ for Sim). 
Overall, these results suggest that \ApproxDDb produces high-quality DD orders in practice.

\begin{figure}[h!]
  \centering
  \begin{subfigure}{0.46\textwidth}
    \includegraphics[width=\textwidth]{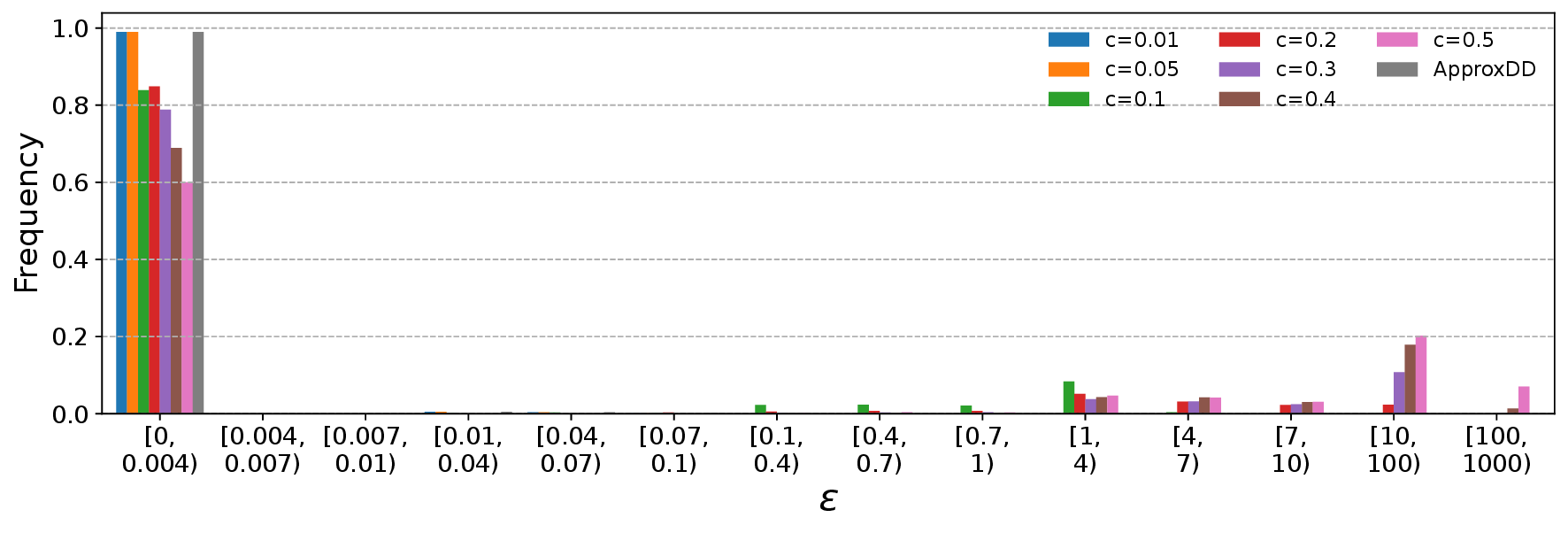}
    \caption{NY Times}
  \end{subfigure}
  \begin{subfigure}{0.46\textwidth}
    \includegraphics[width=\textwidth]{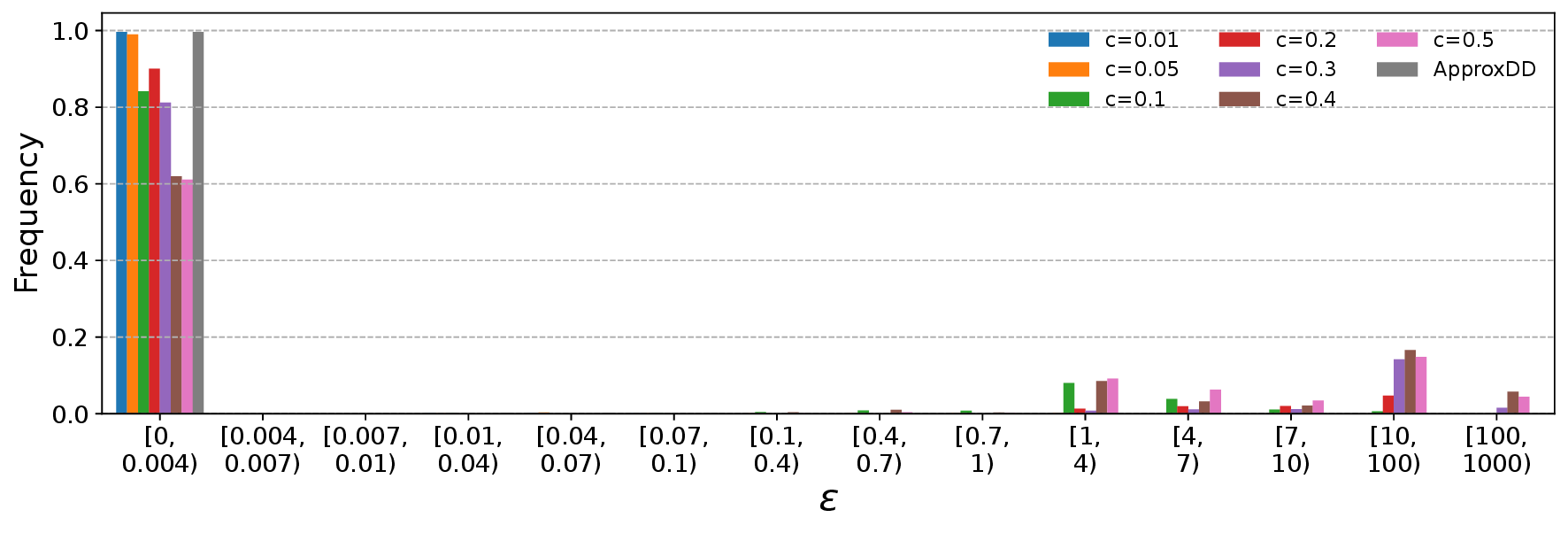}
    \caption{Twitter(WWW)}
  \end{subfigure}
  \begin{subfigure}{0.46\textwidth}
    \includegraphics[width=\textwidth]{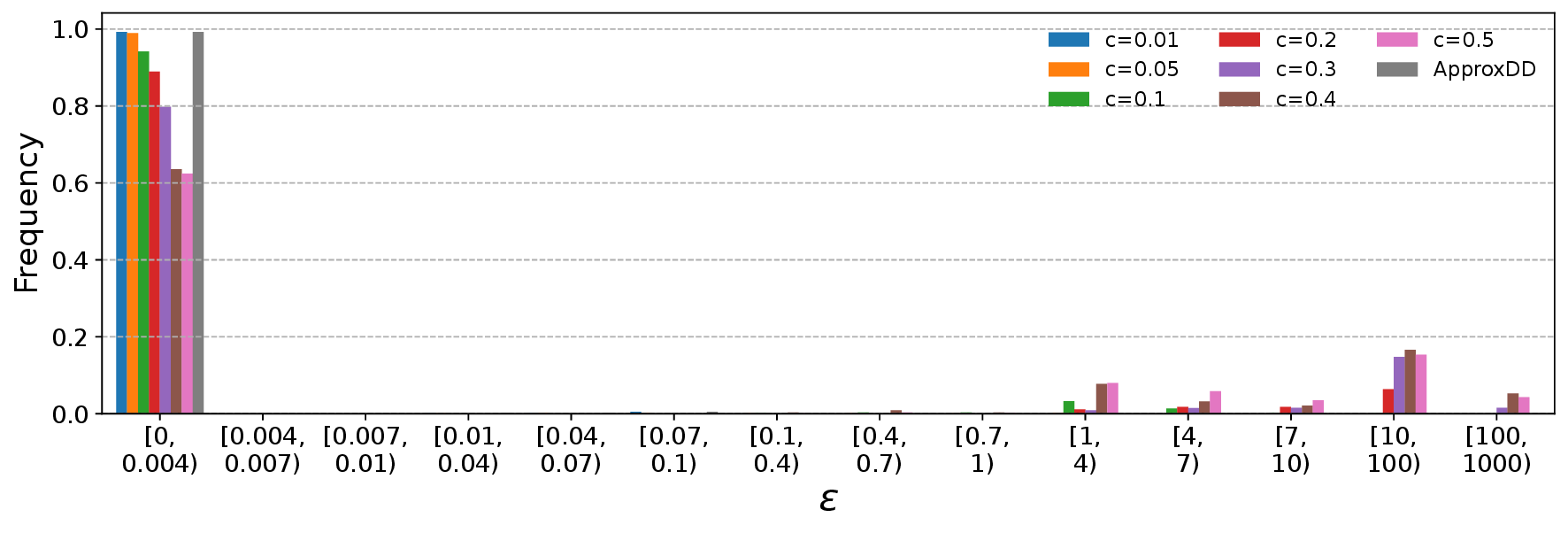}
    \caption{Twitter(MPI)}
  \end{subfigure}
  \begin{subfigure}{0.46\textwidth}
    \includegraphics[width=\textwidth]{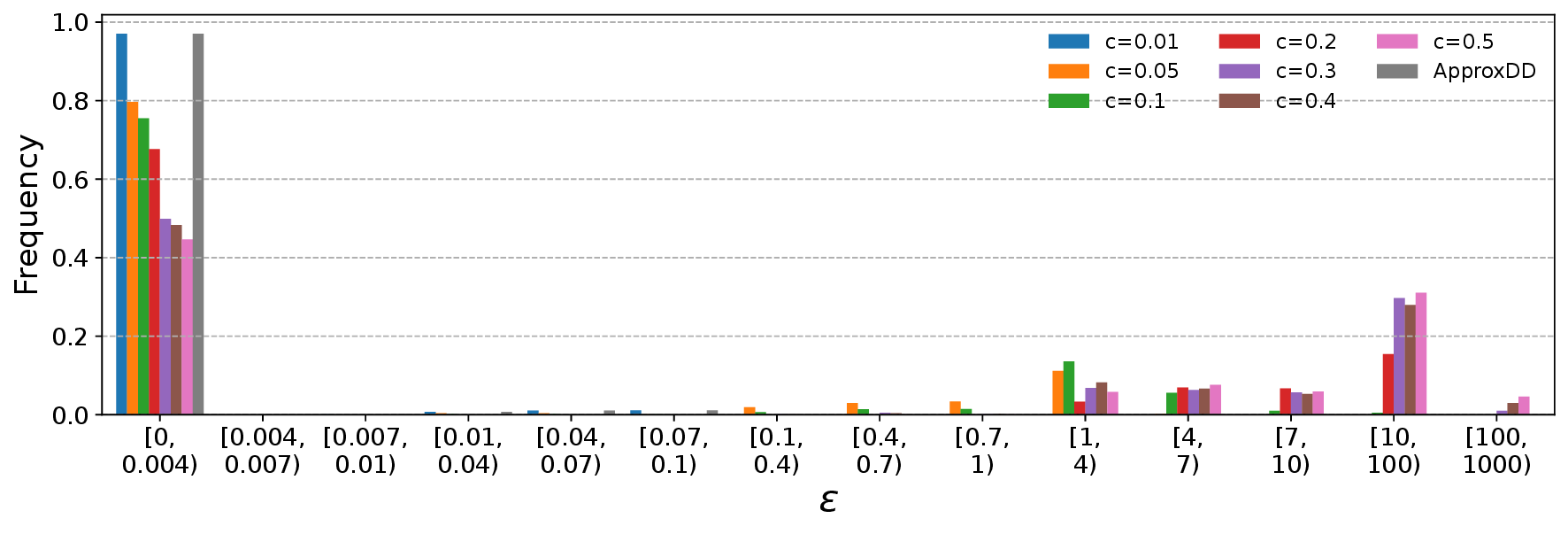}
    \caption{Friendster}
  \end{subfigure}
  \begin{subfigure}{0.46\textwidth}
    \includegraphics[width=\textwidth]{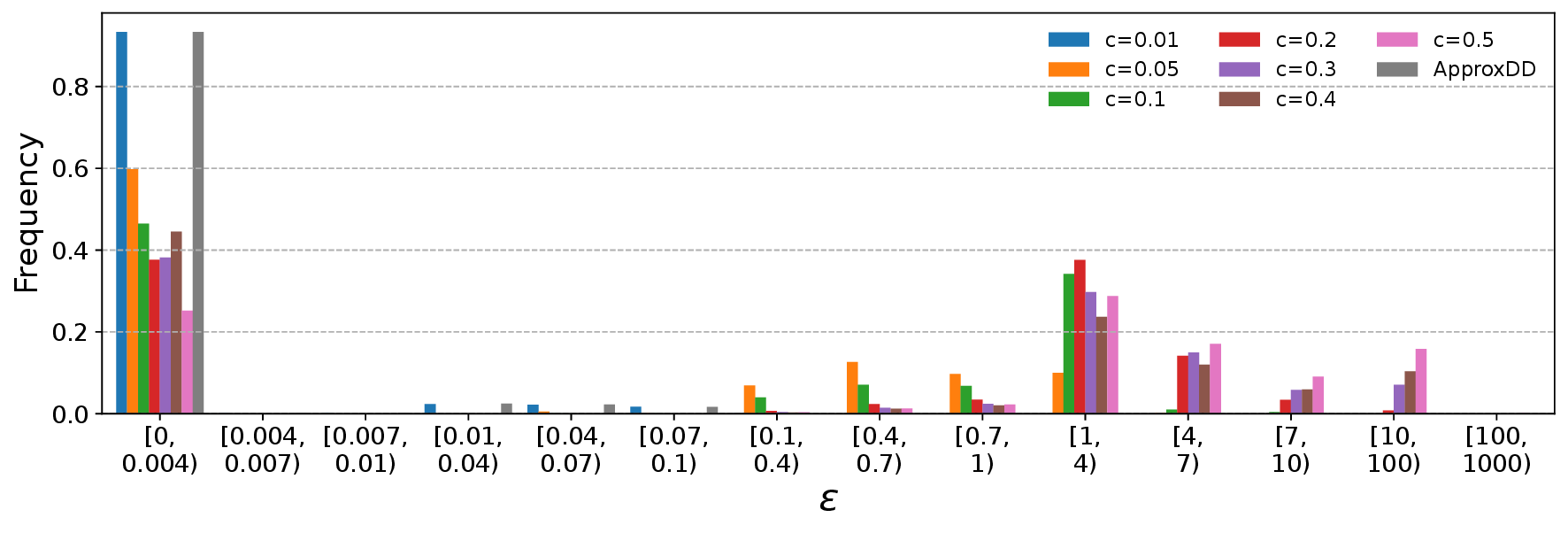}
    \caption{Sim-1}
  \end{subfigure}
  \begin{subfigure}{0.46\textwidth}
    \includegraphics[width=\textwidth]{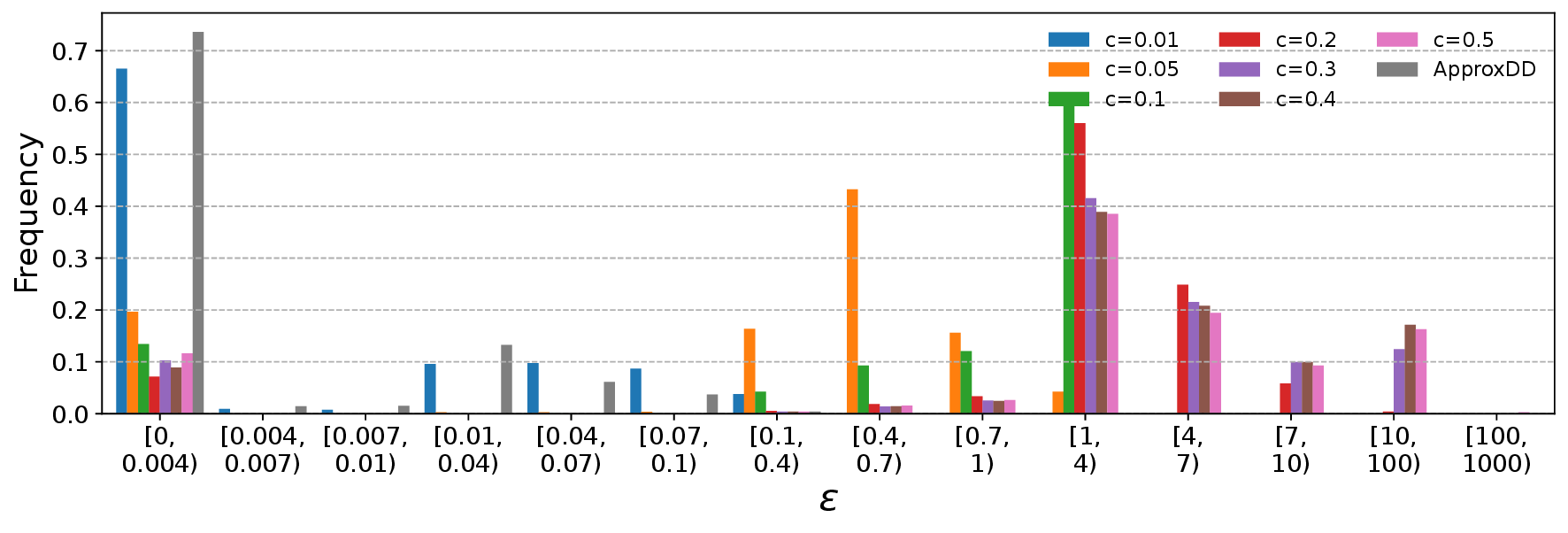}
    \caption{Sim-3}
  \end{subfigure}
  \begin{subfigure}{0.46\textwidth}
    \includegraphics[width=\textwidth]{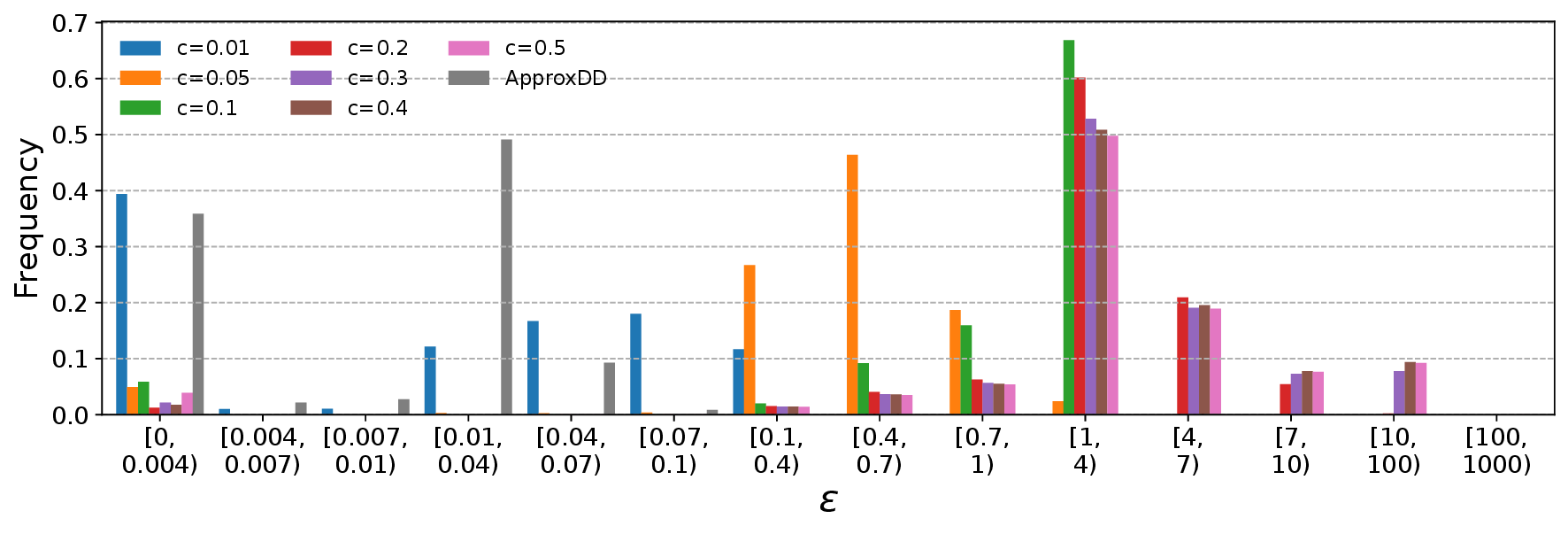}
    \caption{Sim-5}
  \end{subfigure}
  \begin{subfigure}{0.46\textwidth}
    \includegraphics[width=\textwidth]{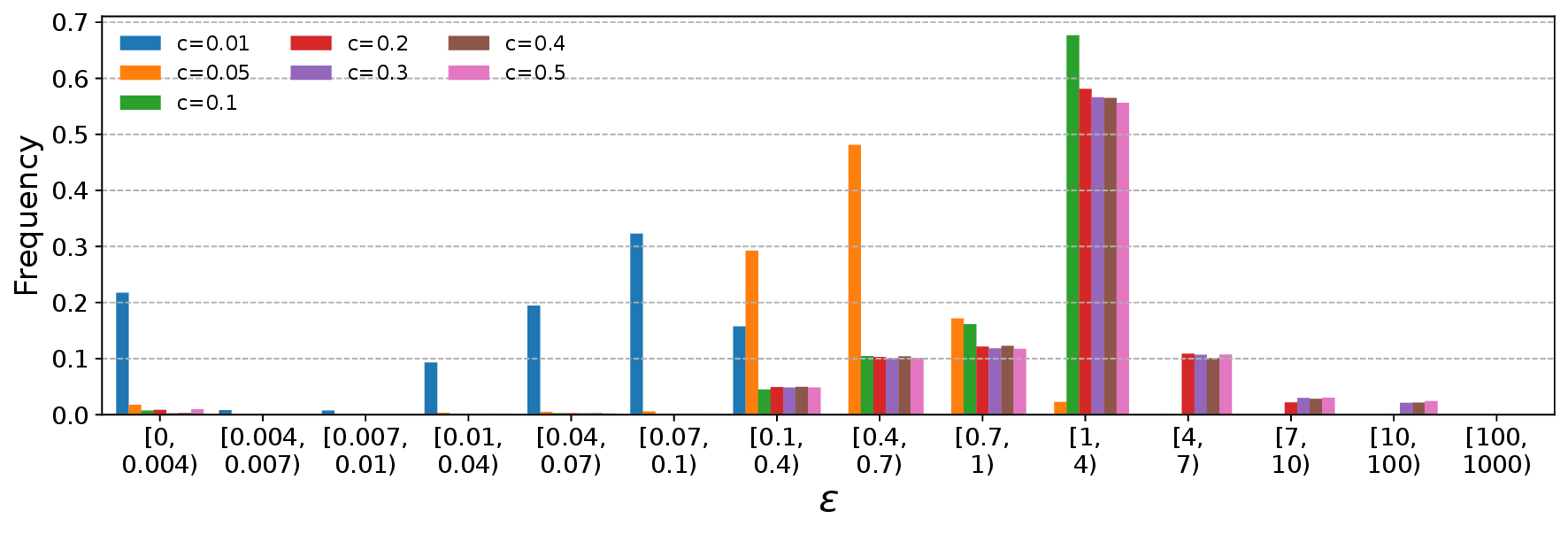}
    \caption{Sim-7}
  \end{subfigure}
  \begin{subfigure}{0.46\textwidth}
    \includegraphics[width=\textwidth]{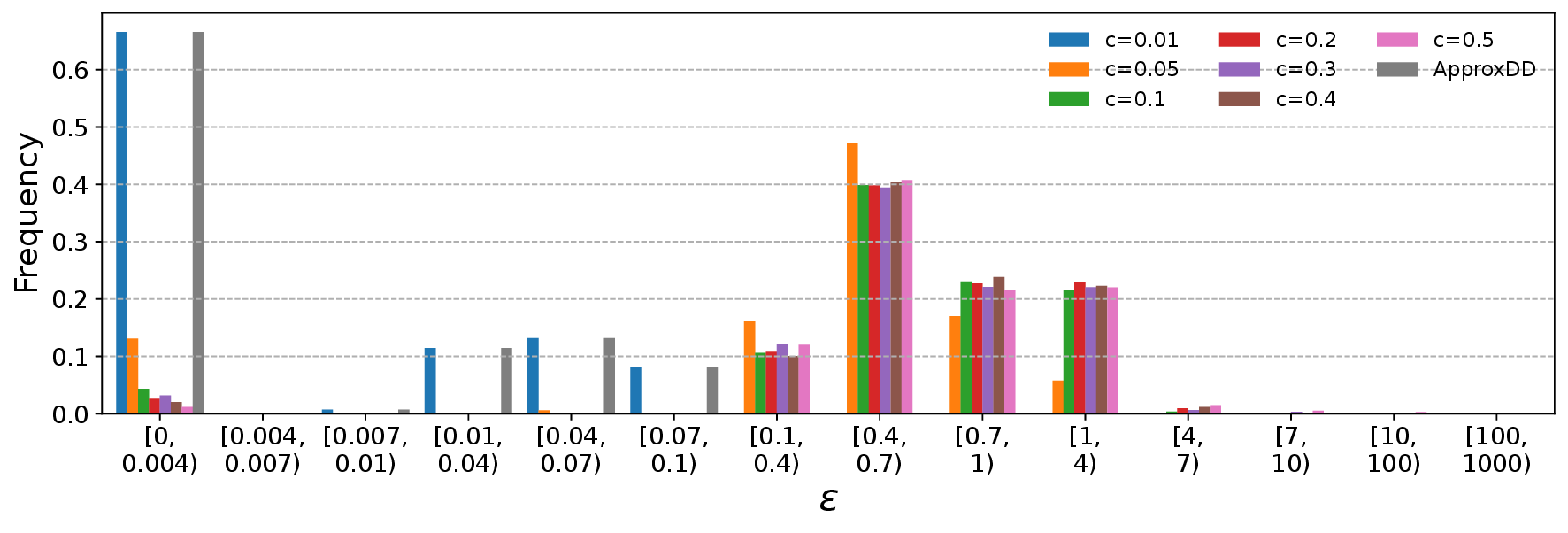}
    \caption{ER-1}
  \end{subfigure}
  \begin{subfigure}{0.46\textwidth}
    \includegraphics[width=\textwidth]{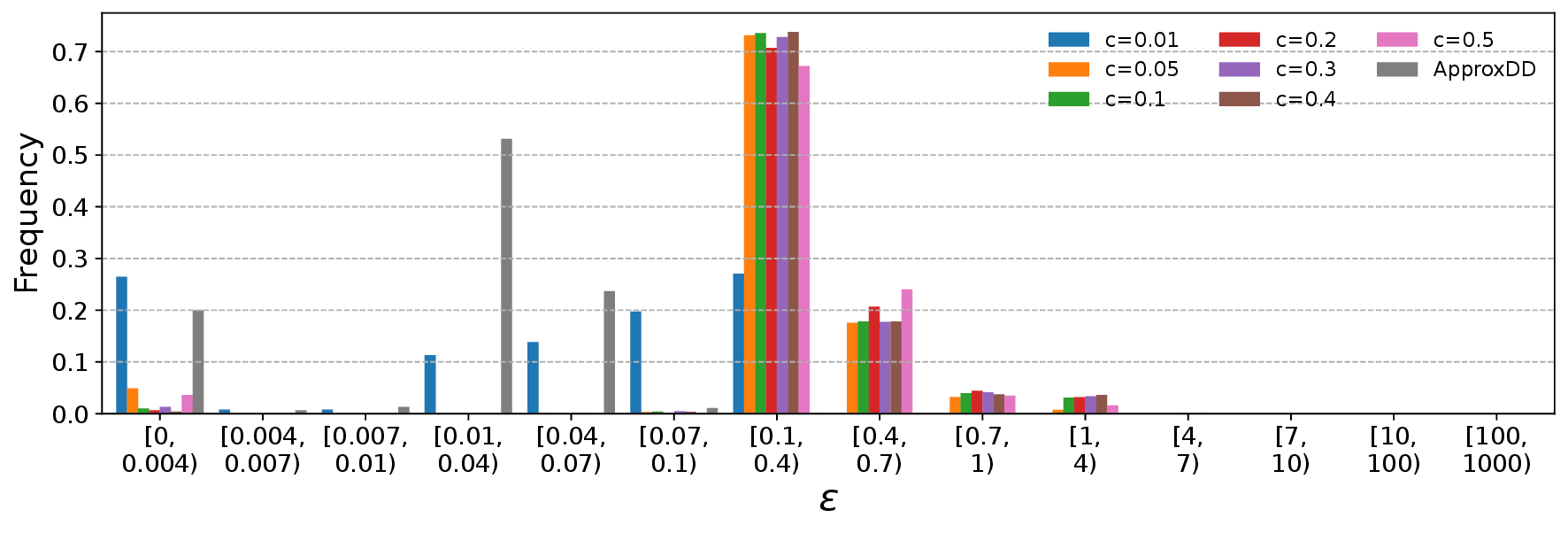}
    \caption{ER-2}
  \end{subfigure}
  \begin{subfigure}{0.46\textwidth}
    \includegraphics[width=\textwidth]{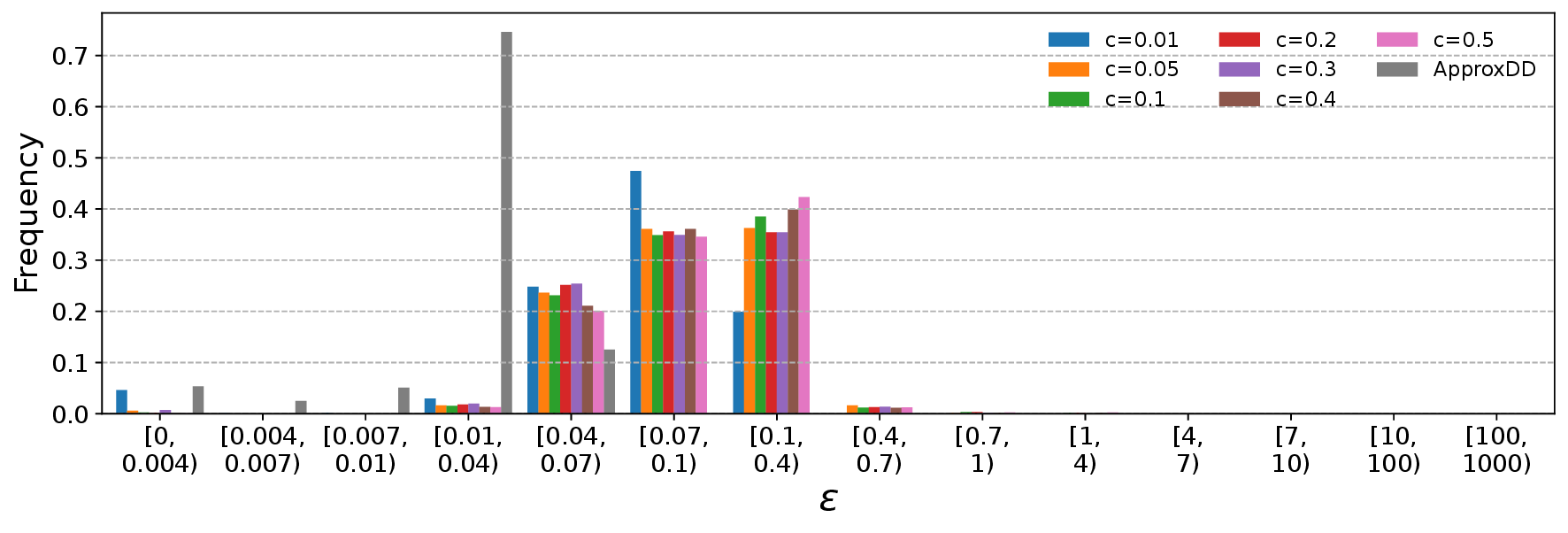}
    \caption{ER-4}
  \end{subfigure}
  \begin{subfigure}{0.46\textwidth}
    \includegraphics[width=\textwidth]{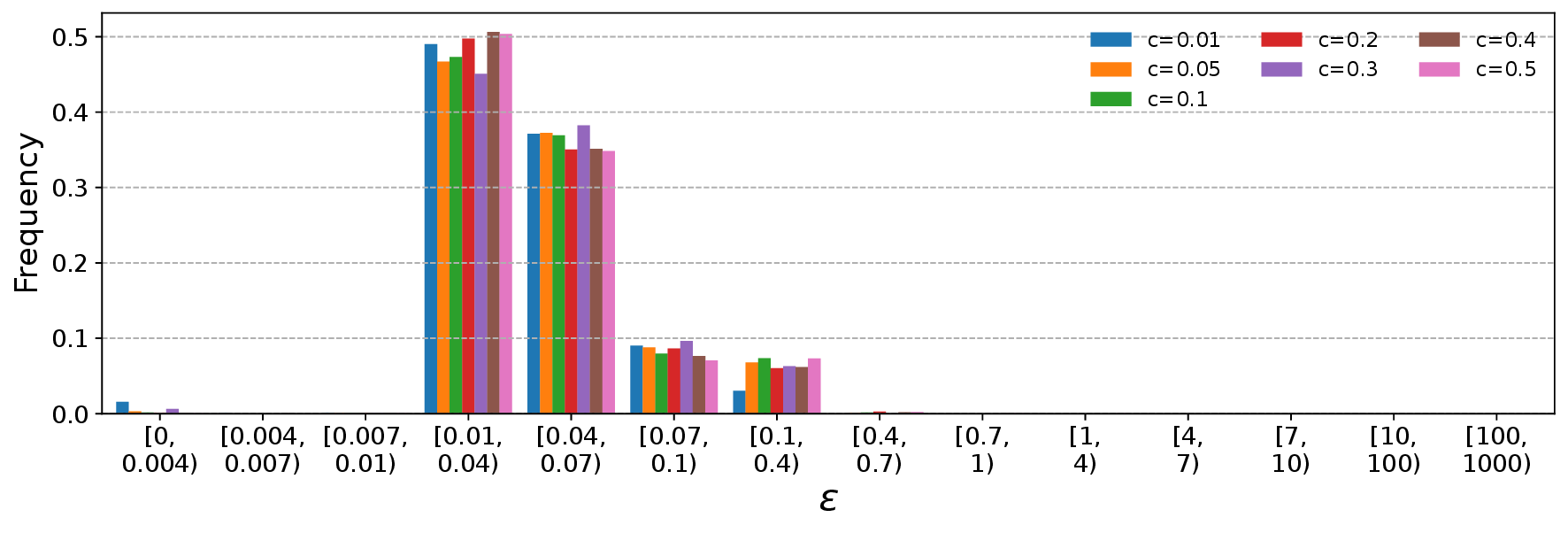}
    \caption{ER-6}
  \end{subfigure}
  \caption{Empirical distribution of $\epsilon_v$ under different $c$, using \ApproxDDb.}\label{fig:ddeval}
\end{figure}

\subsection{Computing the Graphlet Distribution}
\label{exp:distr}

We test the performance of our full algorithm for approximating the $k$-graphlet distribution, \ApproxDDb+\gd (\Cref{alg:full}), and of its competitor \ApproxDD+\gd (the algorithm of~\cite{Bourreau2024}).
Note that the algorithm originally described in~\cite{Bourreau2024} actually uses \emph{rejection}, rather than using \emph{counters}.
We also test this ``default'' competitor, \ApproxDD+\gdr, where \gdr is introduced in \Cref{alg:gdr}.
As expected, \ApproxDD+\gdr is much more inefficient than the ``counter'' variants, as it gains information only upon accepting a graphlet, which happens with probability only $k^{-O(k)}$.

We ran the algorithms for $k=4,5,6$, using as ground truth the average of 5 runs of Motivo \cite{Bressan2019-VLDB}.
We set the parameters of our algorithms to $\eps=0.1$, $c=0.1$, and $\delta=0.02$.
For each algorithm and each dataset, we start by computing a $\frac{1}{1+\eps}$-DD order.
We then ran 5 batches of parallel sampling.
After each batch, we computed the L$_\infty$ distance between (i) the ground truth and (ii) the $k$-graphlet distribution estimate obtained so far.
This experiment was then repeated 5 times, and we took the average.

\smallskip
\noindentparagraph{Accuracy versus passes}
\Cref{fig:Linfty} shows the number of passes versus L$_\infty$ distance achieved by the algorithms.
(Due to space limitations we show the plots only for a representative subset of datasets).
Note that the number of passes is the sum of the passes taken by preprocessing and of $2k-1$ passes for each sampling batch.
The first observation is that, on every dataset and for every value of $k$, save a few exceptions at $k=4$, \ApproxDDb+\gd and \ApproxDD+\gd yield an error smaller than \ApproxDD+\gdr.
The gap between the two errors increases with $k$, with \ApproxDD+\gdr lagging much behind \ApproxDDb+\gd and \ApproxDD+\gd for $k=6$.
This is explained by the aforementioned $k^{-O(k)}$ acceptance probability of \ApproxDD+\gdr, and confirms that the ``counter'' approach is significantly better than the ``rejection'' approach.
The second observation is that, again, \ApproxDDb+\gd uses fewer passes than \ApproxDD+\gd, outperforming it by orders of magnitude on dense graphs.
Note that the sampling phase is identical between the two algorithms (but is ran with different DD-orders), and seems to have a relatively small impact on the error-vs-passes tradeoff, if compared with the preprocessing phase.
This is coherent with the results of \Cref{exp:DD} and with our theoretical bounds.
Overall, \ApproxDDb+\gd ensures an L$_\infty$ error of at most $0.01$ with $<50$ passes for $k=4$, of at most $0.02$ with $<60$ passes for $k=5$, and at most $0.05$ with $<80$ passes for $k=6$.

\begin{figure}
  \centering
  \begin{subfigure}{0.32\textwidth}
    \includegraphics[width=\textwidth,trim=21 25 0 0,clip]{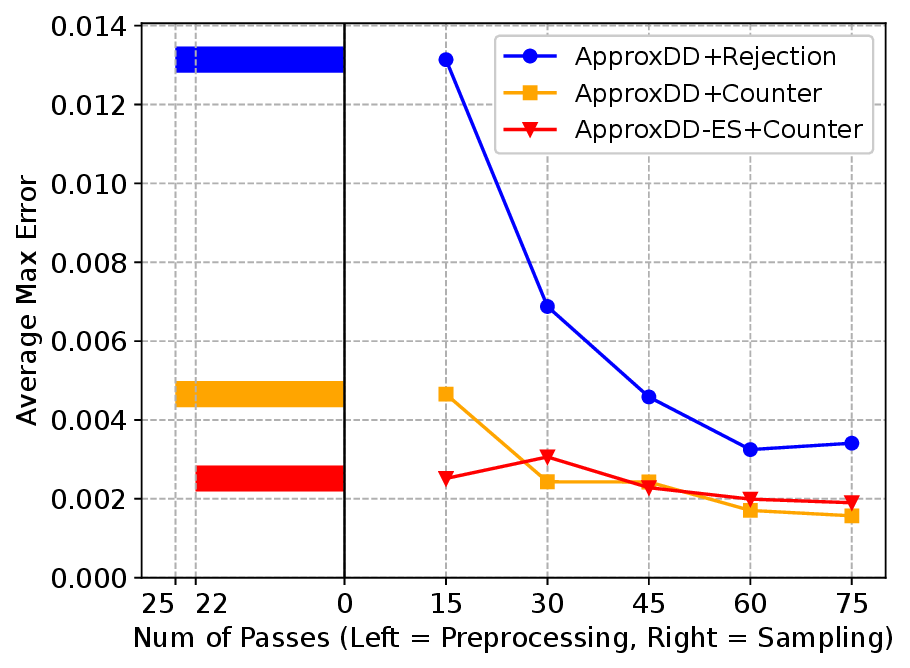}
    \caption{NY Times, $k=4$}
  \end{subfigure}
  \begin{subfigure}{0.32\textwidth}
    \includegraphics[width=\textwidth,trim=21 25 0 0,clip]{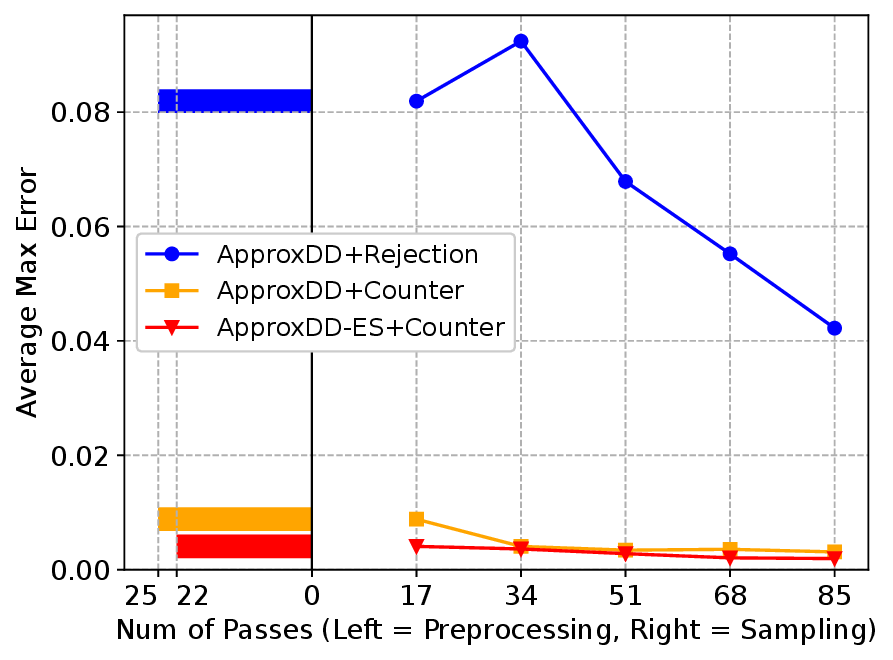}
    \caption{NY Times, $k=5$}
  \end{subfigure}
  \begin{subfigure}{0.32\textwidth}
    \includegraphics[width=\textwidth,trim=21 25 0 0,clip]{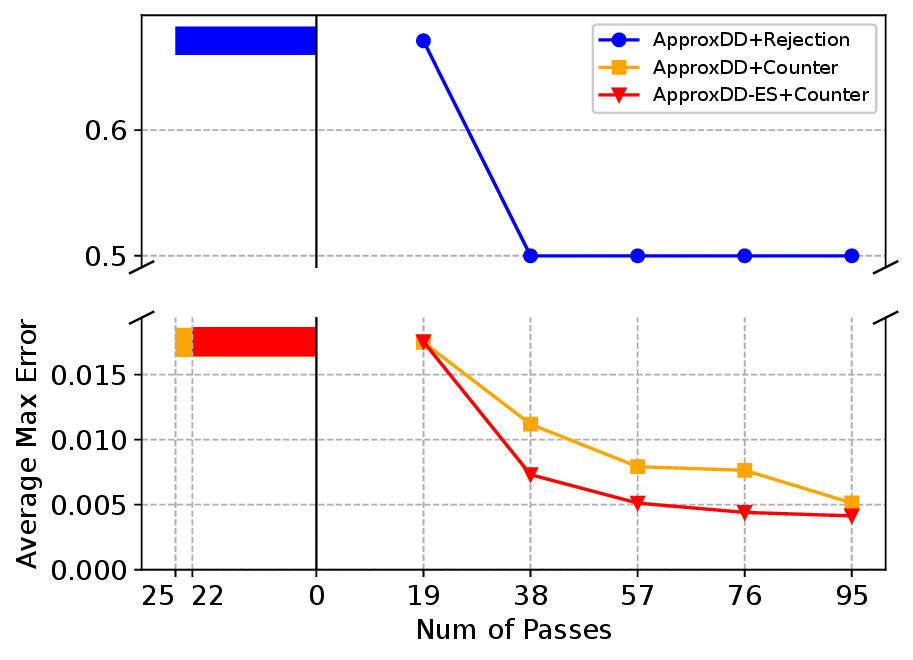}
    \caption{NY Times, $k=6$}
  \end{subfigure}
  \begin{subfigure}{0.32\textwidth}
    \includegraphics[width=\textwidth,trim=21 25 0 0,clip]{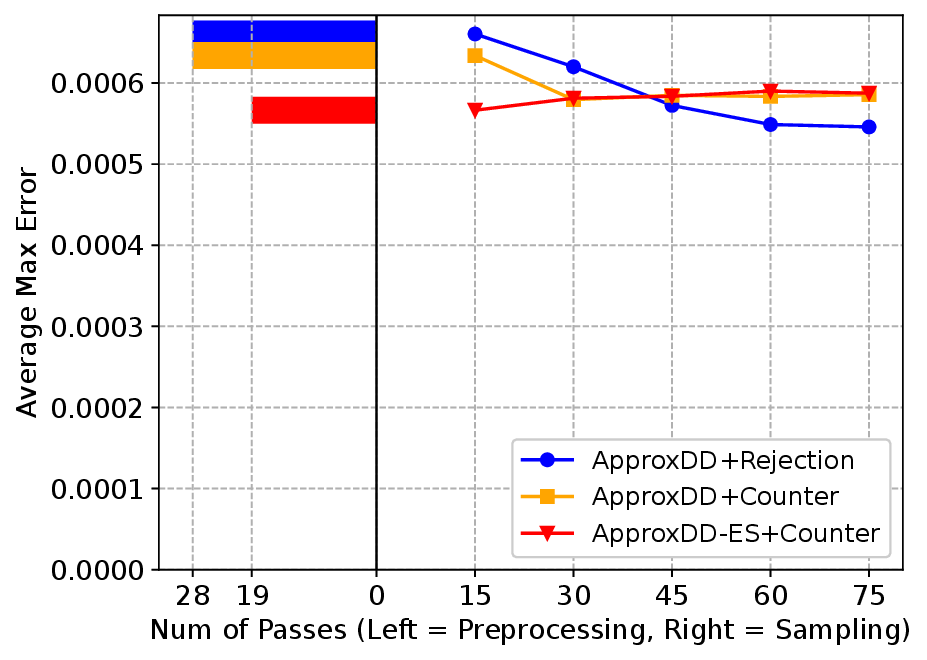}
    \caption{Twitter(WWW), $k=4$}
  \end{subfigure}
  \begin{subfigure}{0.32\textwidth}
    \includegraphics[width=\textwidth,trim=21 25 0 0,clip]{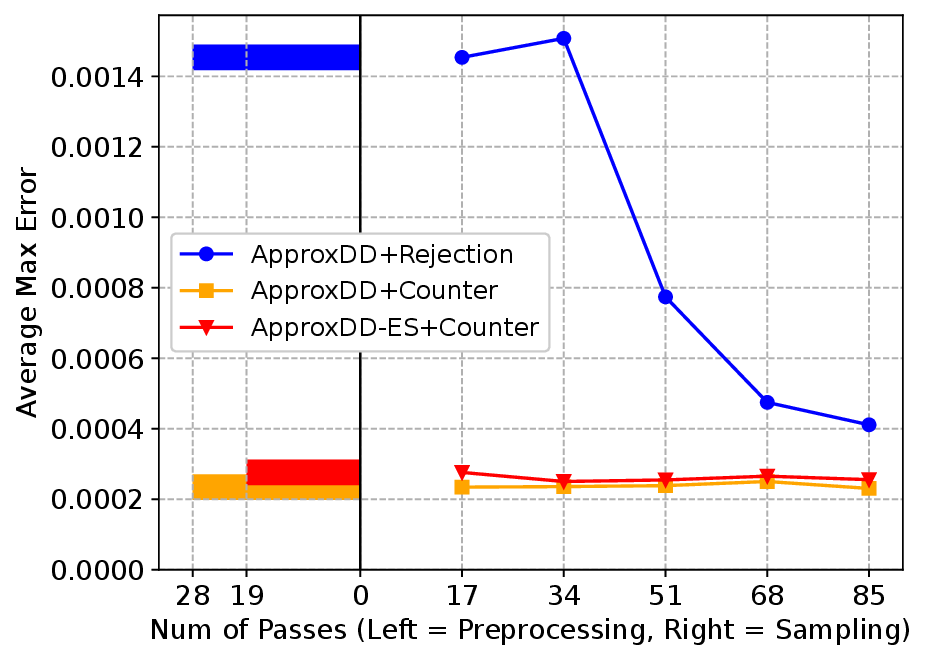}
    \caption{Twitter(WWW), $k=5$}
  \end{subfigure}
  \begin{subfigure}{0.32\textwidth}
    \includegraphics[width=\textwidth,trim=21 25 0 0,clip]{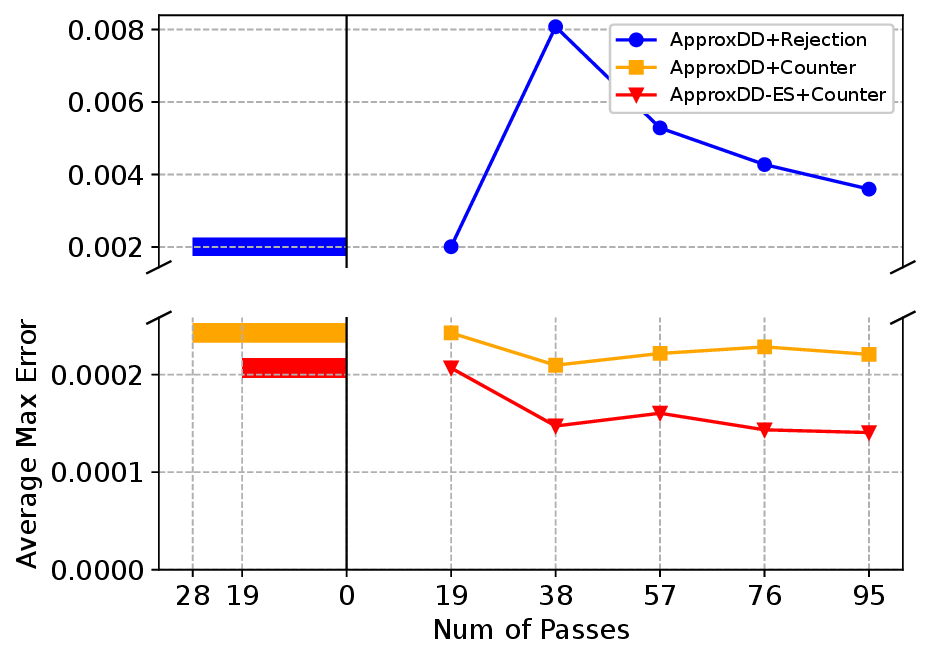}
    \caption{Twitter(WWW), $k=6$}
  \end{subfigure}
  \begin{subfigure}{0.32\textwidth}
    \includegraphics[width=\textwidth,trim=21 25 0 0,clip]{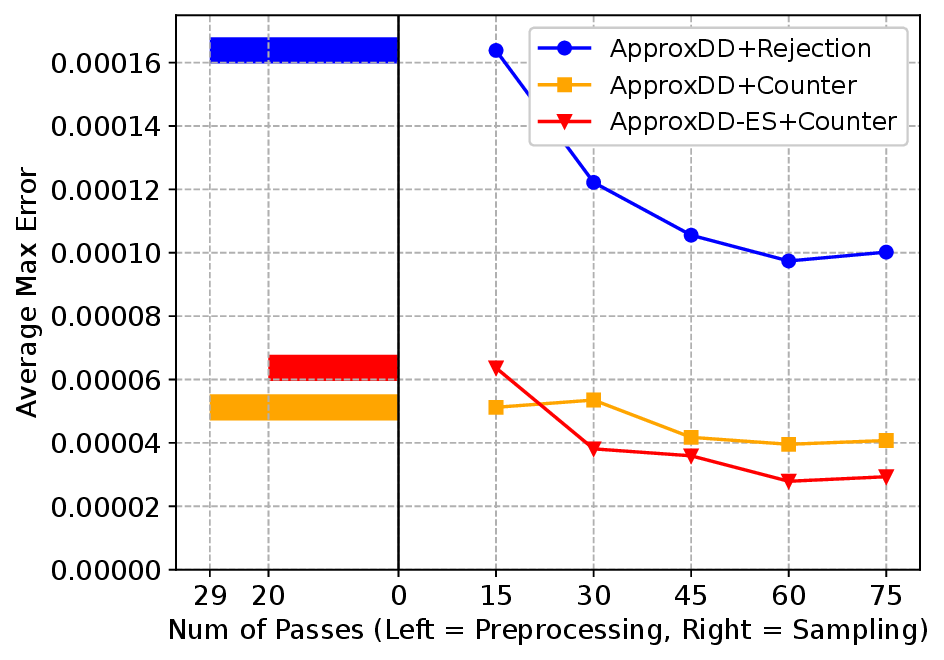}
    \caption{Twitter(MPI), $k=4$}
  \end{subfigure}
  \begin{subfigure}{0.32\textwidth}
    \includegraphics[width=\textwidth,trim=21 25 0 0,clip]{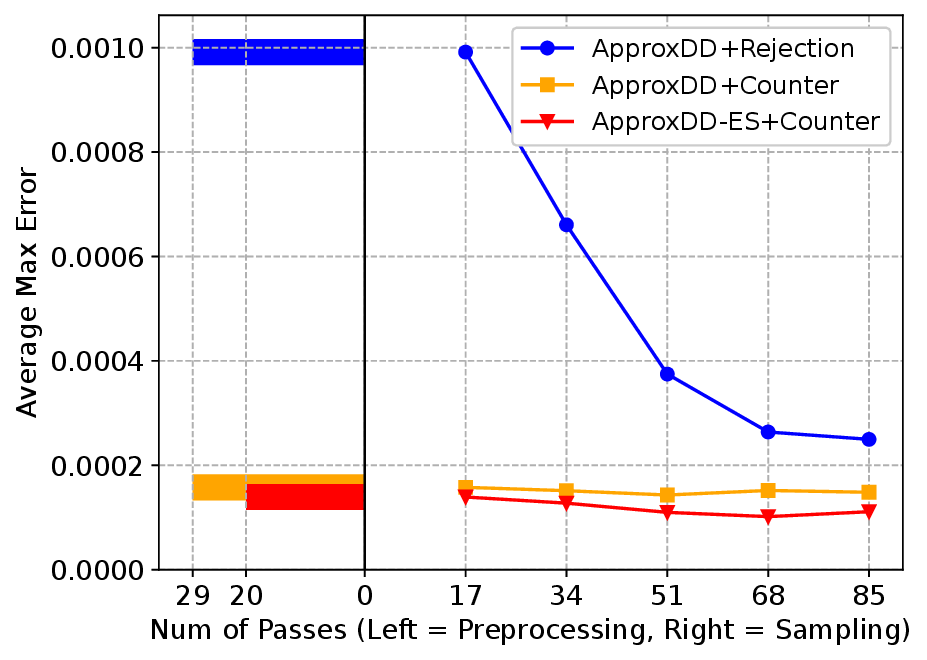}
    \caption{Twitter(MPI), $k=5$}
  \end{subfigure}
  \begin{subfigure}{0.32\textwidth}
    \includegraphics[width=\textwidth,trim=21 25 0 0,clip]{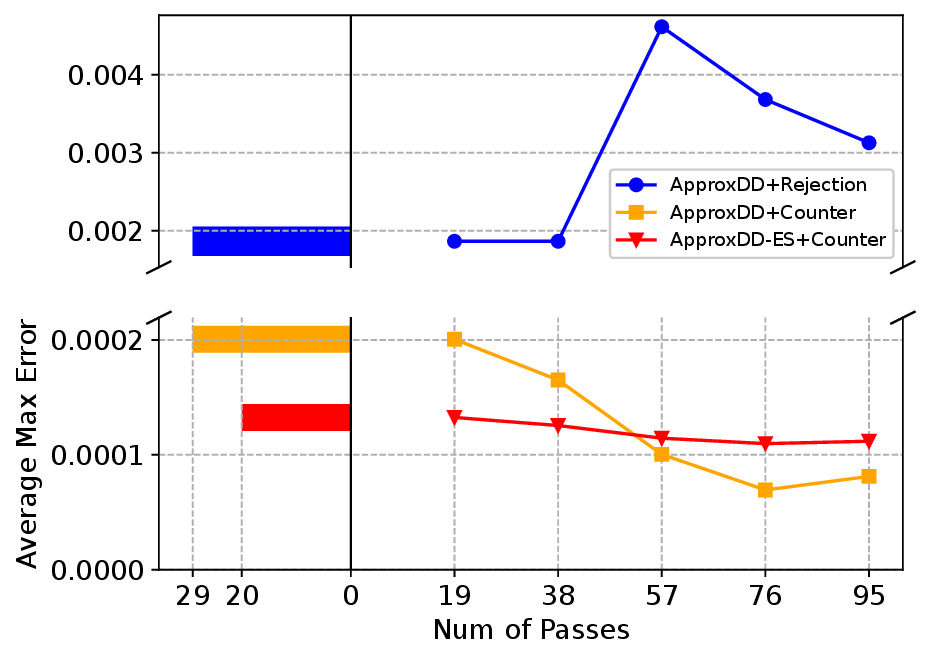}
    \caption{Twitter(MPI), $k=6$}
  \end{subfigure}
  \begin{subfigure}{0.32\textwidth}
    \includegraphics[width=\textwidth,trim=21 25 0 0,clip]{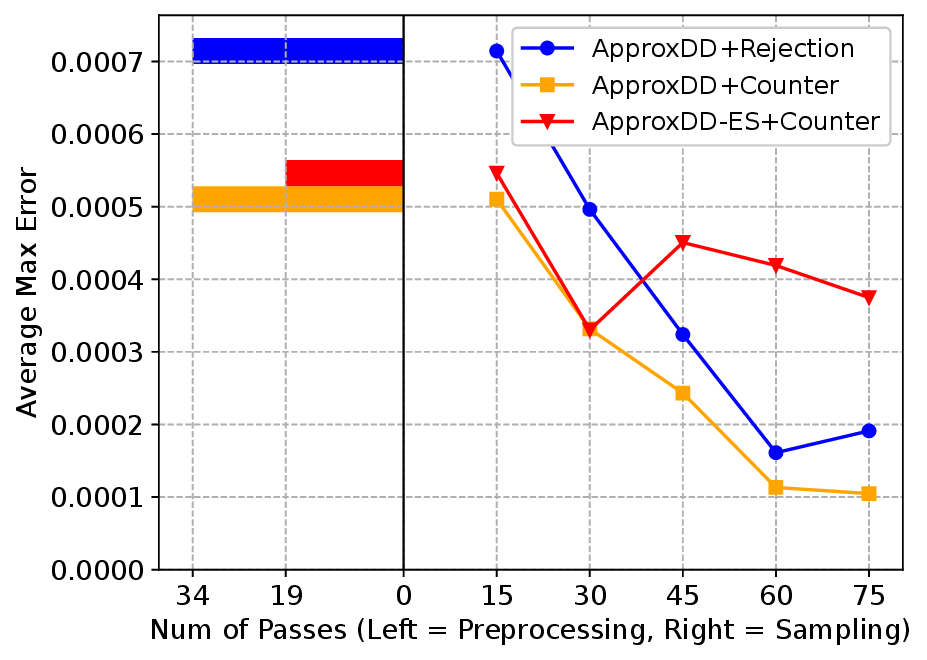}
    \caption{Friendster, $k=4$}
  \end{subfigure}
  \begin{subfigure}{0.32\textwidth}
    \includegraphics[width=\textwidth,trim=21 25 0 0,clip]{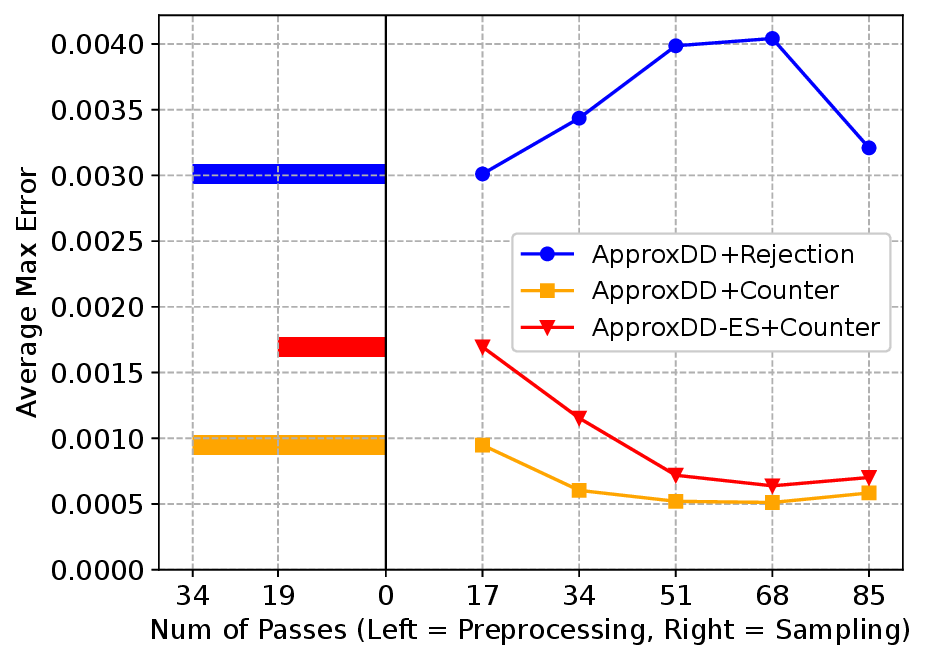}
    \caption{Friendster, $k=5$}
  \end{subfigure}
  \begin{subfigure}{0.32\textwidth}
    \includegraphics[width=\textwidth,trim=21 25 0 0,clip]{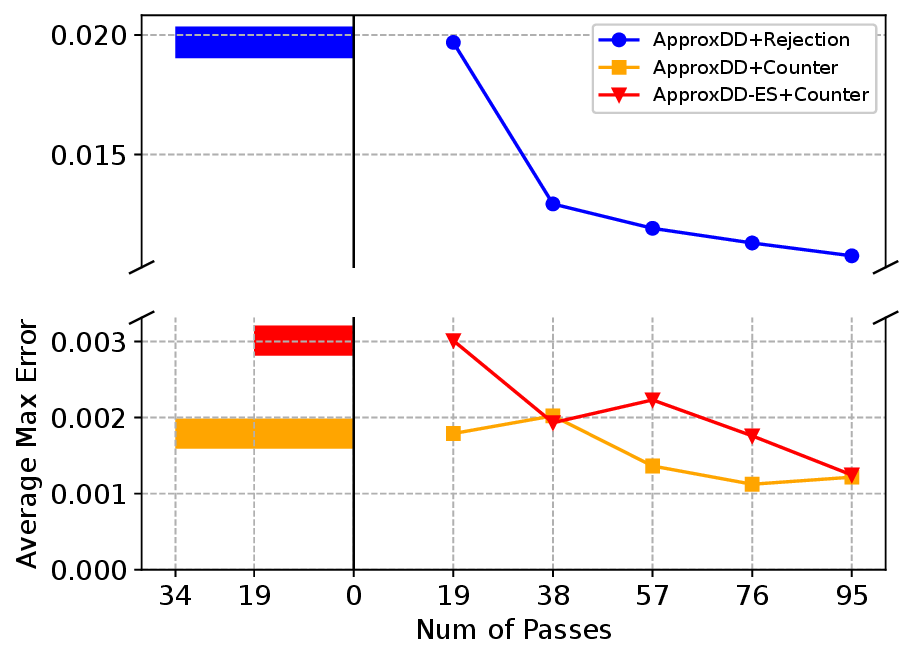}
    \caption{Friendster, $k=6$}
  \end{subfigure}
  \begin{subfigure}{0.32\textwidth}
    \includegraphics[width=\textwidth,trim=21 25 0 0,clip]{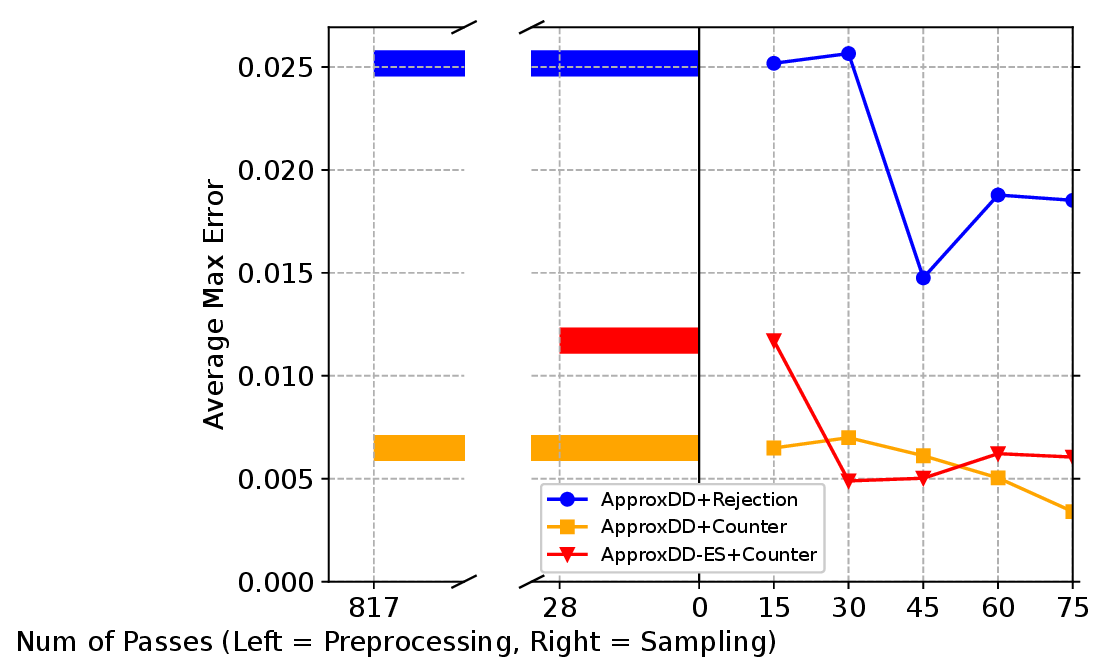}
    \caption{Sim-3, $k=4$}
  \end{subfigure}
  \begin{subfigure}{0.32\textwidth}
    \includegraphics[width=\textwidth,trim=21 25 0 0,clip]{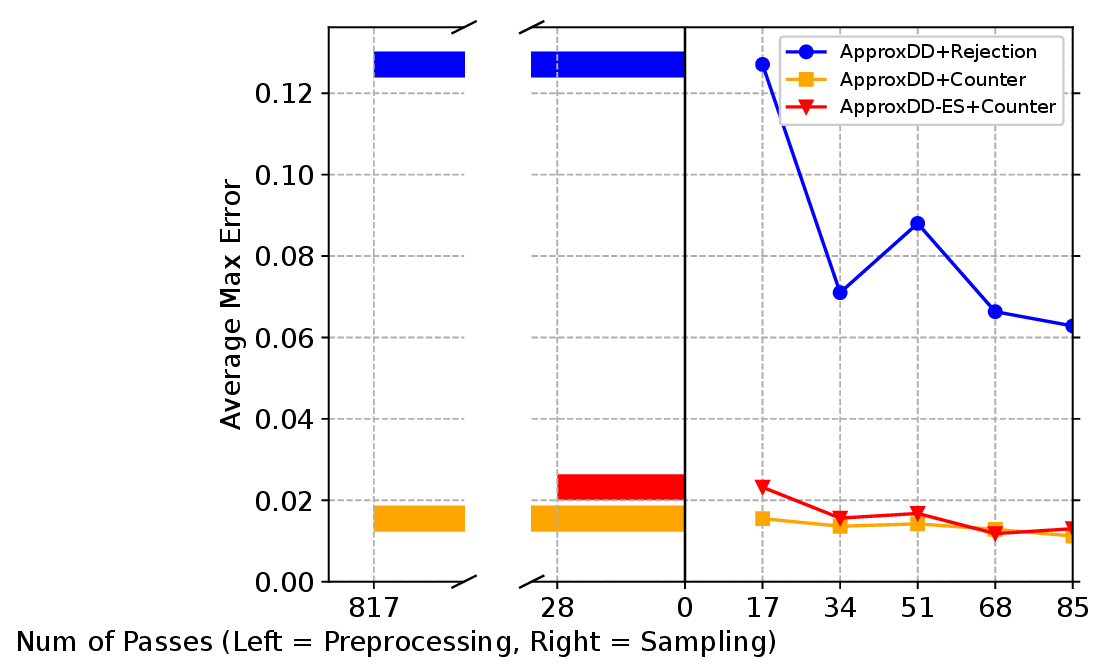}
    \caption{Sim-3, $k=5$}
  \end{subfigure}
  \begin{subfigure}{0.32\textwidth}
    \includegraphics[width=\textwidth,trim=21 25 0 0,clip]{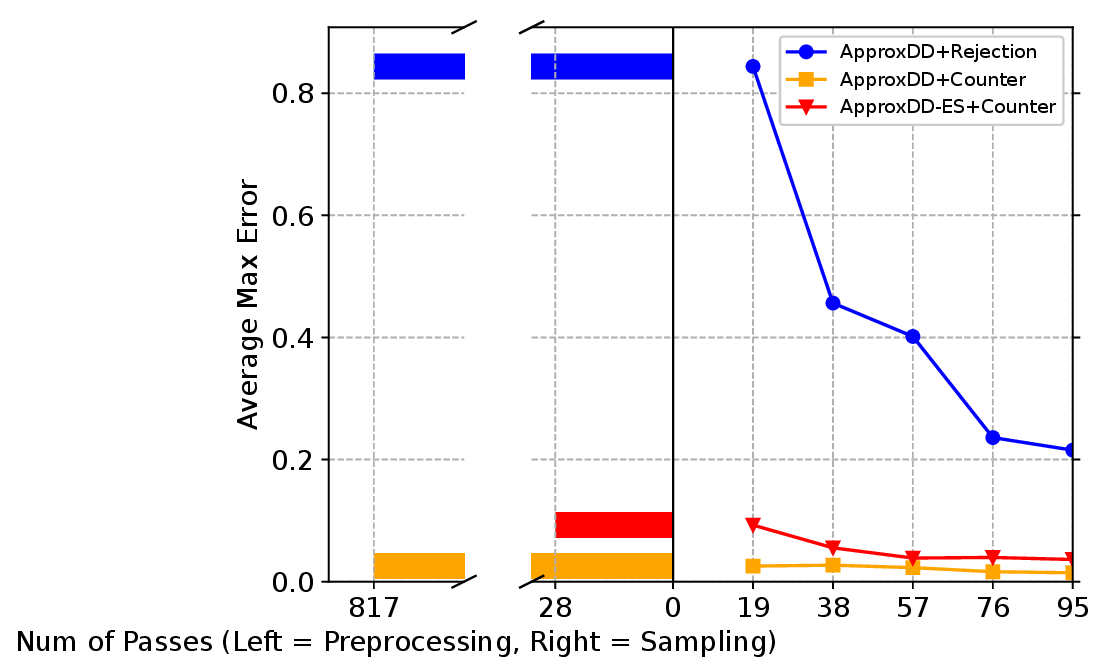}
    \caption{Sim-3, $k=6$}
  \end{subfigure}
    \caption{L$_{\infty}$ distance between the ground-truth and the estimated $k$-graphlet distribution as a function of the number of passes for sampling. The X-axis shows the number of passes (preprocessing on the left, sampling on the right). Missing points for \ApproxDD mean it did not terminate within 36 hours.}
    \label{fig:Linfty}
\end{figure}

\begin{figure}\ContinuedFloat
  \begin{subfigure}{0.32\textwidth}
    \includegraphics[width=\textwidth,trim=21 25 0 0,clip]{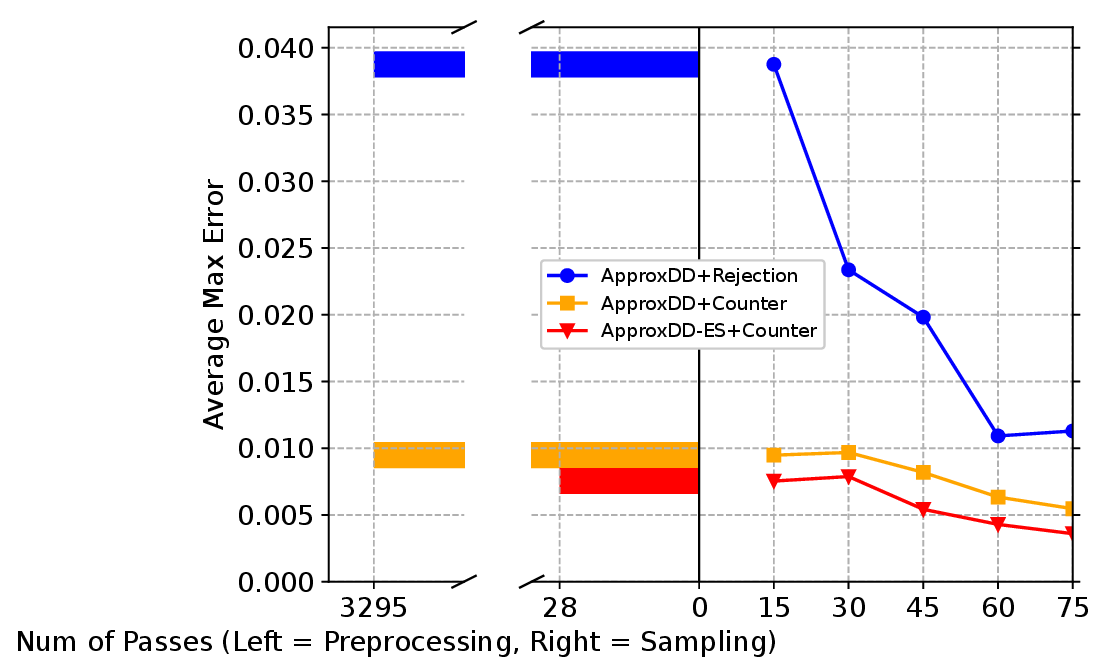}
    \caption{Sim-5, $k=4$}
  \end{subfigure}
  \begin{subfigure}{0.32\textwidth}
    \includegraphics[width=\textwidth,trim=21 25 0 0,clip]{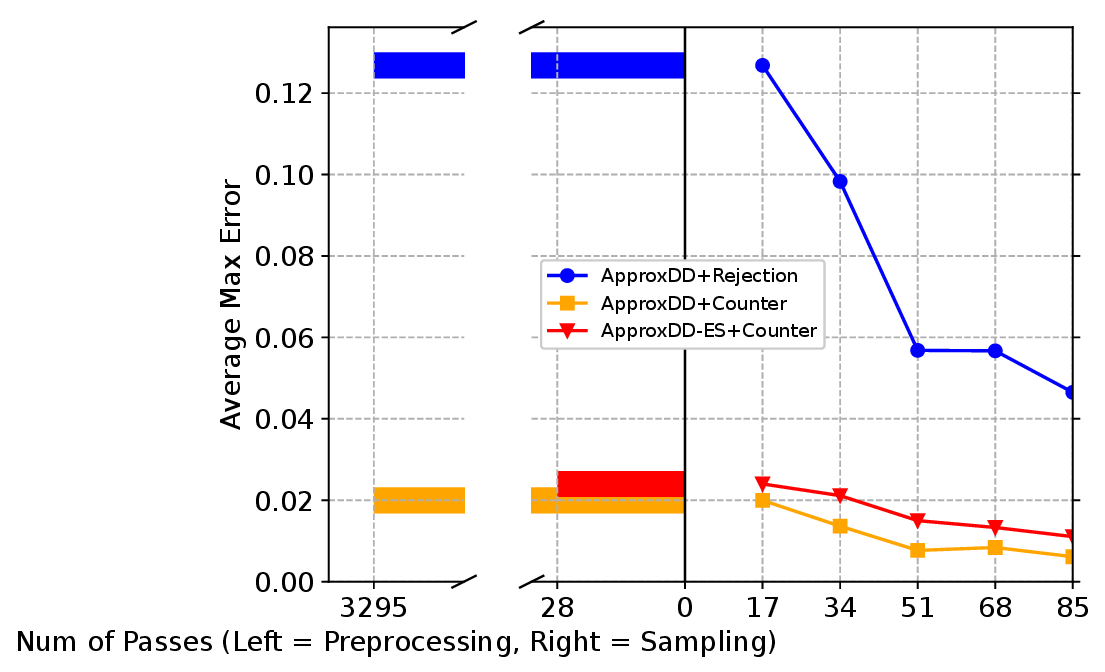}
    \caption{Sim-5, $k=5$}
  \end{subfigure}
  \begin{subfigure}{0.32\textwidth}
    \includegraphics[width=\textwidth,trim=21 25 0 0,clip]{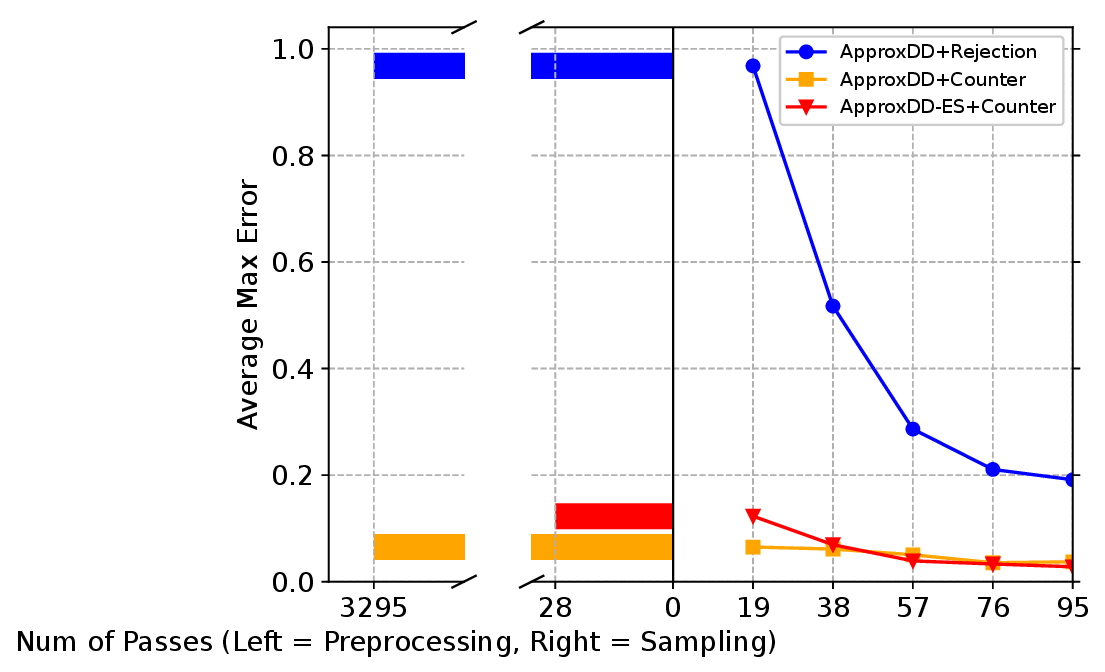}
    \caption{Sim-5, $k=6$}
  \end{subfigure}
  \begin{subfigure}{0.32\textwidth}
    \includegraphics[width=\textwidth,trim=21 25 0 0,clip]{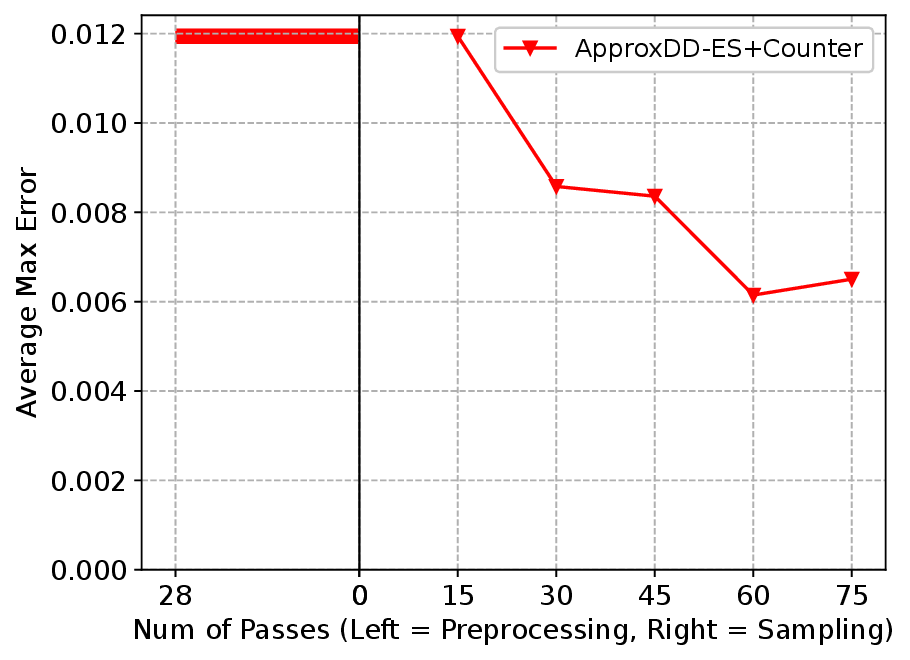}
    \caption{Sim-7, $k=4$}
  \end{subfigure}
  \begin{subfigure}{0.32\textwidth}
    \includegraphics[width=\textwidth,trim=21 25 0 0,clip]{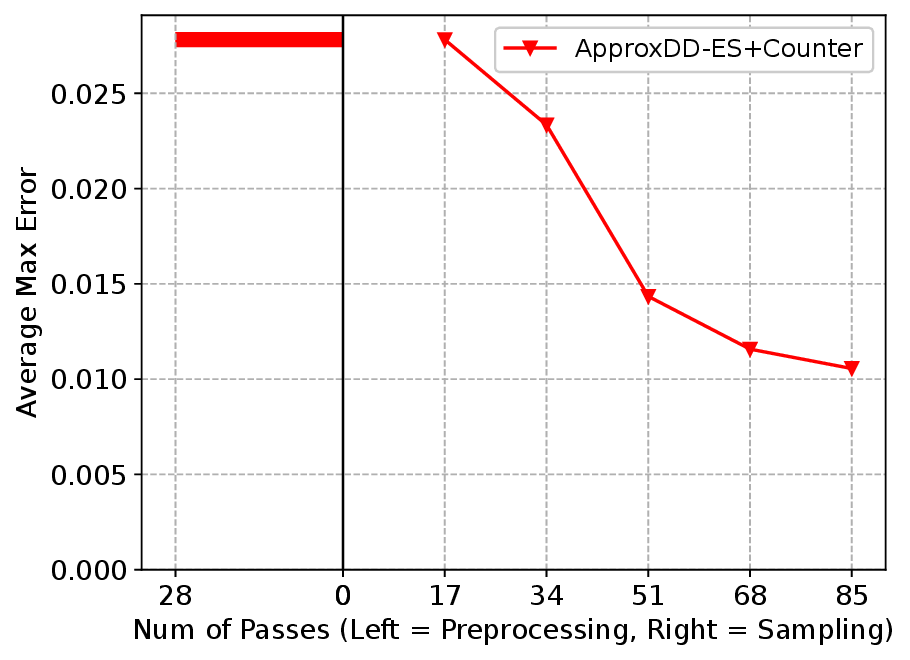}
    \caption{Sim-7, $k=5$}
  \end{subfigure}
  \begin{subfigure}{0.32\textwidth}
    \includegraphics[width=\textwidth,trim=21 25 0 0,clip]{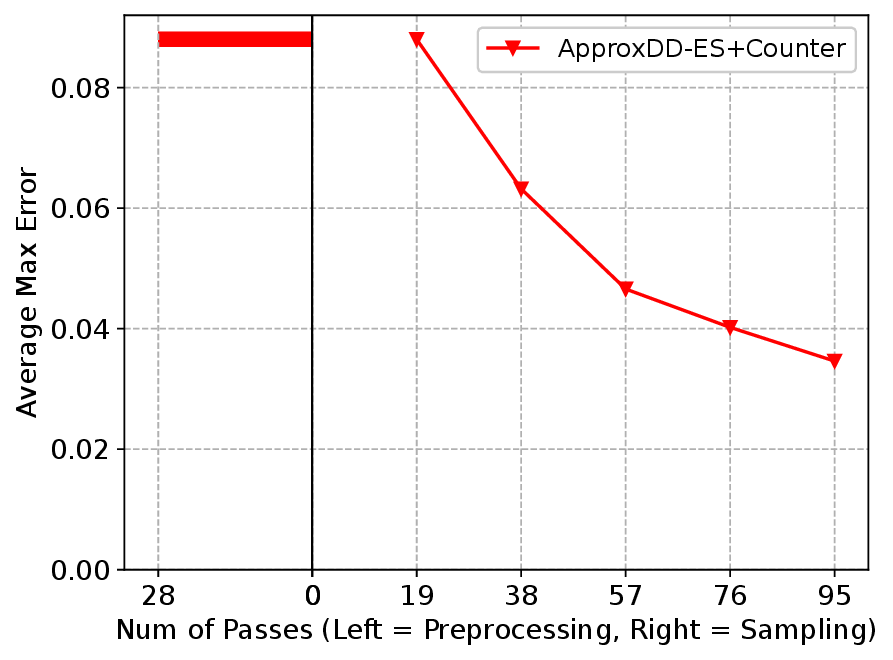}
    \caption{Sim-7, $k=6$}
  \end{subfigure}
  \begin{subfigure}{0.32\textwidth}
    \includegraphics[width=\textwidth,trim=21 25 0 0,clip]{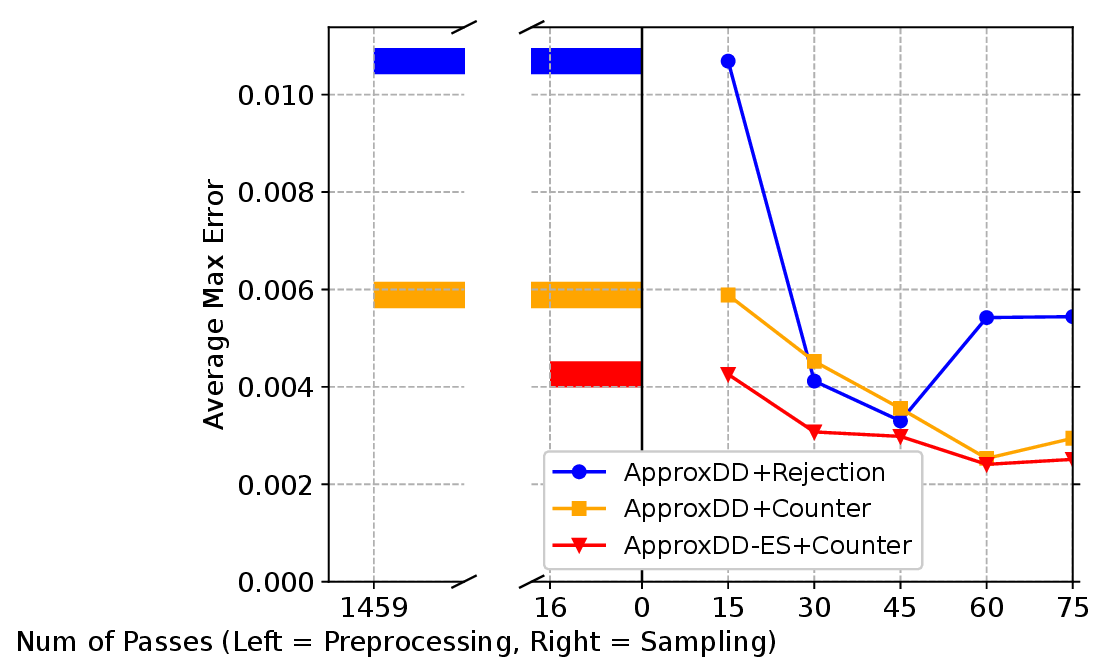}
    \caption{ER-2, $k=4$}
  \end{subfigure}
  \begin{subfigure}{0.32\textwidth}
    \includegraphics[width=\textwidth,trim=21 25 0 0,clip]{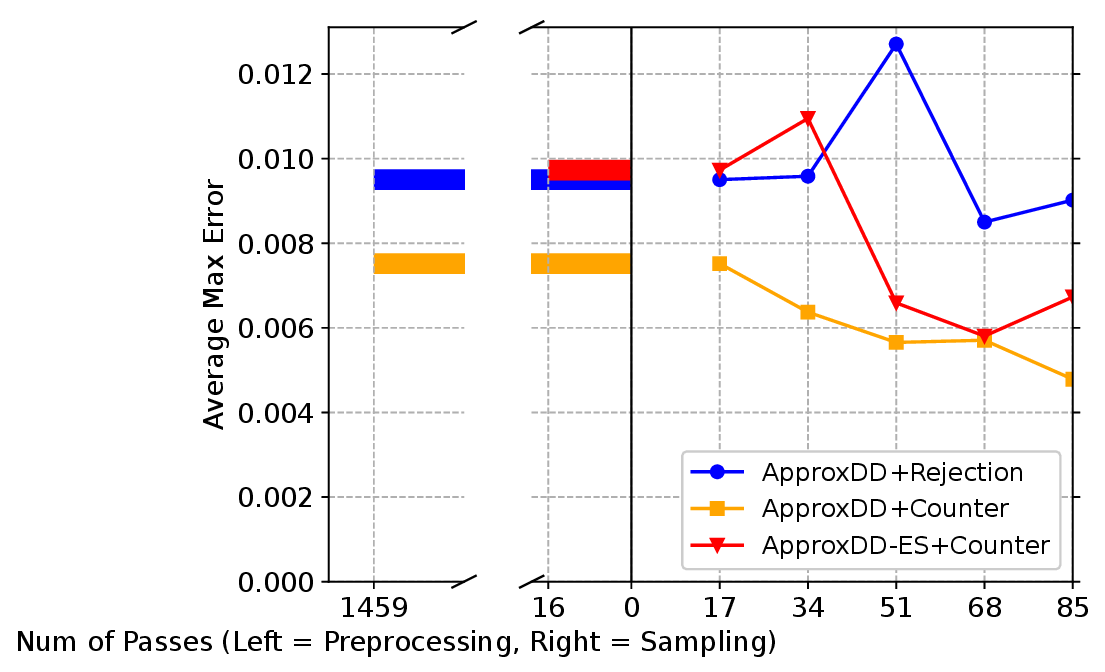}
    \caption{ER-2, $k=5$}
  \end{subfigure}
  \begin{subfigure}{0.32\textwidth}
    \includegraphics[width=\textwidth,trim=21 25 0 0,clip]{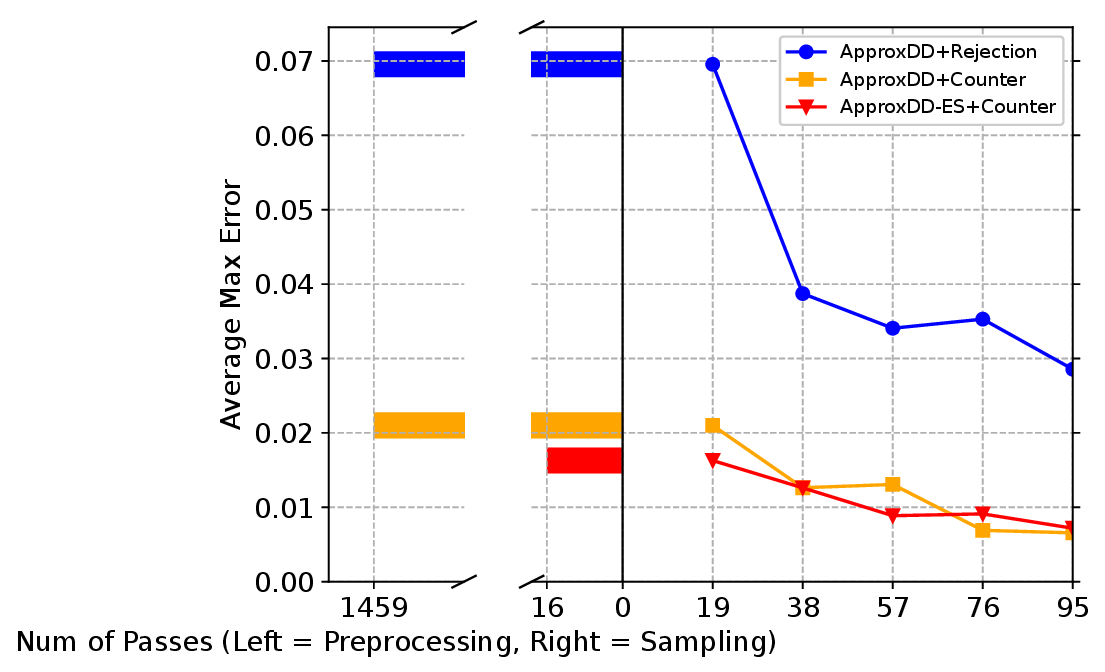}
    \caption{ER-2, $k=6$}
  \end{subfigure}
  \begin{subfigure}{0.32\textwidth}
    \includegraphics[width=\textwidth,trim=21 25 0 0,clip]{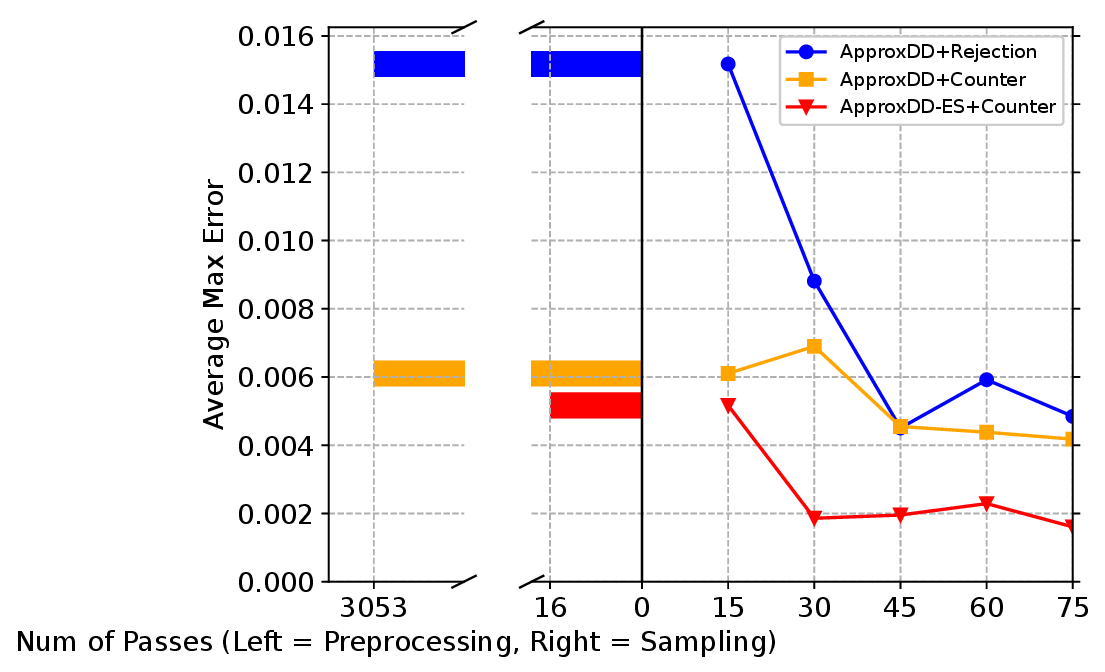}
    \caption{ER-4, $k=4$}
  \end{subfigure}
  \begin{subfigure}{0.32\textwidth}
    \includegraphics[width=\textwidth,trim=21 25 0 0,clip]{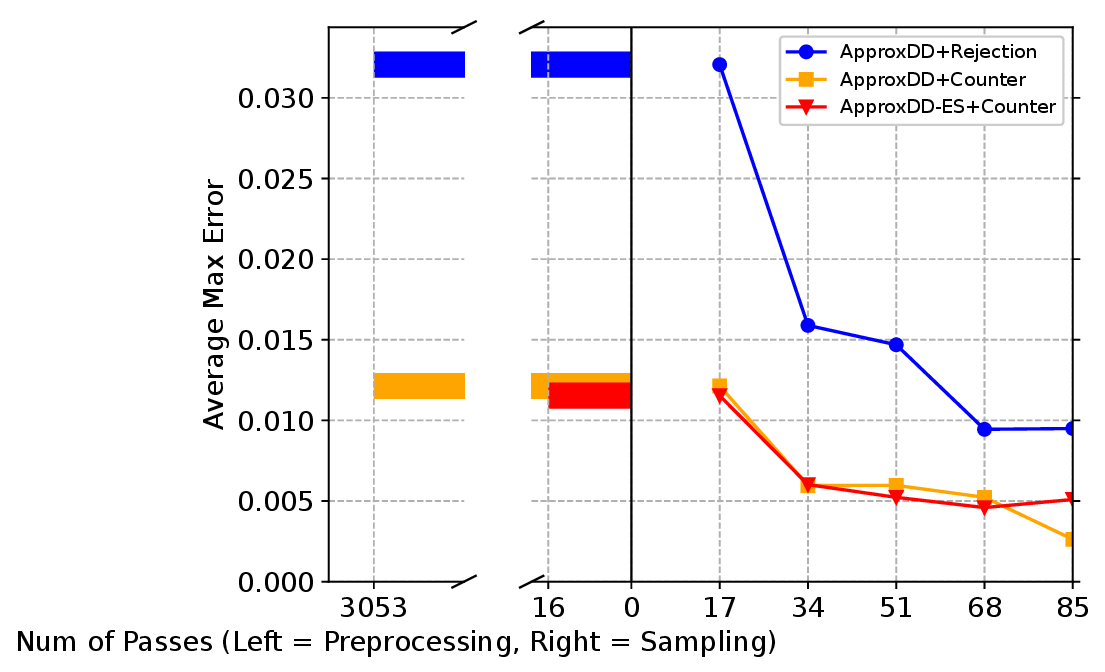}
    \caption{ER-4, $k=5$}
  \end{subfigure}
  \begin{subfigure}{0.32\textwidth}
    \includegraphics[width=\textwidth,trim=21 25 0 0,clip]{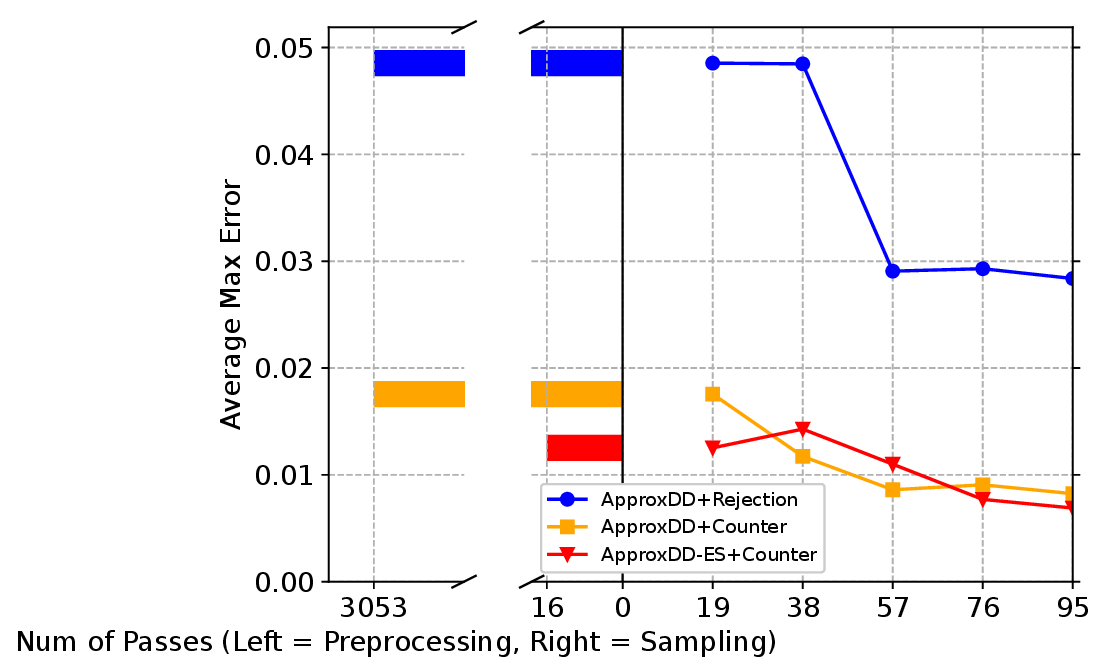}
    \caption{ER-4, $k=6$}
  \end{subfigure}
  \begin{subfigure}{0.32\textwidth}
    \includegraphics[width=\textwidth,trim=21 25 0 0,clip]{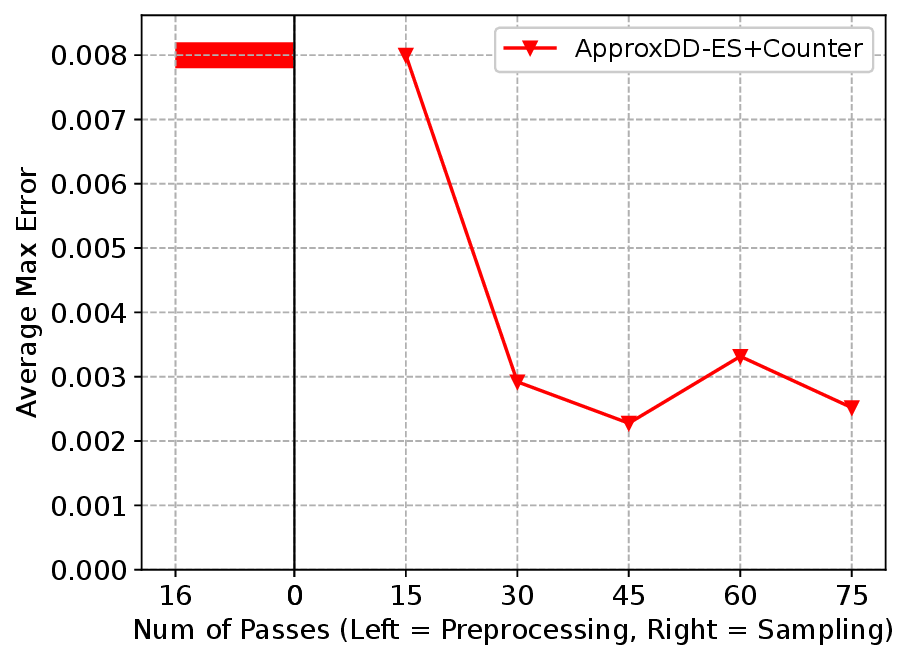}
    \caption{ER-6, $k=4$}
  \end{subfigure}
  \begin{subfigure}{0.32\textwidth}
    \includegraphics[width=\textwidth,trim=21 25 0 0,clip]{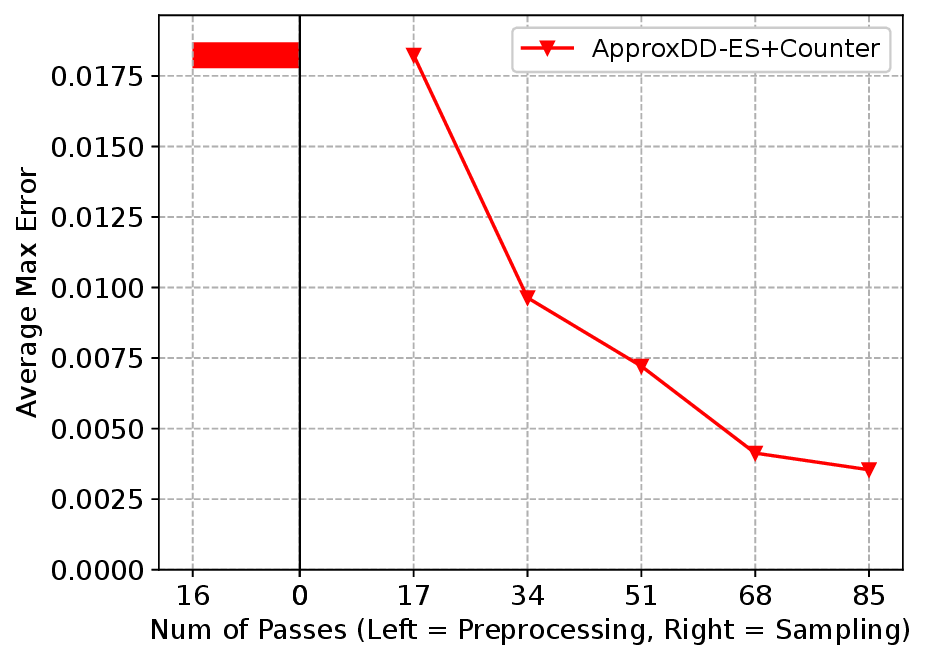}
    \caption{ER-6, $k=5$}
  \end{subfigure}
  \begin{subfigure}{0.32\textwidth}
    \includegraphics[width=\textwidth,trim=21 25 0 0,clip]{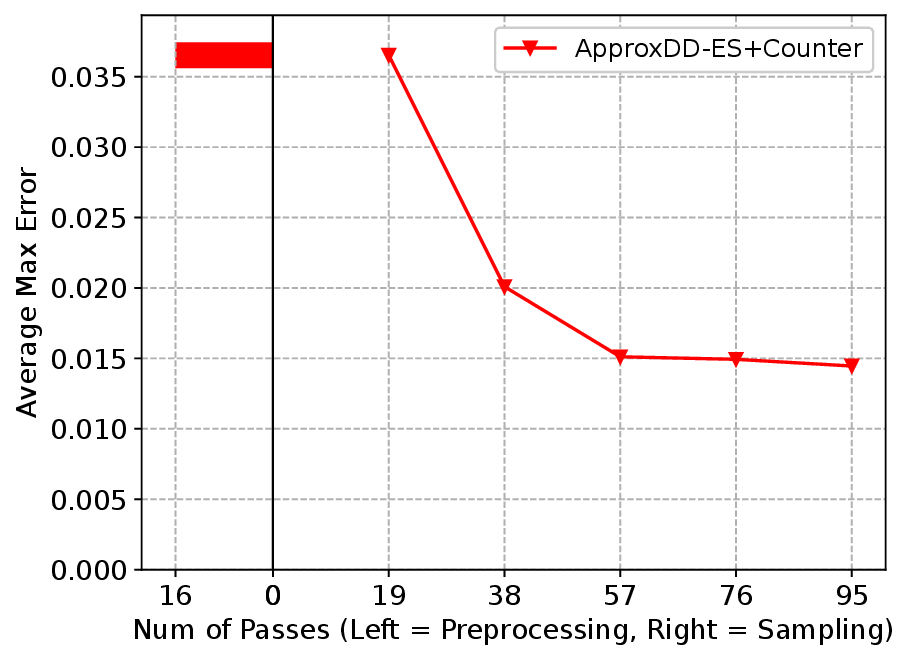}
    \caption{ER-6, $k=6$}
  \end{subfigure}
  \caption{(Continued) L$_{\infty}$ distance between the ground-truth and the estimated $k$-graphlet distribution as a function of the number of passes for sampling. The X-axis shows the number of passes (preprocessing on the left, sampling on the right). Missing points for \ApproxDD mean it did not terminate within 36 hours.}
\end{figure}

\smallskip
\noindentparagraph{Memory usage}
\Cref{tab:gd_memory} shows the memory usage of all algorithms.
In fact, in all our experiments, the total memory usage of \ApproxDDb+\gd and \ApproxDD+(\gdr or \gd) was dominated by the memory required for the preprocessing (which is already shown graphically in \Cref{exp:DD}).
Again, the memory usage of \ApproxDDb+\gd is slightly worse than \ApproxDD+\gdr and \ApproxDD+\gd, but not by large factors.
This confirms that \ApproxDDb+\gd is competitive.

\begin{table}
  \centering
  \caption{Overall memory usage of the algorithms (in MB). The percentages in brackets represent the ratio of memory usage over the disk size of the graph (each edge is represented in 8 bytes).}\label{tab:gd_memory}
  \scriptsize
  \begin{tabular}{r|c|c|ccc}
  \hline
  \multirow{3}{*}{Dataset} & \ApproxDDb & \ApproxDD & \multicolumn{3}{c}{Motivo} \\
   & +\gd & +(\gdr & $k=4$ & $k=5$ & $k=6$ \\
   &  & or \gd) &  &  &  \\
  \hline\hline
  NY Times & 38.83 (7.31\%) & 34.97 (6.58\%) & 856.60 (161.19\%) & 913.18 (171.84\%) & 1115.45 (209.90\%) \\
  Twitter (WWW) & 3370.27 (36.74\%) & 2691.84 (29.34\%) & 12407.77 (135.24\%) & 16687.46 (181.89\%) & 32576.35 (355.08\%) \\
  Twitter (MPI) & 4391.69 (35.66\%) & 3382.18 (27.46\%) & 17173.68 (139.46\%) & 24232.61 (196.78\%) & 48619.79 (394.81\%) \\
  Friendster & 5895.11 (42.65\%) & 4429.02 (32.04\%) & 20857.70 (150.89\%) & 26691.57 (193.09\%) & 45527.43 (329.35\%) \\
  Sim-0 & 16.50 (227.90\%) & 17.98 (248.37\%) & 99.21 (1370.71\%) & 122.89 (1697.89\%) & 192.33 (2657.24\%) \\
  Sim-1 & 17.46 (23.29\%) & 18.33 (24.45\%) & 181.23 (241.77\%) & 260.64 (347.71\%) & 513.17 (684.59\%) \\
  Sim-2 & 18.25 (4.61\%) & 19.17 (4.85\%) & 508.91 (128.67\%) & 611.73 (154.67\%) & 932.91 (235.87\%) \\
  Sim-3 & 18.84 (1.33\%) & 19.81 (1.40\%) & 1531.61 (108.16\%) & 1638.01 (115.68\%) & 1978.04 (139.69\%) \\
  Sim-4 & 19.16 (0.49\%) & 20.16 (0.52\%) & 3984.58 (102.87\%) & 4097.13 (105.77\%) & 4440.09 (114.63\%) \\
  Sim-5 & 19.39 (0.22\%) & 20.40 (0.24\%) & 8692.20 (100.39\%) & 8766.39 (101.25\%) & 9158.10 (105.77\%) \\
  Sim-6 & 19.82 (0.12\%) & - & 17176.55 (103.97\%) & 17297.12 (104.70\%) & 17656.74 (106.88\%) \\
  Sim-7 & 20.05 (0.07\%) & - & 28061.98 (101.13\%) & 28112.06 (101.31\%) & 28305.63 (102.00\%) \\
  ER-0 & 18.22 (251.50\%) & 17.78 (245.55\%) & 125.80 (1736.88\%) & 214.46 (2961.11\%) & 504.20 (6961.54\%) \\
  ER-1 & 20.96 (27.97\%) & 19.99 (26.67\%) & 194.11 (259.04\%) & 298.66 (398.56\%) & 643.80 (859.14\%) \\
  ER-2 & 20.55 (5.20\%) & 20.40 (5.16\%) & 511.20 (129.24\%) & 618.41 (156.35\%) & 964.53 (243.86\%) \\
  ER-3 & 21.18 (1.50\%) & 20.31 (1.43\%) & 1531.78 (108.17\%) & 1638.54 (115.71\%) & 1982.57 (140.01\%) \\
  ER-4 & 21.05 (0.54\%) & 20.16 (0.52\%) & 3985.01 (102.88\%) & 4101.33 (105.88\%) & 4441.28 (114.66\%) \\
  ER-5 & 21.66 (0.25\%) & 20.43 (0.24\%) & 8692.22 (100.39\%) & 8801.50 (101.65\%) & 9193.64 (106.18\%) \\
  ER-6 & 21.69 (0.13\%) & - & 16902.60 (102.31\%) & 16999.68 (102.90\%) & 17619.79 (106.65\%) \\
  \hline
  \end{tabular}
\end{table}


\section{Conclusion}

This paper presents a novel streaming algorithm for approximating $k$-graphlet distributions in large graphs that cannot fit entirely in memory. By introducing an improved method for computing an approximate degree-dominating (DD) order---a key preprocessing step---our algorithm reduces the number of passes from $O(\log n)$ (as in prior work) to $O(1/c)$, using $\tilde{O}(n^{1+c})$ memory. This result is nearly optimal, as a lower bound rules out $O(1)$-pass algorithms with sublinear memory.

Empirical evaluations on real-world and synthetic graphs demonstrate that our algorithm significantly outperforms the state-of-the-art, especially on moderately dense graphs, where it reduces the number of passes by orders of magnitude while maintaining comparable memory usage. The integration of a Horvitz–Thompson estimator further enhances efficiency by avoiding the high rejection rates of earlier sampling methods.

Overall, our work provides a scalable, predictable, and theoretically sound solution for graphlet analysis in the streaming model, bridging the gap between theoretical guarantees and practical performance.

\ignore{
We develop an efficient semi-streaming algorithm for approximating the $k$-graphlet distribution of a graph, which requires constant passes and $\tilde O(n^{1+c})$ memory for an arbitrary constant $c$ with high probability. We conducted experiments showing the better efficiency on the number of passes for computing the DD order against the previous uniform graphlet sampling algorithm in \cite{Bourreau2024}, and the lower memory usage against the existing method that approximates the $k$-graphlet distribution via graphlet sampling.

We believe that this is the first work to consider approximating the $k$-graphlet distribution in massive graphs, which is broadly interesting in data mining and relative areas, and deserves further improvement in the streaming model or more large-scale computational models such as Massively Parallel Computation and MapReduce.
}

\ignore{
We develop efficient semi-streaming algorithms for uniform graphlet sampling, requiring sublinear amount of memory and polylogarithmic number of passes in the size of the input data. We also provide a space lower bound showing that the tradeoff between memory and number of passes of our algorithms is near-optimal. Our theoretical results are complemented with an experimental evaluation on large real-world graphs showing the effectiveness of our algorithms and that they can provide a valuable tool in graph mining and data analysis. Our work paves the way for developing efficient algorithms in other computational model for large-scale processing, such as the MapReduce model.
}

\bibliographystyle{plain}
\bibliography{ref}

\clearpage

\appendix
\section{Additional Figures}
\label{sec:morefigures}

\Cref{fig:ddeval-res,fig:Linfty-res} show the figures omitted from the experiments in \Cref{sec:experiments}.

\begin{figure}[H]
  \centering
  \begin{subfigure}{0.46\textwidth}
    \includegraphics[width=\textwidth]{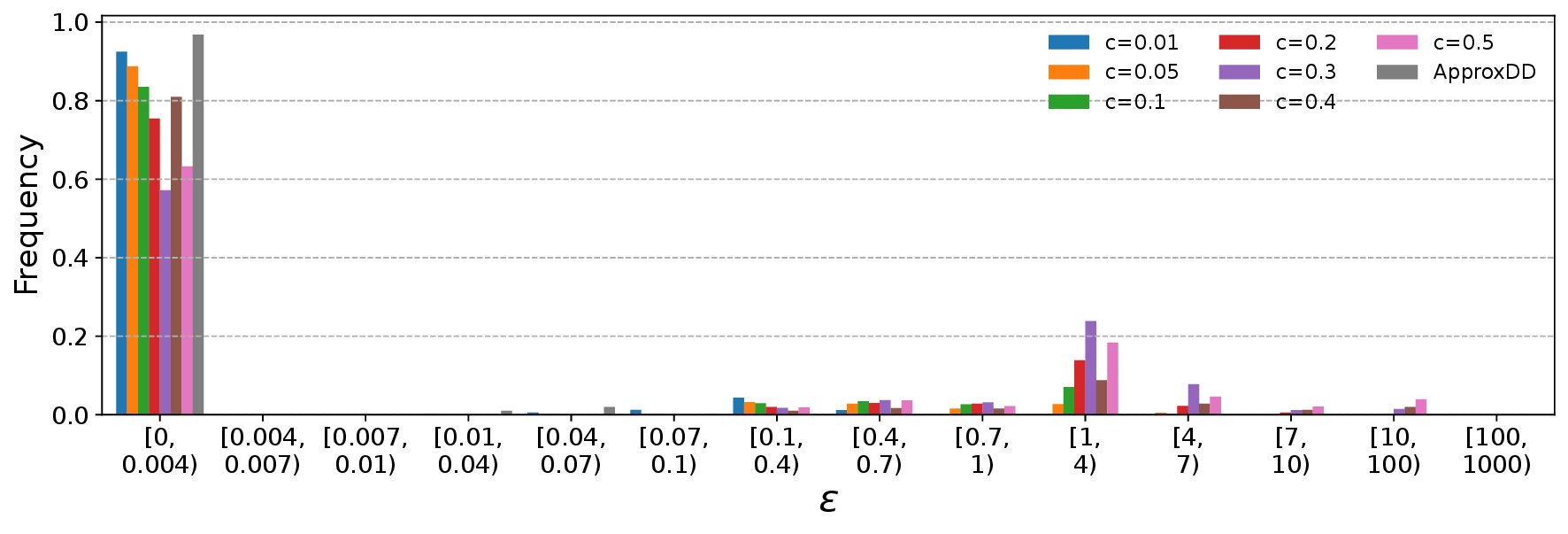}
    \caption{Sim-0}
  \end{subfigure}
  \begin{subfigure}{0.46\textwidth}
    \includegraphics[width=\textwidth]{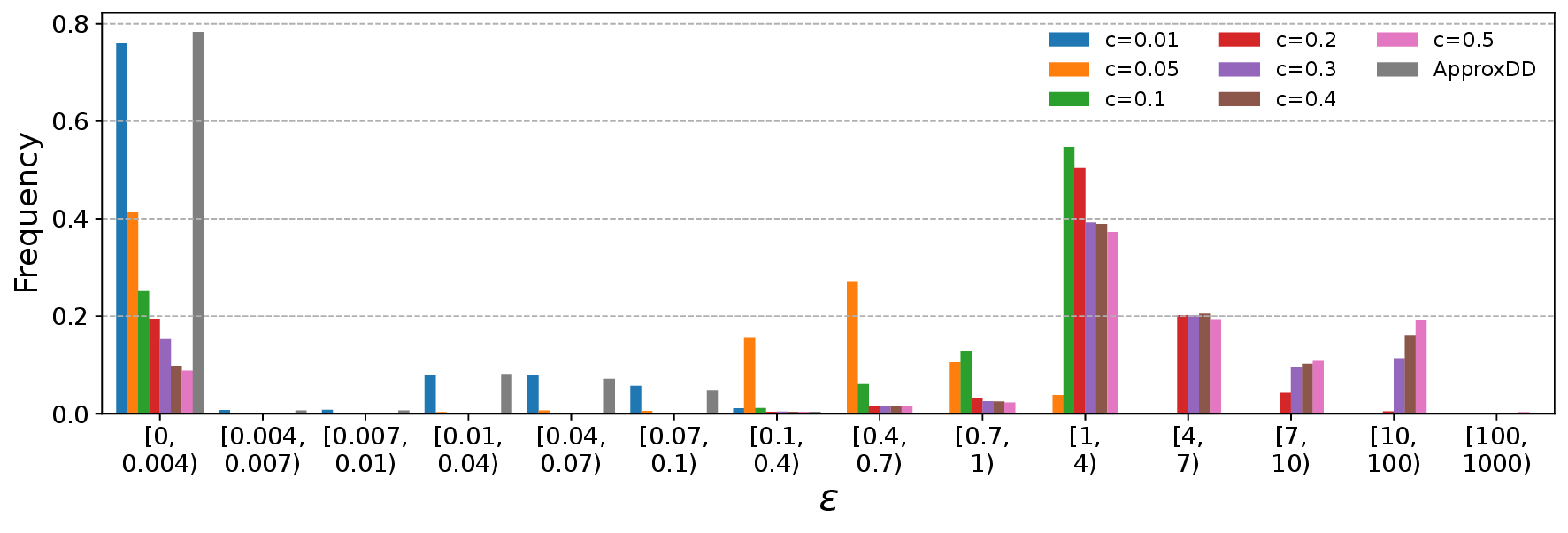}
    \caption{Sim-2}
  \end{subfigure}
  \begin{subfigure}{0.46\textwidth}
    \includegraphics[width=\textwidth]{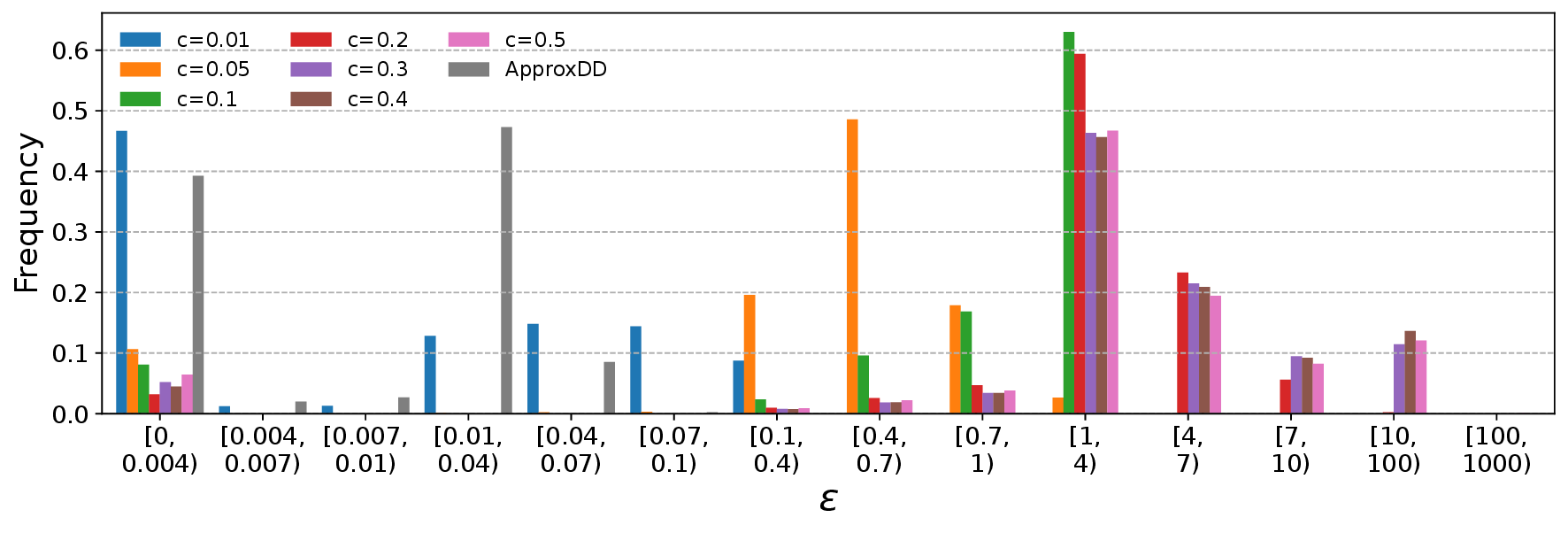}
    \caption{Sim-4}
  \end{subfigure}
  \begin{subfigure}{0.46\textwidth}
    \includegraphics[width=\textwidth]{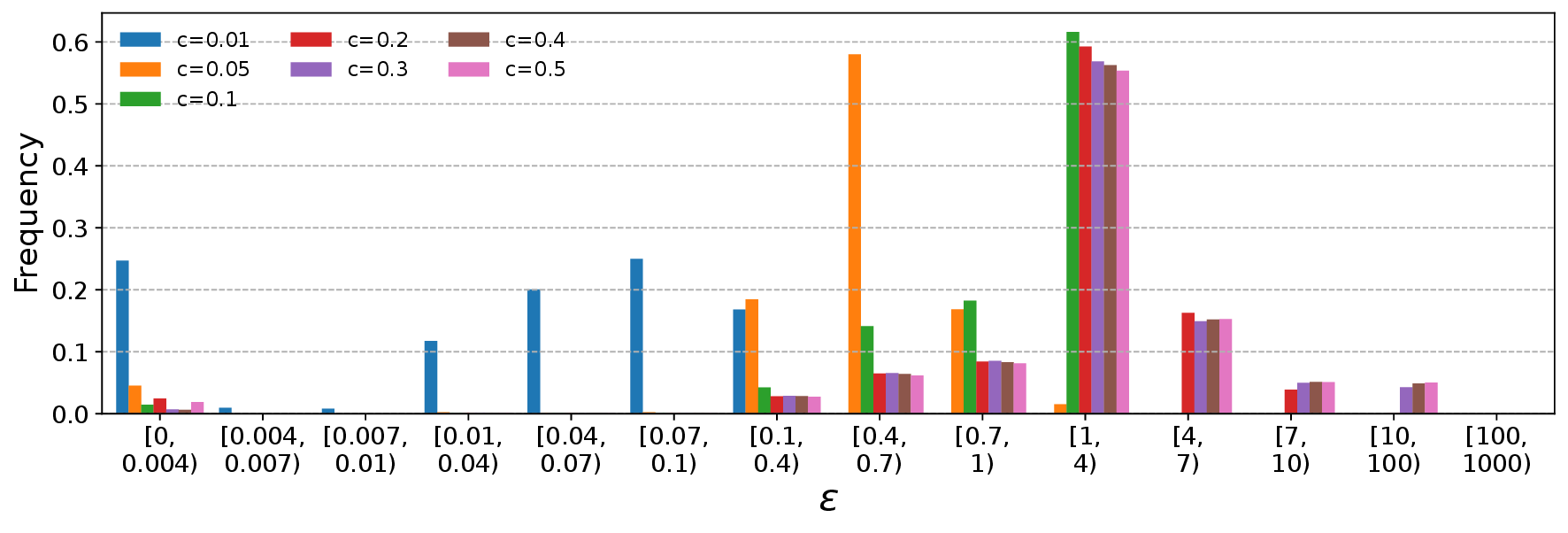}
    \caption{Sim-6}
  \end{subfigure}
  \begin{subfigure}{0.46\textwidth}
    \includegraphics[width=\textwidth]{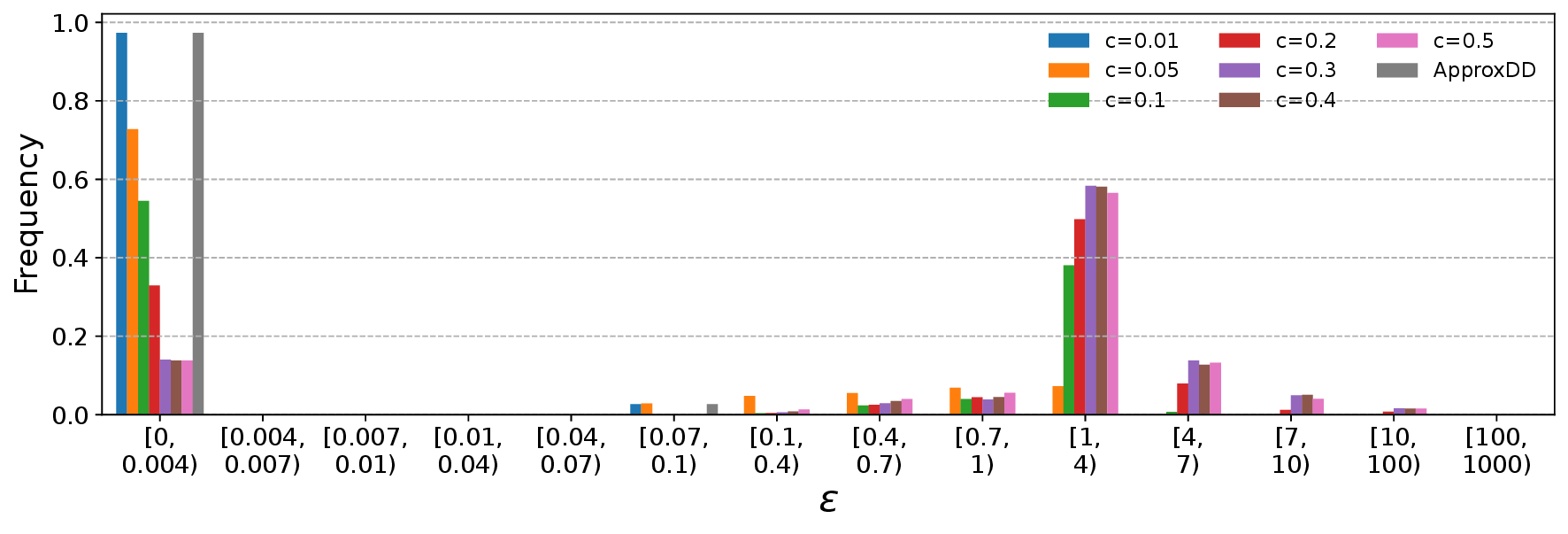}
    \caption{ER-0}
  \end{subfigure}
  \begin{subfigure}{0.46\textwidth}
    \includegraphics[width=\textwidth]{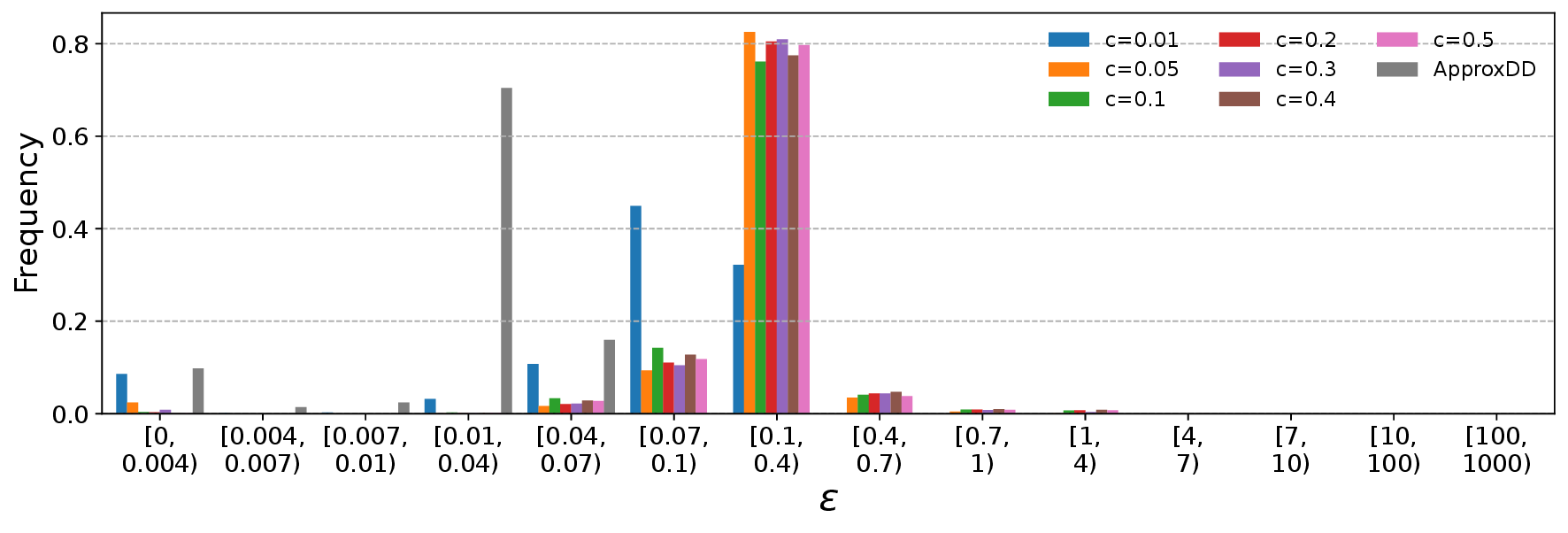}
    \caption{ER-3}
  \end{subfigure}
  \begin{subfigure}{0.46\textwidth}
    \includegraphics[width=\textwidth]{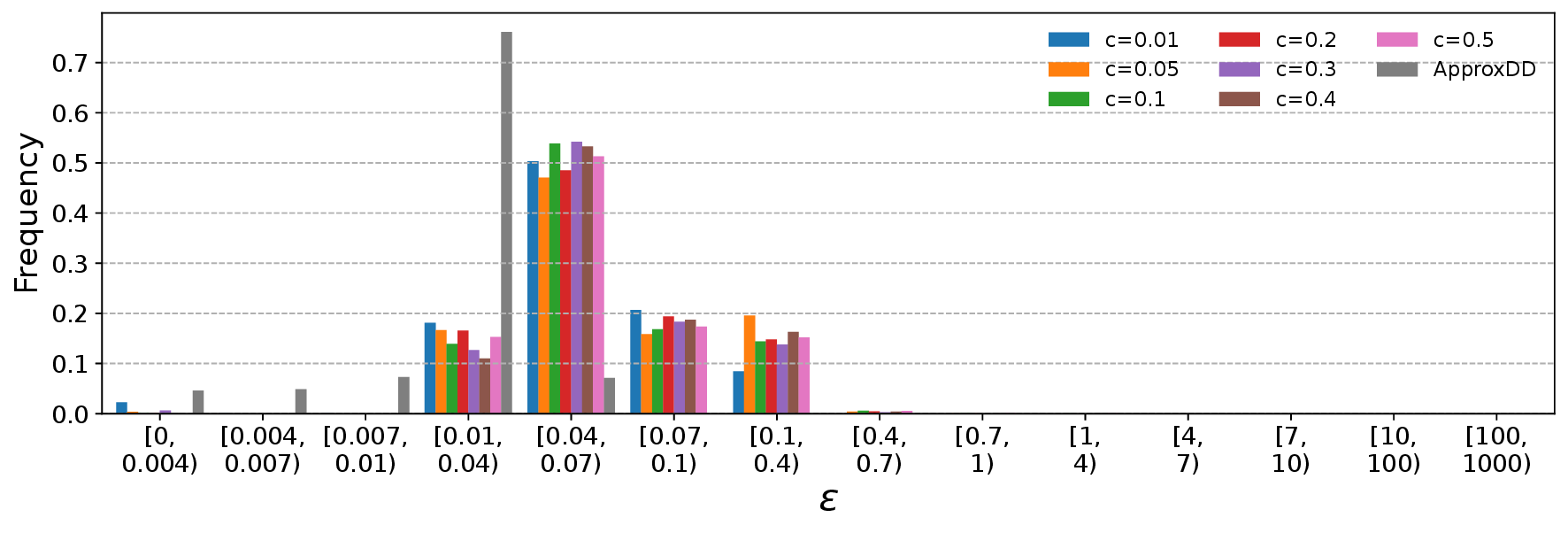}
    \caption{ER-5}
  \end{subfigure}
  \caption{Empirical distribution of $\epsilon_v$ under different $c$, using \ApproxDDb.}\label{fig:ddeval-res}
\end{figure}

\begin{figure*}
  \centering
  \begin{subfigure}{0.32\textwidth}
    \includegraphics[width=\textwidth,trim=21 25 0 0,clip]{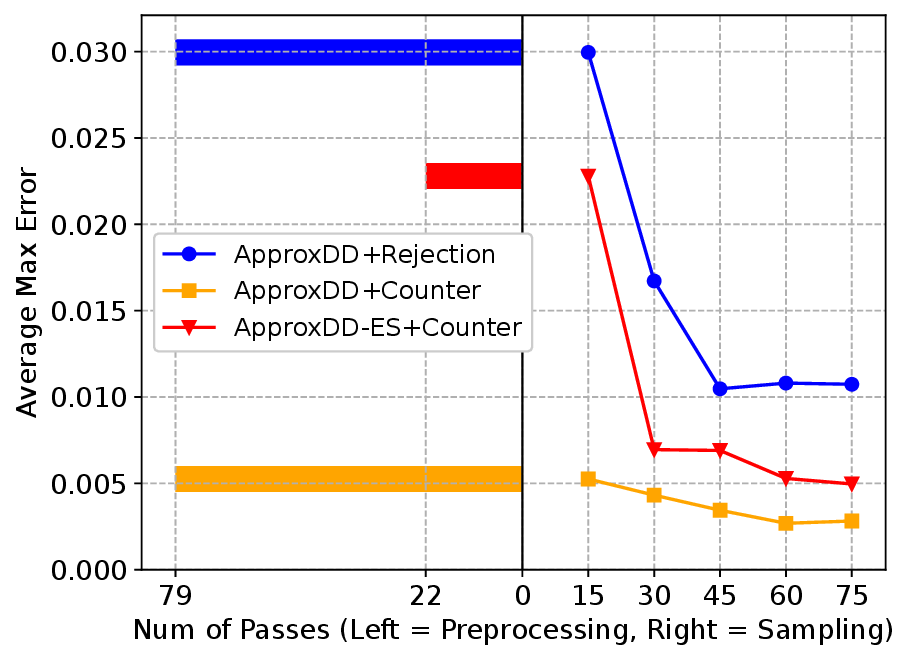}
    \caption{Sim-1, $k=4$}
  \end{subfigure}
  \begin{subfigure}{0.32\textwidth}
    \includegraphics[width=\textwidth,trim=21 25 0 0,clip]{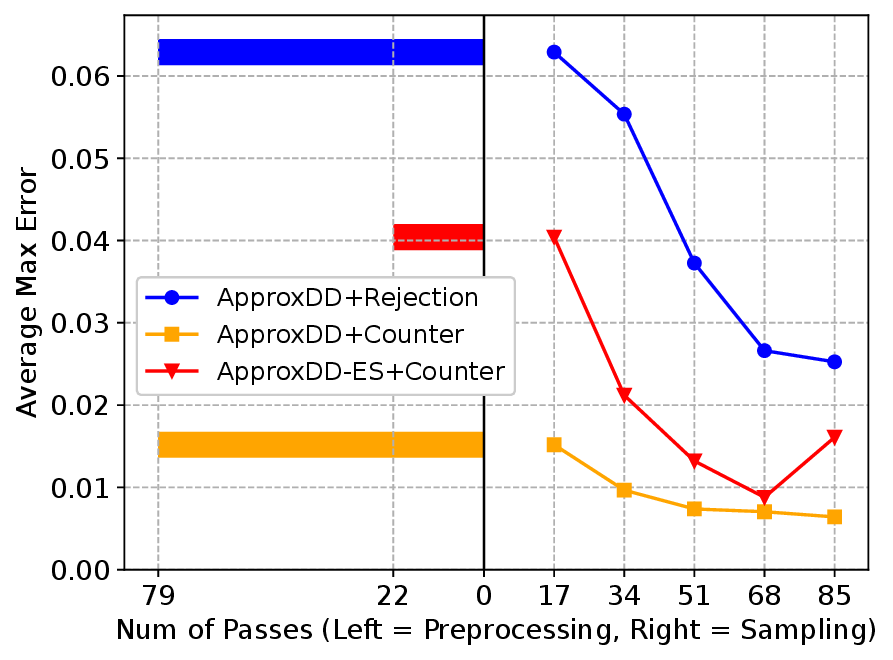}
    \caption{Sim-1, $k=5$}
  \end{subfigure}
  \begin{subfigure}{0.32\textwidth}
    \includegraphics[width=\textwidth,trim=21 25 0 0,clip]{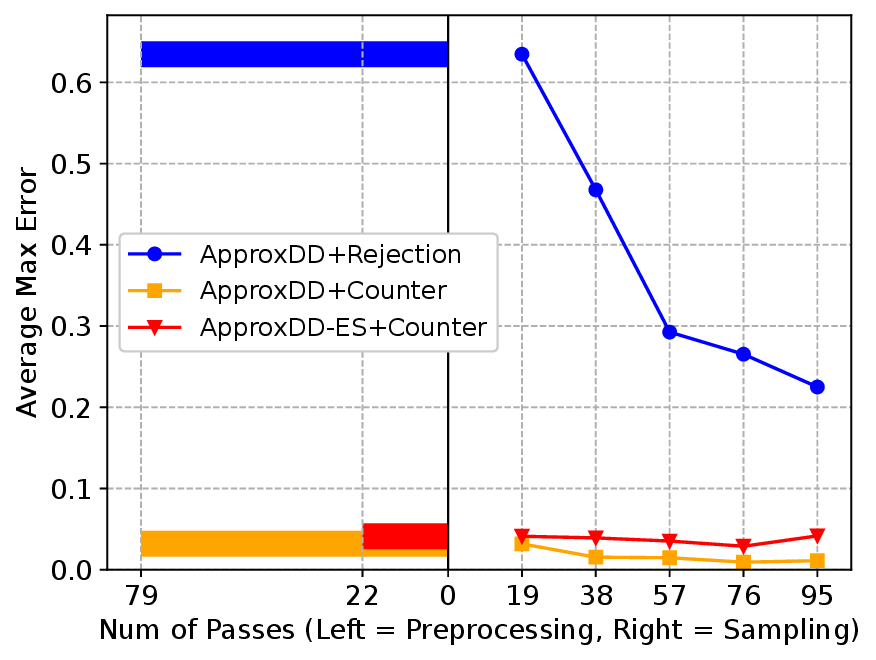}
    \caption{Sim-1, $k=6$}
  \end{subfigure}
  \begin{subfigure}{0.32\textwidth}
    \includegraphics[width=\textwidth,trim=21 25 0 0,clip]{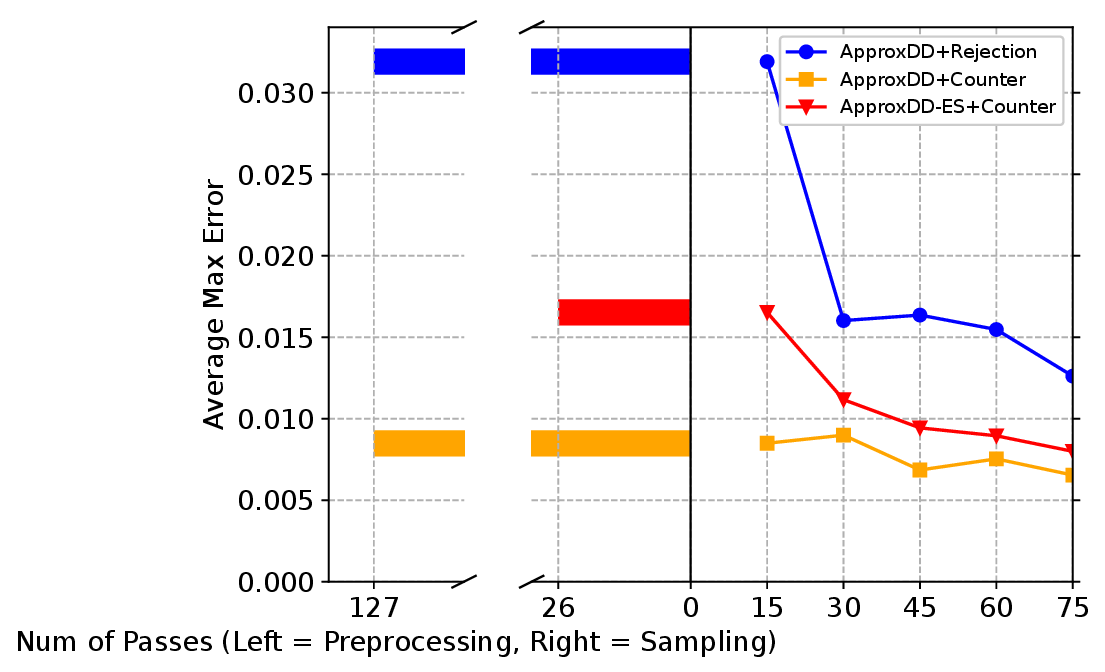}
    \caption{Sim-2, $k=4$}
  \end{subfigure}
  \begin{subfigure}{0.32\textwidth}
    \includegraphics[width=\textwidth,trim=21 25 0 0,clip]{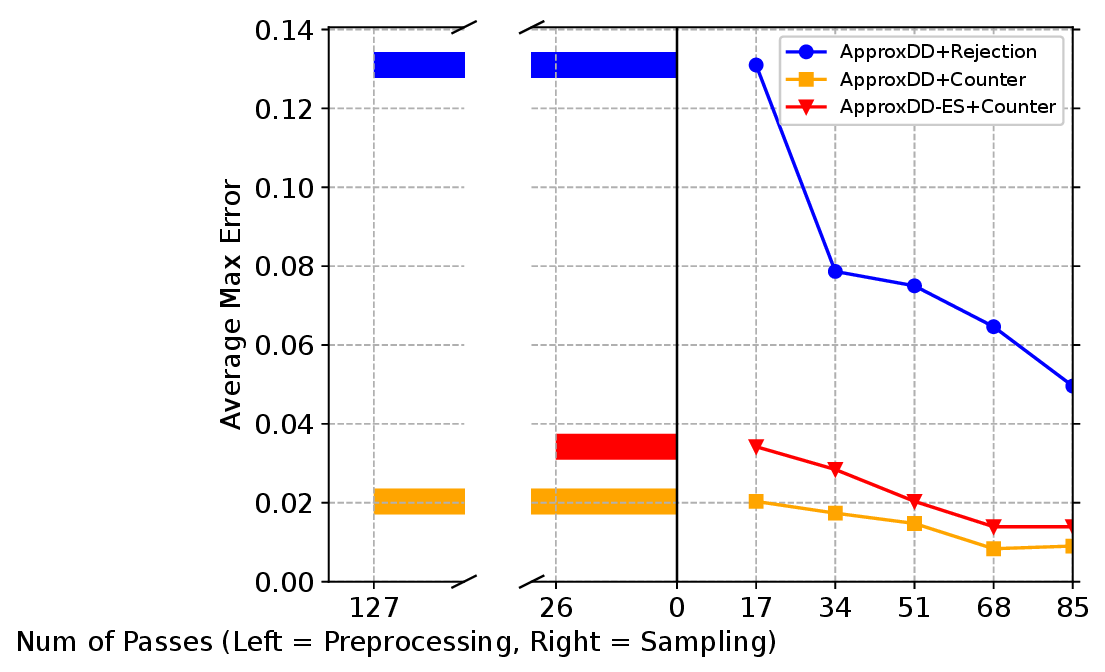}
    \caption{Sim-2, $k=5$}
  \end{subfigure}
  \begin{subfigure}{0.32\textwidth}
    \includegraphics[width=\textwidth,trim=21 25 0 0,clip]{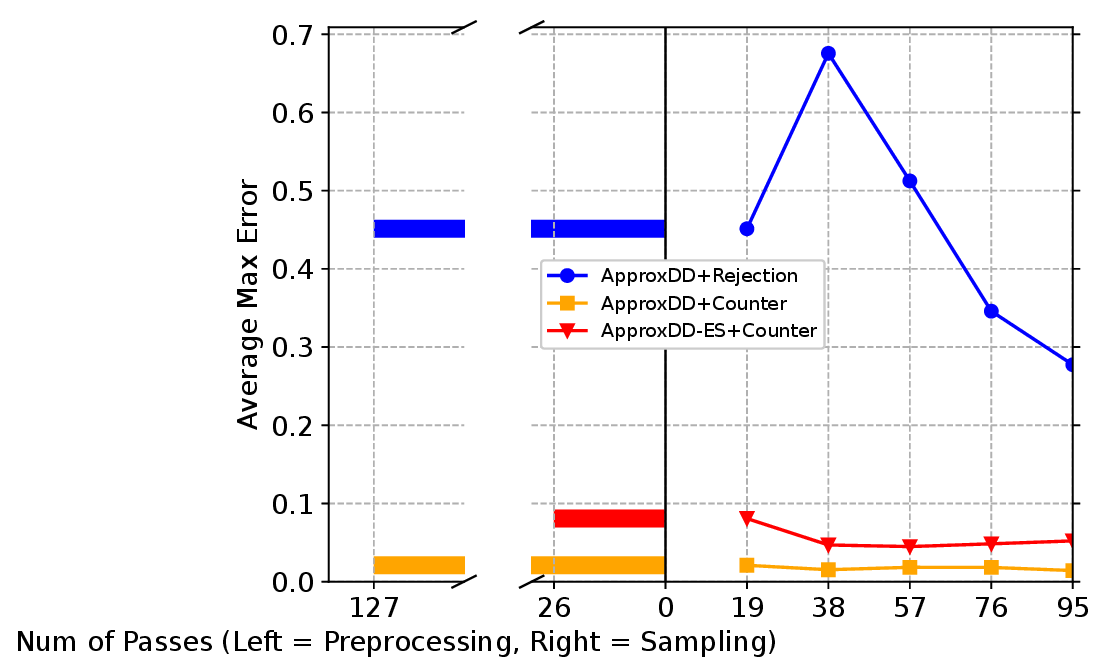}
    \caption{Sim-2, $k=6$}
  \end{subfigure}
  \begin{subfigure}{0.32\textwidth}
    \includegraphics[width=\textwidth,trim=21 25 0 0,clip]{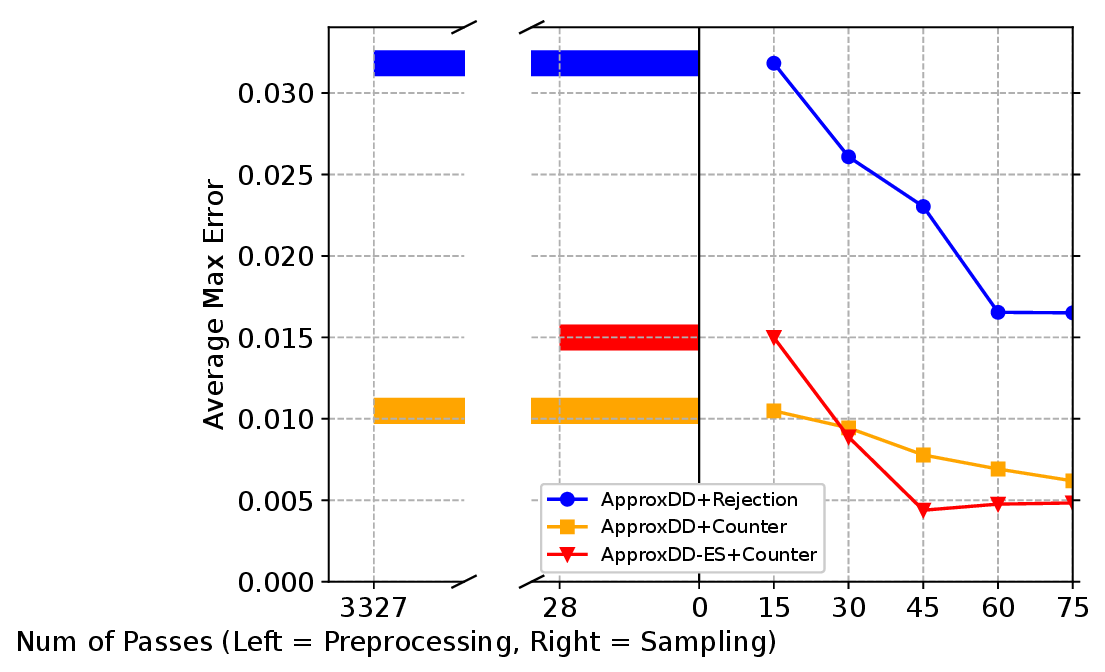}
    \caption{Sim-4, $k=4$}
  \end{subfigure}
  \begin{subfigure}{0.32\textwidth}
    \includegraphics[width=\textwidth,trim=21 25 0 0,clip]{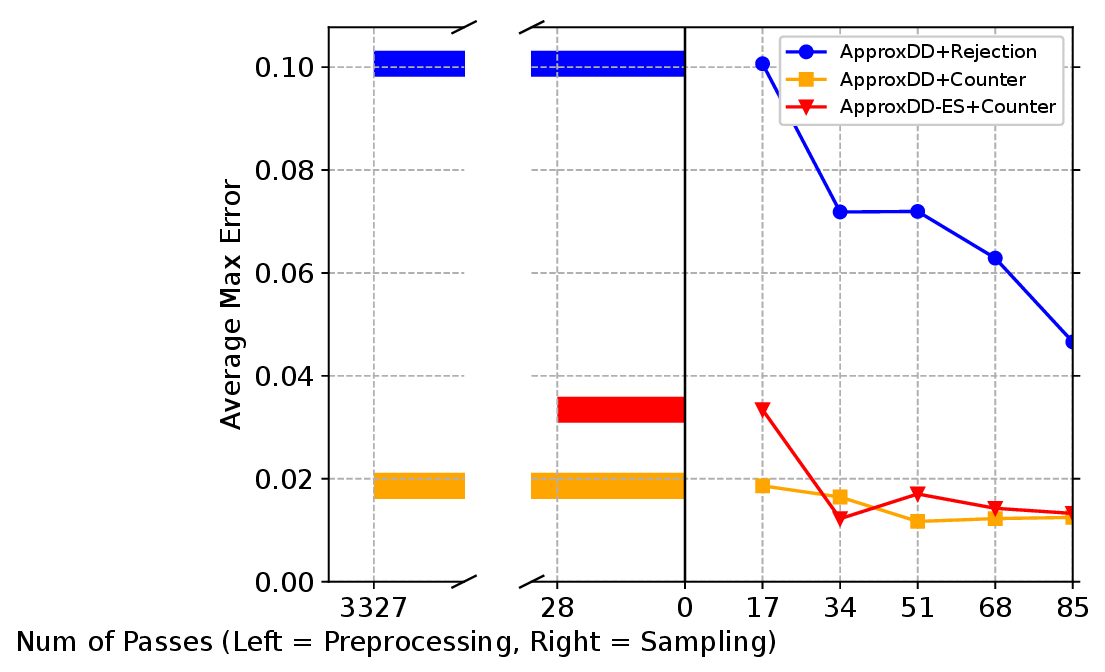}
    \caption{Sim-4, $k=5$}
  \end{subfigure}
  \begin{subfigure}{0.32\textwidth}
    \includegraphics[width=\textwidth,trim=21 25 0 0,clip]{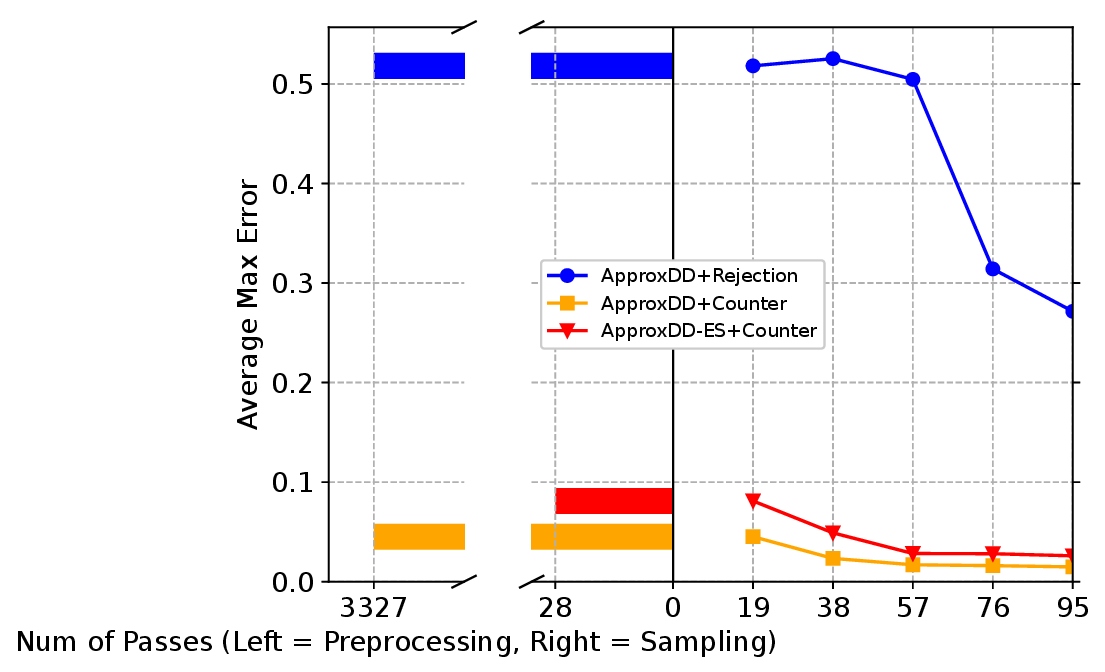}
    \caption{Sim-4, $k=6$}
  \end{subfigure}
  \begin{subfigure}{0.32\textwidth}
    \includegraphics[width=\textwidth,trim=21 25 0 0,clip]{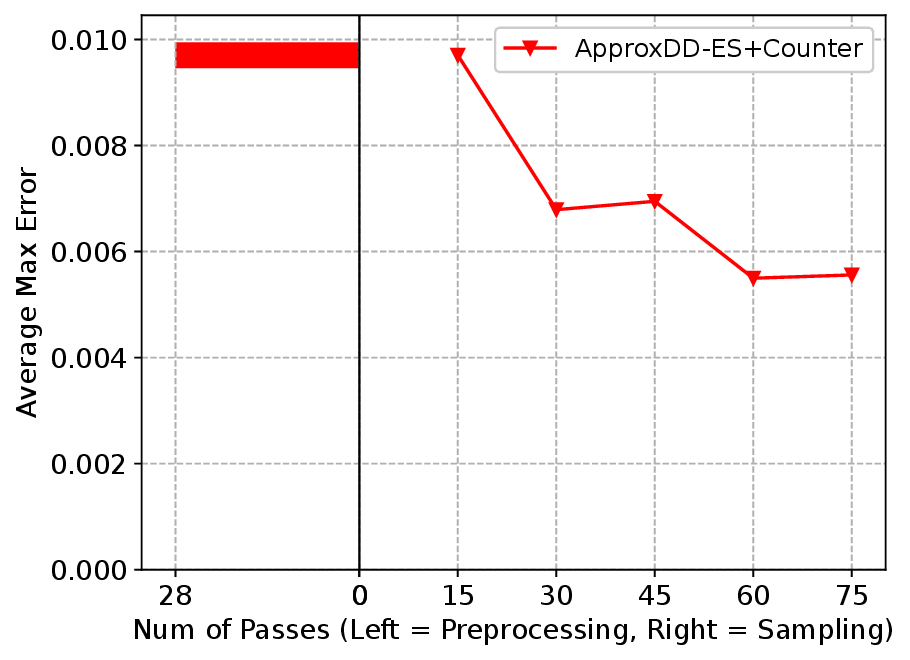}
    \caption{Sim-6, $k=4$}
  \end{subfigure}
  \begin{subfigure}{0.32\textwidth}
    \includegraphics[width=\textwidth,trim=21 25 0 0,clip]{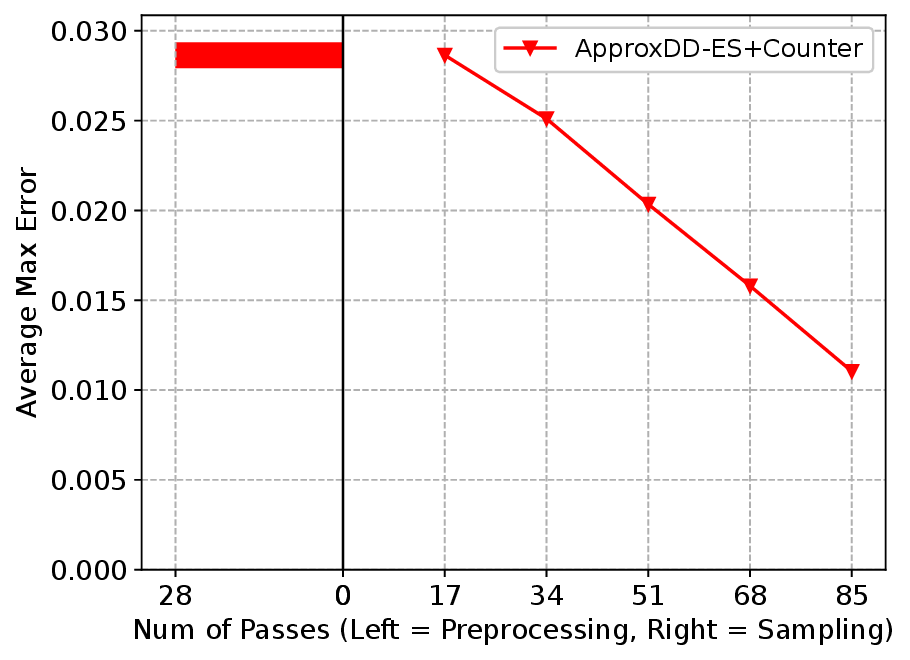}
    \caption{Sim-6, $k=5$}
  \end{subfigure}
  \begin{subfigure}{0.32\textwidth}
    \includegraphics[width=\textwidth,trim=21 25 0 0,clip]{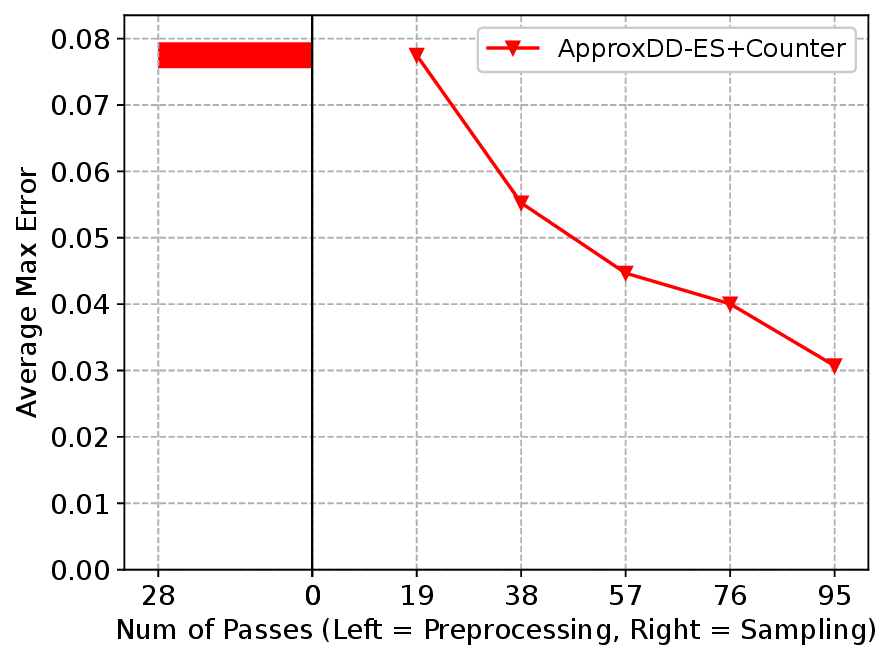}
    \caption{Sim-6, $k=6$}
  \end{subfigure}
  \begin{subfigure}{0.32\textwidth}
    \includegraphics[width=\textwidth,trim=21 25 0 0,clip]{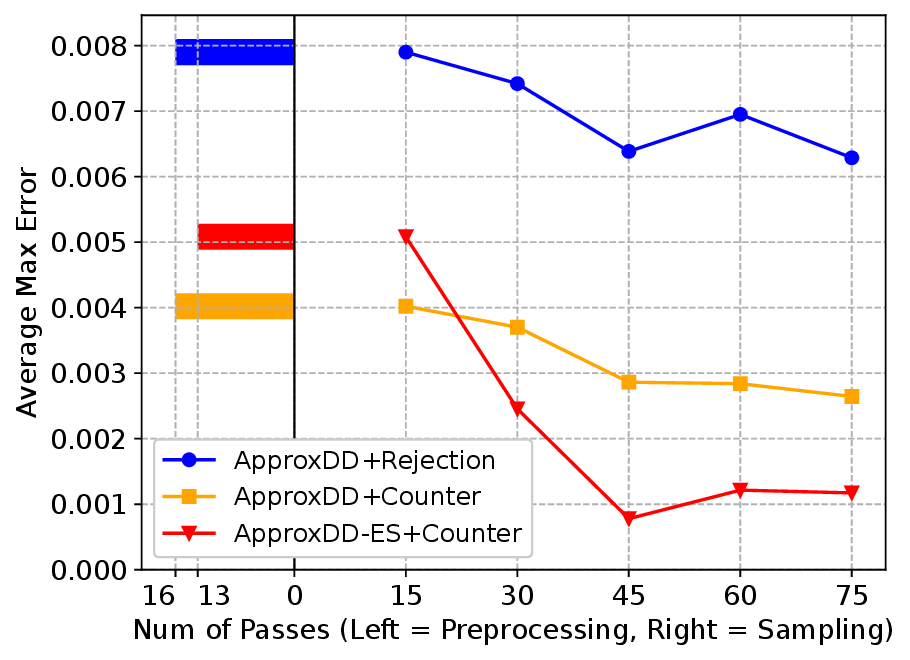}
    \caption{ER-0, $k=4$}
  \end{subfigure}
  \begin{subfigure}{0.32\textwidth}
    \includegraphics[width=\textwidth,trim=21 25 0 0,clip]{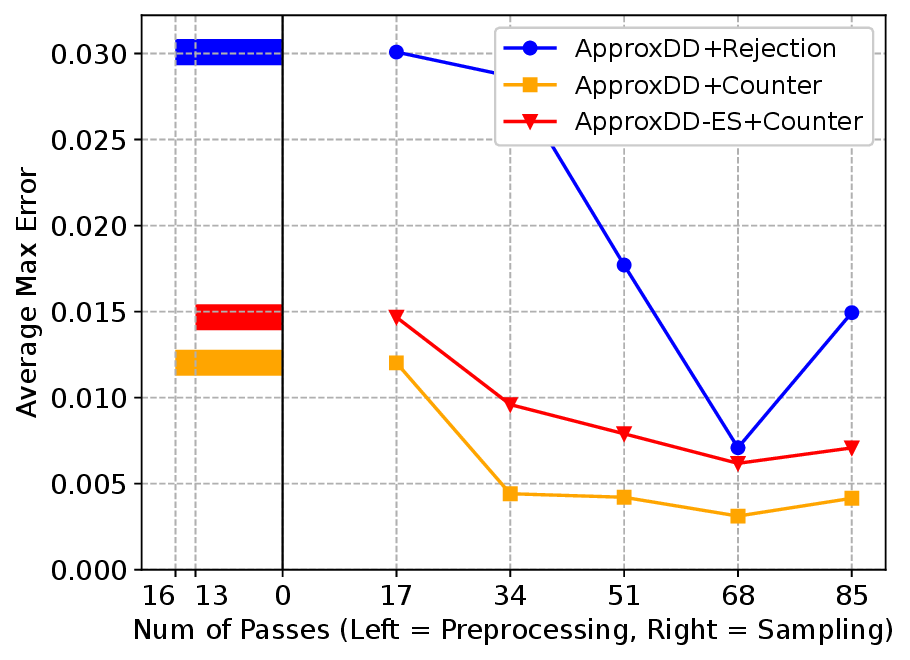}
    \caption{ER-0, $k=5$}
  \end{subfigure}
  \begin{subfigure}{0.32\textwidth}
    \includegraphics[width=\textwidth,trim=21 25 0 0,clip]{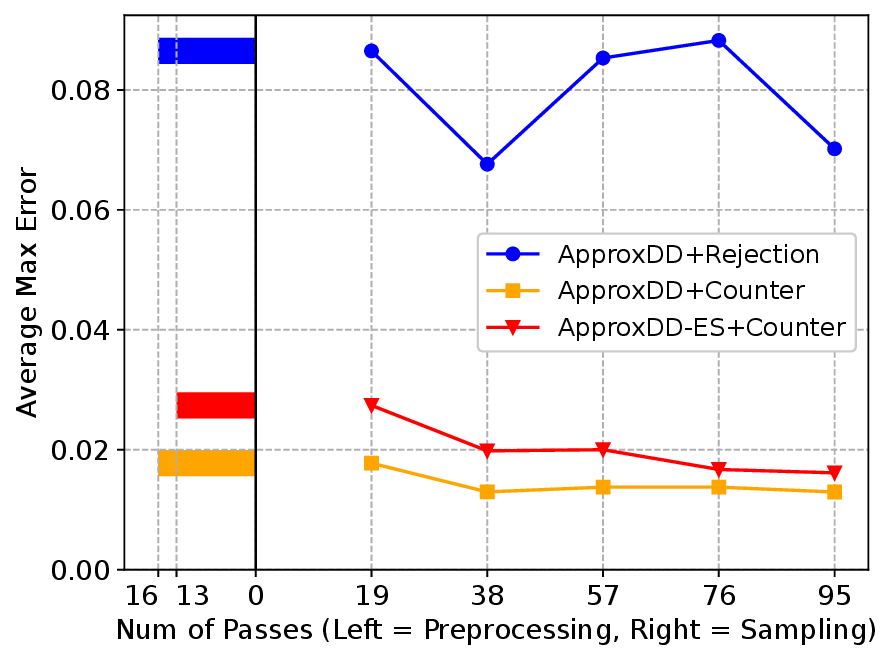}
    \caption{ER-0, $k=6$}
  \end{subfigure}
  \caption{L$_{\infty}$ distance between the ground-truth and the estimated $k$-graphlet distribution as a function of the number of passes for sampling. The X-axis shows the number of passes (preprocessing on the left, sampling on the right). Missing points for \ApproxDD mean it did not terminate within 36 hours.}\label{fig:Linfty-res}
\end{figure*}

\begin{figure*}\ContinuedFloat
  \begin{subfigure}{0.32\textwidth}
    \includegraphics[width=\textwidth,trim=21 25 0 0,clip]{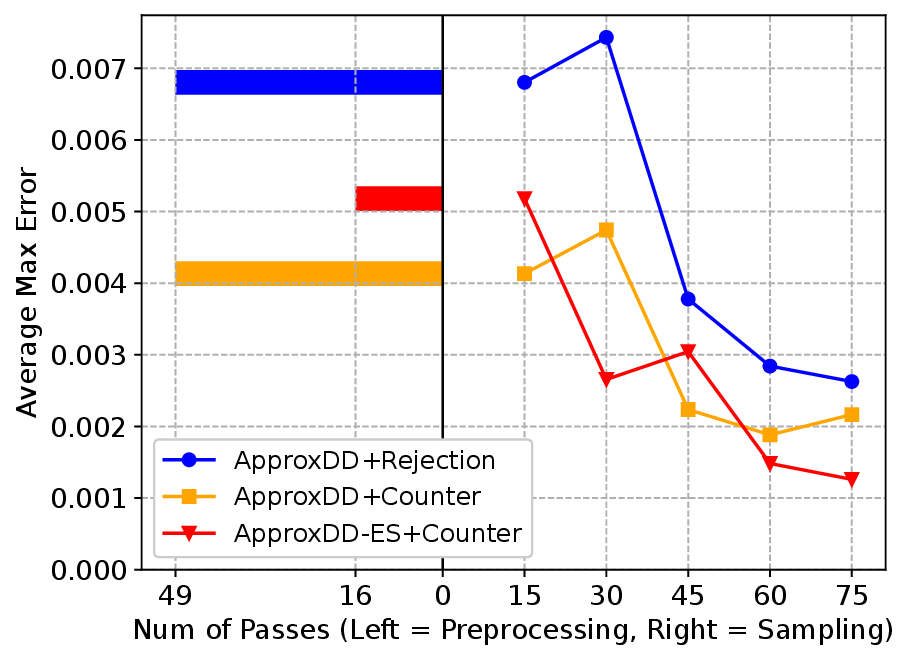}
    \caption{ER-1, $k=4$}
  \end{subfigure}
  \begin{subfigure}{0.32\textwidth}
    \includegraphics[width=\textwidth,trim=21 25 0 0,clip]{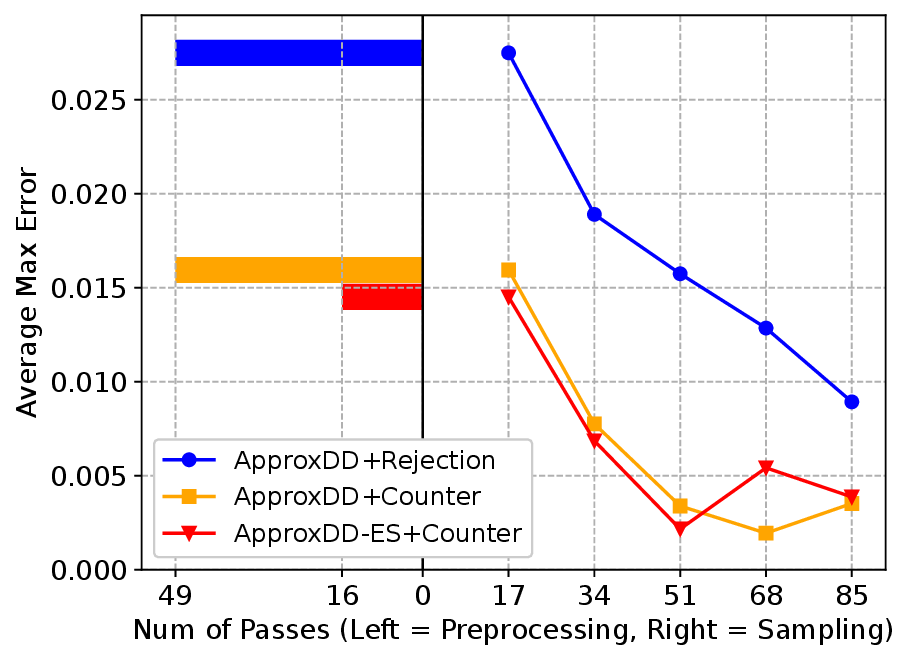}
    \caption{ER-1, $k=5$}
  \end{subfigure}
  \begin{subfigure}{0.32\textwidth}
    \includegraphics[width=\textwidth,trim=21 25 0 0,clip]{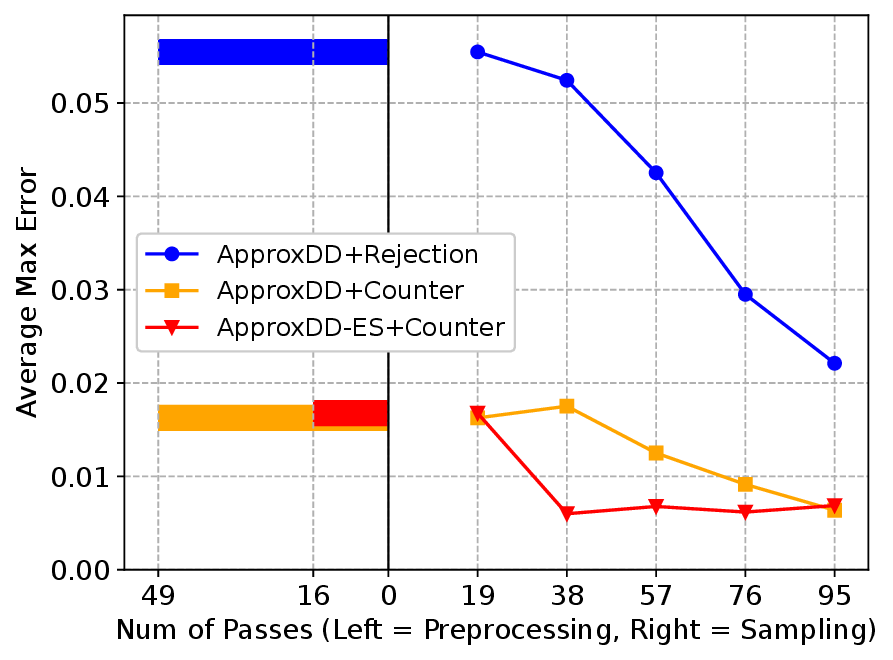}
    \caption{ER-1, $k=6$}
  \end{subfigure}
  \begin{subfigure}{0.32\textwidth}
    \includegraphics[width=\textwidth,trim=21 25 0 0,clip]{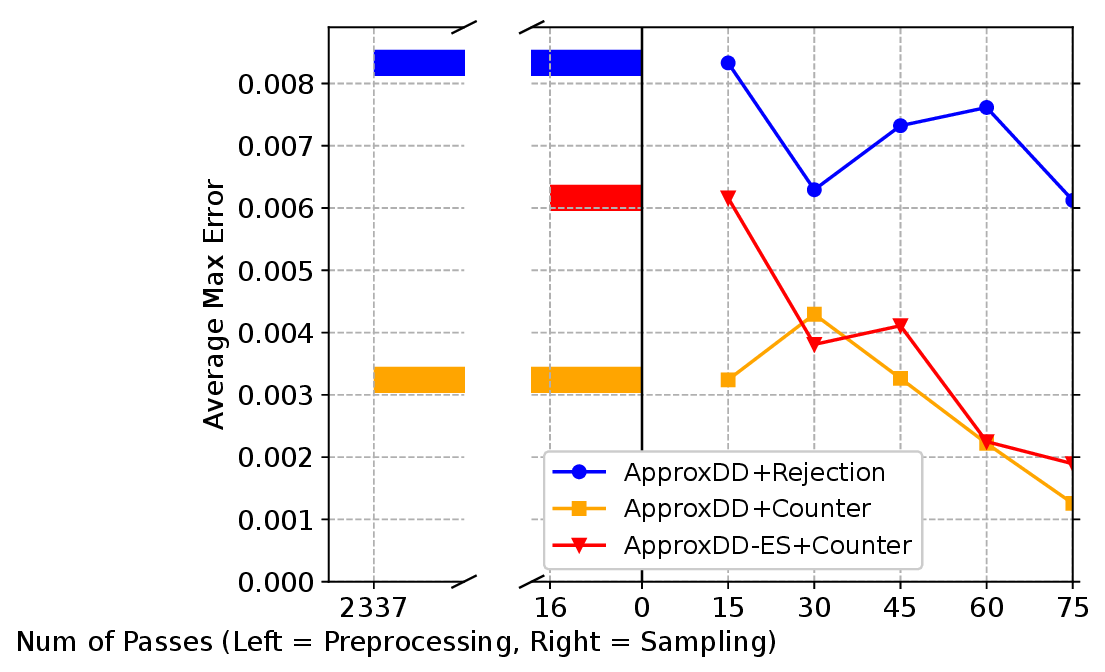}
    \caption{ER-3, $k=4$}
  \end{subfigure}
  \begin{subfigure}{0.32\textwidth}
    \includegraphics[width=\textwidth,trim=21 25 0 0,clip]{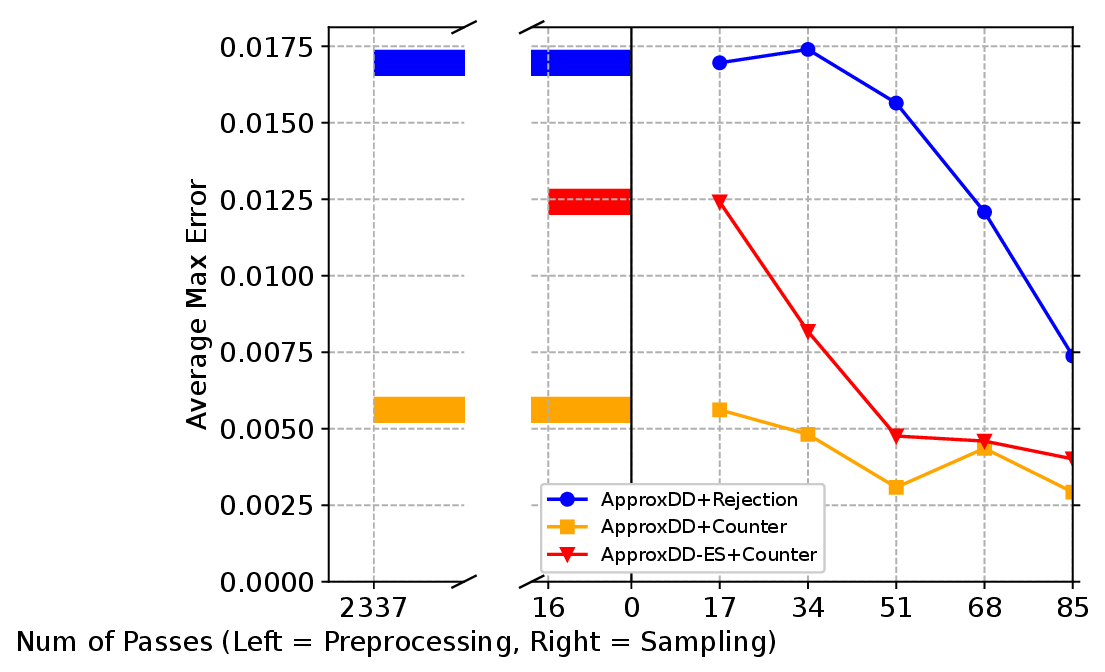}
    \caption{ER-3, $k=5$}
  \end{subfigure}
  \begin{subfigure}{0.32\textwidth}
    \includegraphics[width=\textwidth,trim=21 25 0 0,clip]{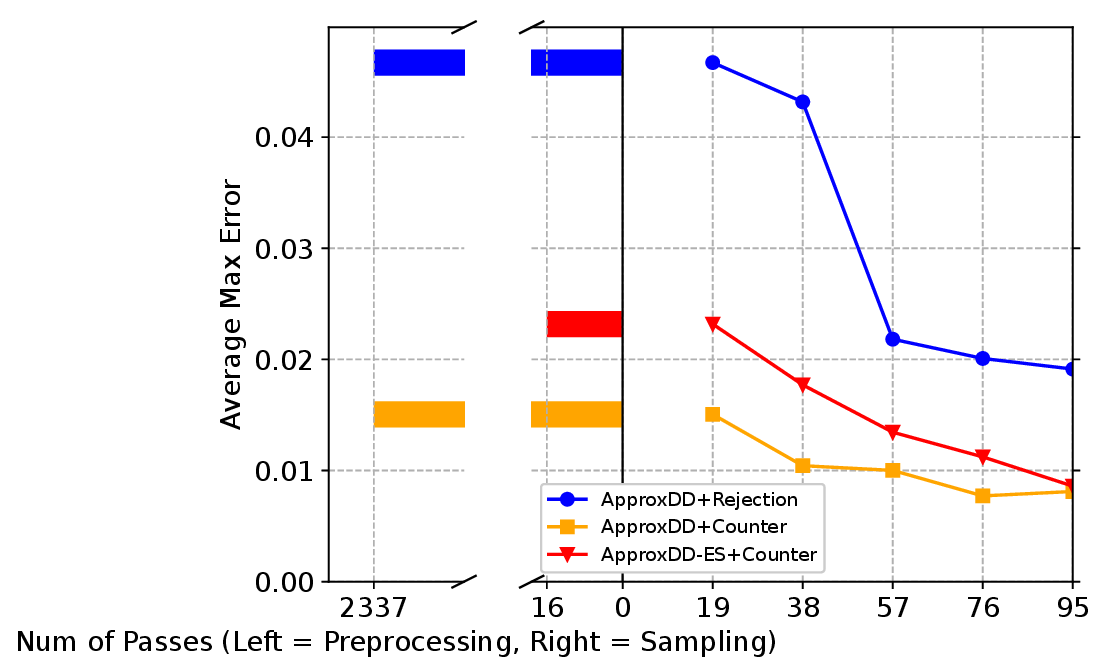}
    \caption{ER-3, $k=6$}
  \end{subfigure}
  \begin{subfigure}{0.32\textwidth}
    \includegraphics[width=\textwidth,trim=21 25 0 0,clip]{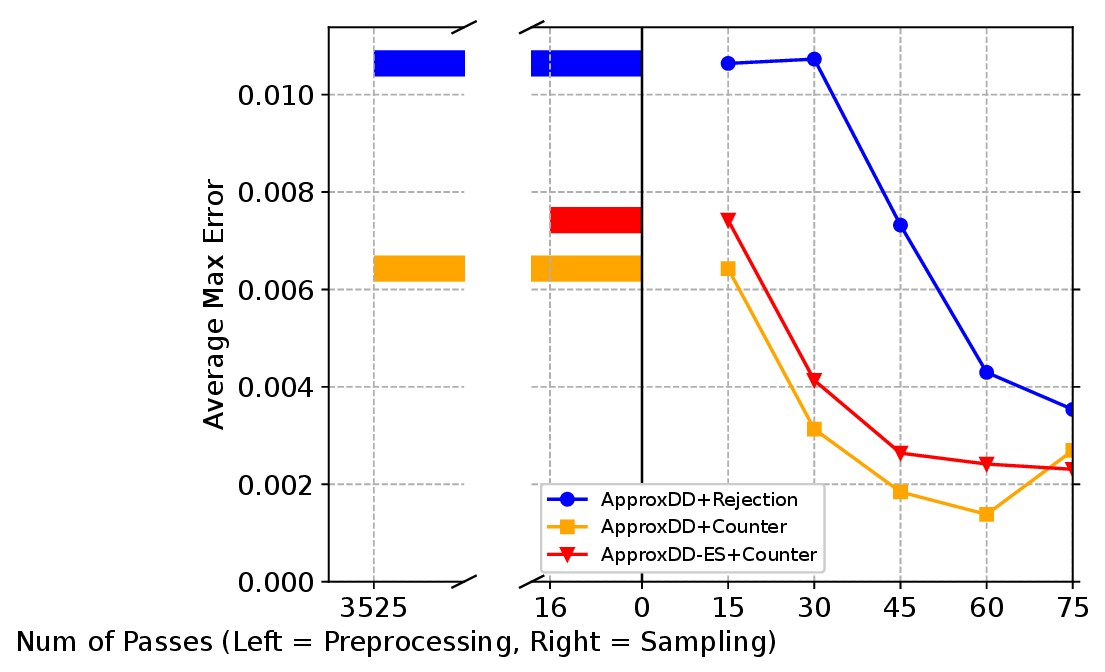}
    \caption{ER-5, $k=4$}
  \end{subfigure}
  \begin{subfigure}{0.32\textwidth}
    \includegraphics[width=\textwidth,trim=21 25 0 0,clip]{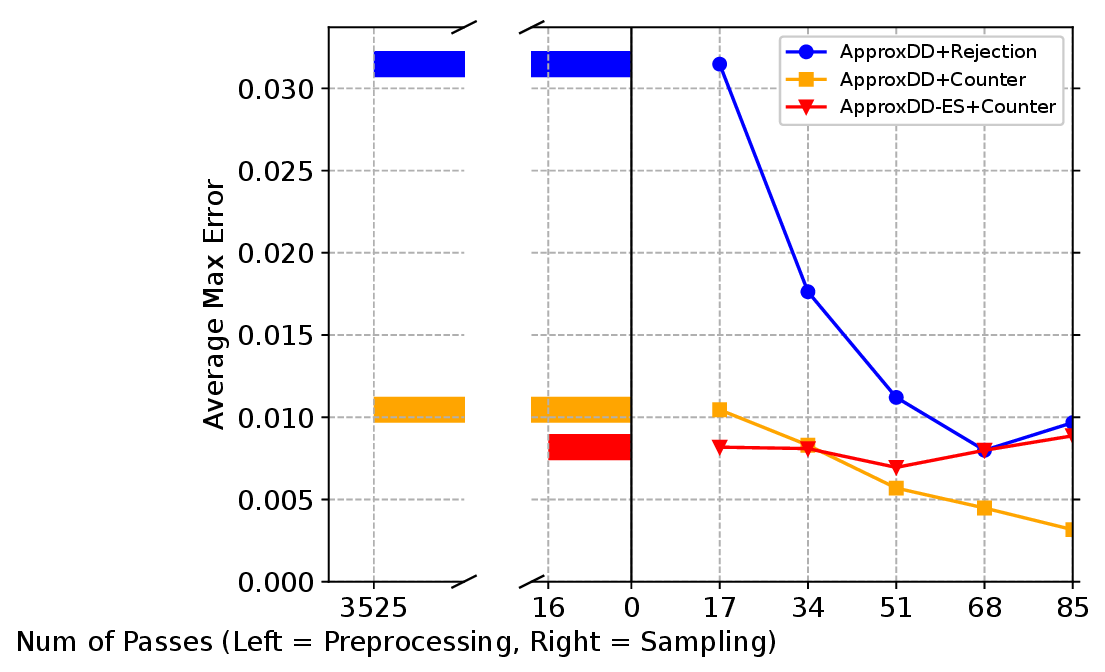}
    \caption{ER-5, $k=5$}
  \end{subfigure}
  \begin{subfigure}{0.32\textwidth}
    \includegraphics[width=\textwidth,trim=21 25 0 0,clip]{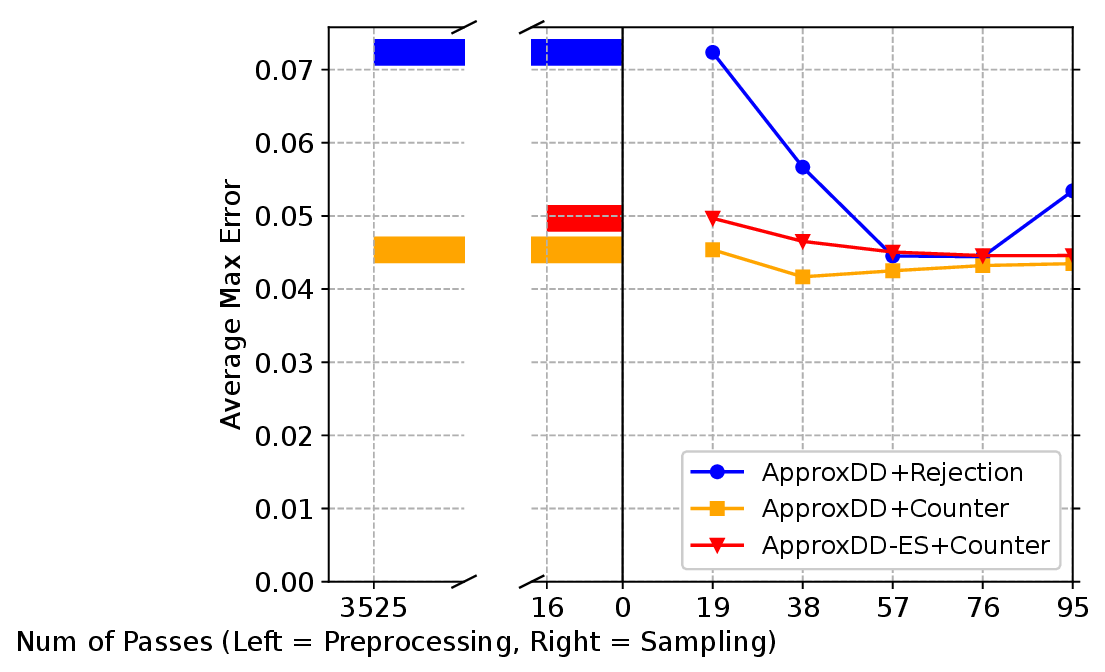}
    \caption{ER-5, $k=6$}
  \end{subfigure}
  \caption{(Continued) L$_{\infty}$ distance between the ground-truth and the estimated $k$-graphlet distribution as a function of the number of passes for sampling. The X-axis shows the number of passes (preprocessing on the left, sampling on the right). Missing points for \ApproxDD mean it did not terminate within 36 hours.}
\end{figure*} 

\end{document}